\numberwithin{equation}{section}
\newcommand{\jap}[1]{\langle #1 \rangle}
\def\a{\alpha}
\def\b{\beta}
\def\c{\gamma}
\def\d{\delta}
\def\e{\varepsilon}
\def\f{\varphi}
\def\g{\psi}
\def\i{\mbox{\raisebox{.5ex}{$\chi$}}}
\def\k{\kappa}
\def\l{\lambda}
\def\m{\mu}
\def\s{\sigma}
\def\t{\tau}
\def\x{\xi}
\def\y{\eta}
\def\z{\zeta}
\def\th{\theta}
\renewcommand{\L}{\Lambda}
\renewcommand{\O}{\Omega}
\newcommand{\Op}{\mathrm{Op}}
\def\th{\hat{t}}
\def\yh{\hat{y}}
\def\re{\mathbb{R}}
\def\pa{\partial}
\renewcommand{\Re}{\text{{\rm Re}\;}}
\renewcommand{\Im}{\text{{\rm Im}\;}}
\newcommand{\supp}{\text{{\rm supp}\;}}
\newtheorem{thm}{Theorem}[section]
\newtheorem{lem}[thm]{Lemma}
\newtheorem{prop}[thm]{Proposition}
\newtheorem{cor}[thm]{Corollary}
\theoremstyle{definition}
\newtheorem{defn}{Definition}
\newtheorem{ass}{Assumption}
\theoremstyle{remark}
\newtheorem{rem}[thm]{Remark}
\title[]%
{Limiting absorption principle and equivalence of Feynman propagators on asymptotically Minkowski spacetimes}
\author{Kouichi Taira}
\begin{document}
\begin{abstract}
In this paper, we shall show that the limiting absorption principle for the wave operator on the asymptotically Minkowski spacetime. This problem was previously considered by [A. Vasy, J. Spect. Theory,  10,439-461 , (2020)]. Here, we employ a more transparent tool, the Mourre theory and remove an additional condition which is imposed in his paper. Moreover, we also prove that the anti-Feynman propagator defined by G\'erard and Wrochna coincides with the outgoing resolvent.
\end{abstract}
\maketitle

\section{Introduction}

Let $g_0$ be a Minkowski metric on $\re^{1+n}$ and $g_0^{-1}$ be its dual metric:
\begin{align*}
g_0=-dx_1^2+dx_2^2+...+dx_{1+n}^2,\quad g_0^{-1}=-\pa_{x_1}^2+\pa_{x_2}^2+...+\pa_{x_{1+n}}^2=(g^{ij}_0)_{i,j=1}^n.
\end{align*}
We denote $\jap{x}=(1+|x|^2)^{\frac{1}{2}}$ and introduce the function space
\begin{align*}
S^k(\re^{n+1}):=\{a\in C^{\infty}(\re^{n+1})\mid |\pa_x^{\a}a(x)|\leq C\jap{x}^{k-|\a|}\},\quad \quad k\in \re.
\end{align*}

\begin{defn}
A Lorentzian metric $g$ on $\re^{n+1}$ is called asymptotically Minkowski if the inverse matrix $g^{-1}(x)=(g^{jk}(x))_{j,k=1}^n$ of $g(x)$ satisfies $g^{jk}- g_0^{jk}\in S^{-\m}(\re^{n+1})$ for some $\m>0$.

\end{defn}

In this paper, we consider the second order differential operator
\begin{align}\label{Pdef}
P=-\Box_g+V(x)=-\sum_{j,k=1}^{n+1}|g|^{-\frac{1}{2}}\pa_{x_j}(|g|^{\frac{1}{2}}g^{jk}(x)\pa_{x_k})+V(x),\quad V\in S^{-\m}(\re^{n+1})
\end{align}
where $g$ is an asymptotically Minkowski space and $|g(x)|=|\det g(x)|$. As is shown in \cite{NT} and \cite{V}, $P$ is essentially self-adjoint on $C_c^{\infty}(\re^{n+1})$ under Assumption \ref{nulltrapp} below (on the Hilbert space $L^2(\re^{n+1}; |g|^{\frac{1}{2}}dx)$). Here, we note that the symmetric operator $P$ on $L^2(\re^{n+1}; |g|^{\frac{1}{2}}dx)$ is unitaty equivalent to
\begin{align*}
P_1=|g|^{-\frac{1}{4}}P|g|^{\frac{1}{4}}
\end{align*}
on the usual $L^2$-space $L^2(\re^{n+1})$ and $P_1$ satisfies the assumption in \cite{NT}.
We denote their unique self-adjoint extensions by the same symbols $P$.

In Quantum Field Theory (QFT), it is important to seek special solutions for the Klein-Gordon equation $(P+m_0^2)u=f$ such as the advanced/retarded propagators and the Feynman/anti-Feynman propagators. The existence of the advanced/retarded propagators is classically known, however, the existence of the Feynman/anti-Feynman propagators had beed unknown for a long time.

In the exact Minkowski spacetime, a Feynman propagator is defined as the outgoing resolvent of the d'Alembert operator:
\begin{align*}
\frac{1}{\pa_{t}^2-\Delta_{y}+m_0^2-i0},\quad x=(t,y)\in \re\times \re^n,
\end{align*}
which is interpreted as a Fourier multiplier of the distribution $(-\x_1^2+\x_2^2+...+\x_{n+1}^2+m_0^2-i0)^{-1}$.
Natural questions are whether a Feynman propagator exists and whether such formula is true even on curved spacetimes. 
 In the recent advance of the microlocal analysis or a time evolution approach, the Feynman propagators are defined for various spacetimes (\cite{DeZ1}, \cite{DeZ2}, \cite{GW1}, \cite{GW2}, \cite{V}). However, a relationship between such Feynman propagators seems unknown although Feynman propagators coincide up to the smooth integral kernel by the earlier work of Duistermaat and H\"ormander \cite{DH}.

Vasy \cite{V} constructed a Feynman propagator on the asymptotically Minkowski spacetimes as the outgoing/incoming resolvents for $P+m_0^2$ by using the radial estimates, although its proof in detail is omitted and an additional assumption is imposed.
The purpose of this article is the following:
\begin{itemize}
\item We prove the limiting absorption principle for $P$ for non-zero energies via the Mourre theory which is more transparent in the operator theory. Moreover, we  remove the additional assumption in \cite{V} (see also \cite[Proposition 2.12]{DW}).
\item We show that the (anti-)Feynman propagator defined in \cite{GW1} and \cite{GW2} is written as the limit of the resolvent to the real line.
\end{itemize}
Moreover, we also show that the resolvents $(P\pm i)^{-1}$ preserve the Schwartz space $\mathcal{S}(\re^{n+1})$.

\subsection{Main results}

Set
\begin{align*}
\,\,p(x,\x)=\sum_{j,k=1}^{n+1}g^{jk}(x)\x_j\x_k,\quad p_0(\x)=\sum_{j,k=1}^{n+1}g^{jk}_0\x_j\x_k,\,\, P_0=\pa_t^2-\Delta_y.
\end{align*}
Let $(z(t, x, \x),\z(t, x, \x))$ denote the solution
to the Hamilton equation:
\begin{align*}
\frac{d}{dt} z(t,x,\x) =\frac{\pa p}{\pa \x}(z(t,x,\x),\z(t,x,\x)), \quad
\frac{d}{dt} \z(t) =-\frac{\pa p}{\pa x}(z(t,x,\x),\z(t,x,\x)), \quad
t\in\re
\end{align*}
with $z(0,x,\x)=x$, $\z(0,x,\x)=\x$.

\begin{ass}[Null non-trapping condition]\label{nulltrapp}
For each $(x,\x)\in p^{-1}(\{0\})\setminus \{\x=0\}$, we have $|z(t,x,\x)|\to \infty$ as $|t|\to \infty$.
\end{ass}

Our first main result is the following limiting absorption principle for $P$ away from the zero energy. 
\begin{thm}\label{lapwave}
Assume Assumption \ref{nulltrapp}. Let $s>1/2$ and $I\Subset \re\setminus \{0\}$ be an open interval. Then it follows that $\# I\cap \s_{pp}(P)$ is finite and that for $I'\Subset I\setminus \s_{pp}(P)$, we have
\begin{align*}
\sup_{z\in I'_{\pm}}\|\jap{x}^{-s}(P-z)^{-1}\jap{x}^{-s}\|_{B(L^2(\re^{n+1}))}<\infty,
\end{align*}
where
\begin{align*}
I'_{\pm}=\{z\in \mathbb{C}\mid \Re z\in I,\, \pm\Im z> 0\}.
\end{align*}
In particular, $P$ has absolutely continuous spectrum on $I'$. Moreover, $z\in I'_{\pm} \mapsto \jap{x}^{-s}(P-z)^{-1}\jap{x}^{-s}$ is H\"older continuous in $B(L^2(\re^{n+1}))$ and the limits
\begin{align*}
\jap{x}^{-s}(P-\l\mp i0)^{-1}\jap{x}^{-s}:=\lim_{\e\to 0,\,\e>0}\jap{x}^{-s}(P-\l\mp i\e)^{-1}\jap{x}^{-s}
\end{align*}
exist in $B(L^2(\re^{n+1}))$.

\end{thm}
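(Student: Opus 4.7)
\emph{Conjugate operator and commutator.} The plan is to apply the Mourre theory with the conjugate operator
\[
A = \tfrac{1}{2}(x \cdot D + D \cdot x), \quad D = -i\pa,
\]
the essentially self-adjoint generator of the $L^2$-dilation group $(U_s u)(x) = e^{(n+1)s/2} u(e^s x)$ on $\re^{n+1}$. Since $p(x,\x) = g^{jk}(x)\x_j\x_k$ is homogeneous of degree $2$ in $\x$, Euler's identity $\x \cdot \pa_\x p = 2p$ gives
\[
\{p, x \cdot \x\} = 2 p(x,\x) - \sum_{i, j, k} x_j \bigl(\pa_{x_j} g^{ik}(x)\bigr)\, \x_i \x_k,
\]
and the asymptotic Minkowski condition $g^{jk} - g_0^{jk} \in S^{-\m}$ (with $g_0^{jk}$ constant) places the last term in $\jap{x}^{-\m} S^2_\x$. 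At the operator level this becomes $i[P, A] = 2P - R$ modulo lower order, where $R$ is a second-order pseudodifferential operator with $\jap{x}^{-\m}$-decaying symbol in $x$. Iterating the same calculation one obtains $P \in C^\infty(A)$, hence in particular the commutator-regularity $C^{1,1}(A)$ required by Mourre theory.

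\emph{Mourre estimate and main obstacle.} For $I \Subset (0, \infty)$ the spectral theorem gives $P E_I(P) \geq (\inf I)\, E_I(P)$, so
\[
E_I(P)\, i[P, A]\, E_I(P) \;\geq\; 2(\inf I)\, E_I(P) \;-\; E_I(P)\, R\, E_I(P),
\]
and the analogous estimate with $-A$ in place of $A$ handles $I \Subset (-\infty, 0)$. The Mourre estimate on $I$ is therefore reduced to showing that $E_I(P)\, R\, E_I(P)$ is compact, and this is the main technical obstacle: for an elliptic operator compactness would follow from Rellich--Kondrachov, but $P$ is Lorentzian, its characteristic set $\{p = \l\}$ is non-compact in $\x$ for every $\l \in I$, and consequently $E_I(P)$ produces no Sobolev regularization. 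I would overcome this by microlocal analysis: pick $\chi \in C_c^\infty(\re \setminus \{0\})$ with $\chi \equiv 1$ on $I$, represent $\chi(P)$ as a pseudodifferential operator by the Helffer--Sj\"ostrand formula, and show that $\chi(P)\, R\, \chi(P)$ has symbol vanishing jointly at infinity in $(x,\x)$ on the relevant slice of $T^*\re^{n+1}$, hence is compact on $L^2$.

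\emph{Conclusion.} Granted the Mourre estimate and the regularity $P \in C^{1,1}(A)$, the abstract Mourre theorem (in the Amrein--Boutet de Monvel--Georgescu formulation) delivers: finiteness of $I \cap \s_{pp}(P)$, absolute continuity of $P$ on $I' \setminus \s_{pp}(P)$, the limiting absorption principle $\sup_{z \in I'_\pm} \|\jap{A}^{-s}(P - z)^{-1}\jap{A}^{-s}\|_{B(L^2)} < \infty$ for $s > 1/2$, and H\"older continuity of the boundary values. The final step is to convert the $\jap{A}^{-s}$-weighted bounds into the stated $\jap{x}^{-s}$-weighted ones; since $A = \tfrac{1}{2}(x\cdot D + D \cdot x)$ is built linearly from $x$ and $D$ and $[A, \jap{x}^{-s}]$ equals $\jap{x}^{-s}$ times a bounded multiplier, this can be done by standard commutator and interpolation arguments relating the two weight scales.
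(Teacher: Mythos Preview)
Your strategy is the right one (Mourre theory), but the choice of conjugate operator creates a genuine gap that you underestimate, and the compactness step is where almost all the work lies --- you have not engaged with it.

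\textbf{The conjugate operator.} With $A=\tfrac12(x\cdot D+D\cdot x)$ the symbol of $[P,iA]$ lies in $S^{2,0}$: the remainder $R$ you isolate is in $\Op S^{2,-\m}$, i.e.\ second order in $\x$. For elliptic $P$ this is harmless because $D(P)=H^2$, but here $P$ is Lorentzian and $D(P)=\{u\in L^2:Pu\in L^2\}$ carries essentially no Sobolev regularity in $\x$ (the characteristic set $\{p\in I\}$ is unbounded in $\x$). Consequently $R$ is \emph{not} $P$-bounded and $(P-i)^{-1}R(P-i)^{-1}$ is \emph{not} a bounded operator, so already $P\in C^1(A)$ in the Mourre/ABG sense fails by your argument. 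The paper sidesteps this by taking
\[
a(x,\x)=\frac{x\cdot\tilde\x}{1+|\x|^2}\in S^{-1,1},\qquad A=\Op(a),
\]
so that $[P,iA]\in\Op S^{0,0}$ is genuinely bounded on $L^2$ and $P\in C^2(A)$ is immediate.

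\textbf{The compactness step.} Even after fixing $A$, the claim that $\chi(P)R\chi(P)$ is compact is the crux of the whole theorem, not a bookkeeping step. Your plan ``represent $\chi(P)$ as a pseudodifferential operator via Helffer--Sj\"ostrand'' does not work: $\chi(P)$ is \emph{not} in $\Op S^{0,0}$, even for the flat d'Alembertian (its symbol $\chi(p_0(\x))$ is supported on an $\x$-unbounded set). The paper proves only that $\g(D)\chi(P)\in\Op S^{-\infty,0}+\text{compact}$ for $\g\in C_c^\infty$, which handles one of the two error terms. The other --- $\chi(P)R\chi(P)$ with $R\in\Op S^{0,-\m}$ --- is reduced to the \emph{local compactness} statement $D(P)\hookrightarrow H^{1/2,-(1+\d)/2}$ (Proposition~\ref{waveloc}). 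That inclusion is a Kato-smoothing-type estimate whose proof requires constructing a global escape function in $S^{0,0}$ adapted to the null non-trapping assumption (Lemma~\ref{escapenullnon}); this is the only place Assumption~\ref{nulltrapp} enters, and it is the main technical content of the section. Your sketch contains no mechanism that would produce any $\x$-regularity from $\chi(P)$ and no use of the non-trapping hypothesis, so as written the compactness claim is unsupported.
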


\begin{rem}
The above theorem for the interval $I$ is essentially proved by Vasy \cite{V} and \cite[Proposition 2.12]{DW} under the non-trapping assumption on energy level $I$. We remove this additional assumption by proving the local compactness of $P$ (Proposition \ref{waveloccor}).
\end{rem}

To state our second result, we introduce the additional assumption which is imposed in \cite{GW1} and \cite{GW2}. We denote 
\begin{align*}
x=(t,y)\in \re\times \re^{n}.
\end{align*}

\begin{ass}\label{timefct}
There exists a time function $\tilde{t}$ such that $\tilde{t}-t\in C_c^{\infty}(\re^{n+1})$.
\end{ass}

\begin{rem}
As is proved in \cite[Proposition 4.3 $(ii)$]{GW1}, the space $C_c^{\infty}(\re^{n+1})$ in Assumption \ref{timefct} can be replaced by $S^{-\e}(\re^{n+1})$ with $\e>0$. Moreover, the authors in \cite{GW1} conjecture that the existence of such function follows from Assumption \ref{nulltrapp} and the asymptotically Minkowski structure of $g$.
\end{rem}

We also recall some notations used in \cite{GW1} and \cite{GW2}, which are also reviewed in subsection \ref{GWsusec}. We note that the convention of the Feynman propagator in \cite{GW1} and \cite{GW2} is opposite to the convention in physics. In this paper, we use the convention in physics. It is shown in \cite{GW1} that under Assumptions \ref{nulltrapp} and \ref{timefct} with $\m>1$, there exists a diffeomorphism $\chi:\re^{n+1}\to \re^{n+1}$ such that
\begin{align*}
(\chi^*g)_{(\hat{t},\hat{y})}=-\hat{c}(\hat{t},\hat{y})^2d\hat{t}^2+\sum_{i,j=1}^n\hat{h}_{ij}(\hat{t},\hat{y})d\hat{y}^id\hat{y}^j,\quad (\hat{t},\hat{y})\in \re^{n+1}.
\end{align*}
For $m\in \re$, let $\mathcal{Y}^m$ and $\mathcal{X}^m_{\overline{F}}$ be the function spaces as in \cite{GW1} (see also subsection \ref{GWsusec}). Then \cite[Theorem 1.1]{GW2} and its proof state that for $m_0>0$,
\begin{align*}
P+m_0^2:\mathcal{X}^m_{\overline{F}}\to \mathcal{Y}^m
\end{align*}
is boundedly invertible. We denote its inverse by $(P+m_0^2)_{\overline{F}}^{-1}$. Note that Lemma \ref{deffeoprop} and the definition of $\mathcal{Y}^m$ give a natural inclusion $\mathcal{S}(\re^{n+1})\subset \mathcal{Y}^m$. Thus the inverse $(P+m_0^2)_{\overline{F}}^{-1}$ is defined on the Schwartz space $\mathcal{S}(\re^{n+1})$.

The next theorem shows that the (anti-)Feynman propagator constructed in \cite{GW2} coincides with the outgoing resolvent of $P$.

\begin{thm}\label{Feyeq}
Assume Assumptions \ref{nulltrapp} and \ref{timefct} with $\m>1$.

\noindent$(i)$ For $m_0>0$, $-m_0^2\notin \s_{pp}(P)$, where $\s_{pp}(P)$ is the pure point spectrum of the self-adjoint operator $P$. In particular, the outgoing/incoming resolvent $(P+m_0^2\mp i0)^{-1}\in B(L^{2,s}(\re^{n+1}), L^{2,-s}(\re^{n+1}))$ exist for each $s>\frac{1}{2}$ by virtue of Theorem \ref{lapwave}.

\noindent$(ii)$ We have the identity
\begin{align*}
(P+m_0^2)_{\overline{F}}^{-1}=(P+m_0^2-i0)^{-1}\quad \text{on}\quad \mathcal{S}(\re^{n+1}).
\end{align*}

\end{thm}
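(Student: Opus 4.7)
The plan for part (i) is to rule out the eigenvalue $-m_0^2$ by exploiting the bounded invertibility of $P+m_0^2$ on the G\'erard-Wrochna space $\mathcal{X}^m_{\overline{F}}$. Suppose $u\in L^2(\re^{n+1})$ satisfies $(P+m_0^2)u=0$; if one can show $u\in \mathcal{X}^m_{\overline{F}}$, then injectivity of $P+m_0^2:\mathcal{X}^m_{\overline{F}}\to \mathcal{Y}^m$ from \cite[Theorem 1.1]{GW2} forces $u=0$. Establishing $u\in \mathcal{X}^m_{\overline{F}}$ is a microlocal regularity statement: in the non-characteristic region $P+m_0^2$ is elliptic so $u$ is smooth there, while at the radial sets at spacetime infinity the standard radial/propagation estimates for the Klein-Gordon operator upgrade the $L^2$ hypothesis to the decay and wavefront-set conditions built into $\mathcal{X}^m_{\overline{F}}$. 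Once $-m_0^2\notin \s_{pp}(P)$ is established, Theorem \ref{lapwave} directly produces the boundary values $(P+m_0^2\mp i0)^{-1}$.

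For part (ii), fix $f\in \mathcal{S}(\re^{n+1})$ and set $u_F\coloneqq (P+m_0^2)_{\overline{F}}^{-1}f$ and $u_-\coloneqq (P+m_0^2-i0)^{-1}f$. Both solve $(P+m_0^2)v=f$, so $w\coloneqq u_F-u_-$ satisfies the homogeneous Klein-Gordon equation. The aim is to show $w\in \mathcal{X}^m_{\overline{F}}$, after which \cite[Theorem 1.1]{GW2} gives $w=0$ and hence $u_F=u_-$. Since $u_F\in \mathcal{X}^m_{\overline{F}}$ by definition, the task reduces to proving $u_-\in \mathcal{X}^m_{\overline{F}}$. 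I would write $u_-$ as the $L^{2,-s}$-limit of $(P+m_0^2-i\e)^{-1}f$ as $\e\dto 0$ and apply uniform-in-$\e$ radial/propagation estimates: the $-i0$ prescription enforces an outgoing-type wavefront-set condition at the radial sinks/sources at spacetime infinity which, translated through the coordinates furnished by Assumption \ref{timefct} and the diffeomorphism $\chi$ from \cite{GW1}, matches the Feynman condition defining $\mathcal{X}^m_{\overline{F}}$.

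The main obstacle is the microlocal bridge between the spectral/resolvent description of $(P+m_0^2-i0)^{-1}$ arising from the Mourre-theoretic limiting absorption principle and the phase-space definition of $\mathcal{X}^m_{\overline{F}}$ in \cite{GW1, GW2}. Concretely, one must carry out uniform-in-$\e$ radial point estimates as $\e\dto 0$ that select exactly the outgoing component of the radial set at infinity, and keep scrupulous track of sign conventions, since the paper explicitly warns that the Feynman/anti-Feynman convention of \cite{GW1,GW2} is opposite to the physics convention adopted here. A closely related difficulty in part (i) is showing that an $L^2$-eigenfunction, which a priori only satisfies a global integrability condition, actually meets the stricter decay and regularity requirements characterizing $\mathcal{X}^m_{\overline{F}}$.
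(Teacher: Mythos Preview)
For part~(i) your plan coincides with the paper's: an $L^2$-eigenfunction is upgraded to $\mathcal{S}(\re^{n+1})$ via the radial source estimate and propagation of singularities (packaged as Corollary~\ref{microsmooth} and Proposition~\ref{abres}), hence lands in $\mathcal{X}^m_F$ (equivalently $\mathcal{X}^m_{\overline{F}}$, since $\mathcal{S}$ sits in both), and injectivity from \cite{GW2} finishes. The only cosmetic difference is that the paper aims for $\mathcal{S}$ rather than $\mathcal{X}^m_{\overline{F}}$ directly.

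For part~(ii) you run the argument in the \emph{opposite direction} from the paper. You want to show that $u_-=(P+m_0^2-i0)^{-1}f$ lies in $\mathcal{X}^m_{\overline{F}}$ and then invoke uniqueness there. The paper instead starts from $u_F=(P+m_0^2)_{\overline{F}}^{-1}f\in\bigcap_m\mathcal{X}^m_{\overline{F}}$ and checks that $u_F$ satisfies the microlocal radiation condition that characterizes the outgoing resolvent (Proposition~\ref{radcondprop}). The paper's tools are: the asymptotic expansion $u_F(t)=e^{\mp it\sqrt{-\Delta+m_0^2}}g_{\pm,\mp}+O_{L^2}(\jap{t}^{1-2\c})$ for $\pm t\gg 1$ (Proposition~\ref{waveexprop}, essentially free from the G\'erard--Wrochna machinery since $u_F\in\mathcal{X}^m_{\overline{F}}$ forces $g_{+,+}=g_{-,-}=0$), together with the elementary observation that on the incoming region $\O^0_{\e,R,m_0,in}$ one has $|t|\gtrsim|x|$ and the symbols $\t\mp\sqrt{|\y|^2+m_0^2}$ are elliptic there (Lemma~\ref{incomelliptic}, Proposition~\ref{incomneg}). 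These two facts immediately give $\Op(a)u_F\in L^{2,-\frac{1}{2}+\d}$ for $a$ supported in the incoming region, which is exactly the radiation condition.

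Your route is viable in principle but extracts less from the available toolkit. To place $u_-$ in $\mathcal{X}^m_{\overline{F}}$ you would need (a) the time-sliced regularity $u_-\in C(\re;H^{m+1})\cap C^1(\re;H^m)$, and more seriously (b) the vanishing of the scattering datum $\rho_F u_-$, i.e.\ an asymptotic expansion of the \emph{resolvent output} $u_-$ as $t\to\pm\infty$. Neither comes directly out of the Mourre/radial-estimate package, whereas the analogous expansion for $u_F$ is already supplied by \cite{GW1,GW2}. The paper's direction is therefore the more economical one: it translates the $\mathcal{X}^m_{\overline{F}}$ condition (time-domain scattering data) into a microlocal statement, rather than translating the microlocal radiation condition back into time-domain scattering data.
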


\begin{rem}
By the density argument, the space $\mathcal{S}(\re^{n+1})$ in $(ii)$ can be improved by $L^{2,\frac{1}{2}+\e}(\re^{n+1})$ with $\e>0$.
\end{rem}

Finally, we shall state a result on a good mapping property of the resolvent $(P-i)^{-1}$, which is less motivative in QFT, however, is hopefully useful for the spectral analysis of $P$. We note that the operator $(P-i)^{-1}$ maps from $\mathcal{S}(\re^{n+1})$ to $L^2(\re^{n+1})$. Next theorem claims that the range of this operator is contained in $\mathcal{S}(\re^{n+1})$.

\begin{thm}\label{Acompro}
Assume Assumption \ref{nulltrapp} and let $z\in \mathbb{C}\setminus \re$. Then the resolvent $(P-z)^{-1}$ is a linear continuous operator on $\mathcal{S}(\re^{n+1})$. In particular, $(P-z)^{-1}$ can be extended to a linear continuous operator on $\mathcal{S}'(\re^{n+1})$.
\end{thm}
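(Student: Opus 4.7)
The plan is to reduce continuity on $\mathcal{S}'$ to continuity on $\mathcal{S}$ by duality (the formal transpose of $(P-z)^{-1}$ is $(P-\bar z)^{-1}$, since $P$ is formally self-adjoint on $\mathcal{S}$), and to then prove $\mathcal{S}$-continuity via simultaneous control of spatial decay and Sobolev regularity. Using the identification $\mathcal{S}(\re^{n+1})=\bigcap_{k,m\ge 0}\jap{x}^{-k}H^m(\re^{n+1})$, the task reduces to showing that for each pair $(k,m)\in\na^2$ there exist $(k',m')$ with $\|\jap{x}^{k}\jap{D}^{m}(P-z)^{-1}f\|_{L^2}\lesssim\|\jap{x}^{k'}\jap{D}^{m'}f\|_{L^2}$.

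For the decay side, I would conjugate by weights: $\jap{x}^{k}(P-z)\jap{x}^{-k}=P-z+R_k$, where $R_k:=[\jap{x}^k,P]\jap{x}^{-k}$ is a first-order differential operator with coefficients in $S^{-1}(\re^{n+1})$. By the local compactness of $(P-z)^{-1}$ (Proposition~\ref{waveloccor}), $R_k(P-z)^{-1}$ is compact on $L^2$, making $P-z+R_k$ a Fredholm operator of index $0$; injectivity is immediate since $\Im z\ne 0$, so $P-z+R_k$ is invertible on $L^2$ and $\jap{x}^{k}(P-z)^{-1}\jap{x}^{-k}\in B(L^2)$ for every $k\ge 0$.

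The regularity step is the delicate point, because $P-z$ is not elliptic (its principal symbol $p$ vanishes on the light cone), so one cannot bootstrap $H^m$-smoothness directly from $(P-z)u=f\in\mathcal{S}$. I would use propagation of singularities under Assumption~\ref{nulltrapp}: the wave front set of $u:=(P-z)^{-1}f$ is contained in $\{p=0\}$ and invariant under the Hamilton flow of $p$, so null non-trapping transports every point of $WF(u)$ to spatial infinity, where the decay of $u$ established in the previous step rules out the scattering wave front set. Combining this quantitative propagation estimate with iterated commutation $P^ju=(P-z)^{-1}P^jf+(\text{polynomial in }z)u$ then upgrades the qualitative smoothness of $u$ to the required weighted-Sobolev norm bounds.

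The main obstacle is converting the propagation-of-singularities argument into uniform norm inequalities rather than just qualitative smoothness. A cleaner alternative I would pursue in practice is the observation that $P-z$ is elliptic in Melrose's scattering pseudodifferential calculus at spatial infinity, since $p_0(\xi)-z$ never vanishes for $z\notin\re$: a scattering parametrix $B$ satisfies $(P-z)B=I+K$ with $K$ residual (Schwartz kernel), and Fredholm theory on $\mathcal{S}$ combined with the $L^2$-invertibility of $P-z$ yields $(P-z)^{-1}=B(I+K)^{-1}$ as a composition of $\mathcal{S}$-continuous operators, from which the theorem follows.
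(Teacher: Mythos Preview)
Your decay step has a genuine gap. The commutator $R_k=[\jap{x}^k,P]\jap{x}^{-k}$ lies in $\Op S^{1,-1}$, i.e.\ it is a \emph{first-order} operator (with decaying coefficients), not a zeroth-order one. Corollary~\ref{waveloccor} only asserts compactness of $A(P-z)^{-1}$ for $A\in\Op S^{0,-\e}$, and this restriction is essential: by Proposition~\ref{waveloc} the resolvent gains only half a derivative, $D(P)\hookrightarrow H^{1/2,-1/2-\d}$, so $R_k(P-z)^{-1}$ maps $L^2$ into $H^{-1/2,1/2-\d}$, which is not contained in $L^2$. Thus $R_k(P-z)^{-1}$ is not even bounded on $L^2$, let alone compact, and the Fredholm argument for $P-z+R_k$ collapses. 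The half-derivative gain is essentially sharp for a real-principal-type operator, so this is not a technicality you can patch by citing a better smoothing estimate.

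Your alternative via scattering ellipticity is also incorrect. It is true that $p_0(\x)-z\neq 0$ for $z\notin\re$, but ellipticity in the scattering calculus requires $|p(x,\x)-z|\gtrsim\jap{\x}^2$ at fiber infinity as well, and this fails on the light cone $\{p_0(\x)=0\}$, where $|p_0(\x)-z|=|z|$ stays bounded while $|\x|\to\infty$. So $P-z$ is \emph{not} sc-elliptic, and no parametrix modulo residual exists; the characteristic set at fiber infinity is precisely what forces the propagation/radial machinery.

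The paper proceeds in the opposite order from what you attempt: first establish full \emph{regularity} with only threshold decay, then upgrade the \emph{decay}. Starting from $u=(P-z)^{-1}f\in L^2$, the radial source estimate, propagation of singularities under Assumption~\ref{nulltrapp}, and the radial sink estimate (packaged as Corollary~\ref{microsmooth}) give $u\in\bigcap_{k\in\re,\d>0}H^{k,-1/2-\d}$. The decay is then obtained not by conjugation but by the subelliptic Lemma~\ref{strange}: if $u\in H^{k+1/2,l-1/2}$ and $(P-z)u\in H^{k,l}$ with $\Im z\neq 0$, then $u\in H^{k,l}$. Iterating this (trading the unlimited regularity for half a weight at each step) yields $u\in\mathcal{S}$. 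This lemma is where $\Im z\neq 0$ is used decisively, and it replaces the Fredholm/compactness mechanism you were reaching for.
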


\subsection{Outline of the proof}

In this subsection, we give an outline of the proof for our main theorem.

\noindent\textbf{Limiting absorption principle} The proof of Theorem \ref{lapwave} is based on the Mourre theory \cite{Mo}. We can easily construct a self-adjoint operator $A$ satisfying the Mourre estimate:
\begin{align*}
\f(P) [P, iA]\f(P)\geq C\f(P)^2+K,
\end{align*}
where $K$ is a remainder term of the form $K=\f(P)R\f(P)+\Op S^{-\infty,-\infty}$ with $R\in \Op S^{0,-\m}$. So our task is to prove that the remainder term $K$ is actually a compact operator. For an elliptic operator (like the Laplacian $-\Delta$), it is easily done by the standard elliptic estimate. In our case, the lack of the ellipticity of our operator $P$ makes this problem more difficult. To prove the compactness of $K$, we shall deduce the local compactness property of $P$, that is, the compactness of $\chi(x)(P-i)^{-1}$ for $\chi\in C_c^{\infty}(\re^{n+1})$.
In fact, it is proved in Proposition \ref{waveloccor} that $(P-i)^{-1}$ maps from $L^2(\re^{n+1})$ to $H^{\frac{1}{2},-\frac{1}{2}-\e}(\re^{n+1})$ with $\e>0$. For explaining the idea of the proof of Proposition \ref{waveloccor}, we consider the Schr\"odinger propagator $\{e^{-isP}\}_{s\in \re}$ for the Hamiltonian $P$. Under a more restrictive condition: globally non-trapping condition, Chihara \cite[Theorem 1.2]{Ch} proves the local smoothing estimate for $e^{-isP}$: For $T,\e>0$, we have
\begin{align}\label{tlocsmo}
\|e^{-isP}u\|_{L^2([-T,T],H^{\frac{1}{2},-\frac{1}{2}-\e}(\re^{n+1}))}\leq C \|u\|_{L^2(\re^{n+1})}.
\end{align}
Using this estimate, we shall prove that $(P-i)^{-1}$ maps $L^2(\re^{n+1})$ to $H^{\frac{1}{2},-\frac{1}{2}-\e}(\re^{n+1})$. For simplicity, we assume $|g|=1$. For $f\in L^2(\re^{n+1})$, set $u=(P-i)^{-1}\in L^2(\re^{n+1})$, $f(s)=e^sf$ and $u(s)=e^su$. Then $u(s)$ satisfies the time-dependent Schr\"odinger equation $(i\pa_s-P)u(s)=-f(s)$ with $u(0)=u$. Hence the Duhamel formula and $(\ref{tlocsmo})$ imply
\begin{align*}
\|u(s)\|_{L^2([-T,T],H^{\frac{1}{2},-\frac{1}{2}-\e})}\leq& \|e^{-isP}u\|_{L^2([-T,T],H^{\frac{1}{2},-\frac{1}{2}-\e})}+\|\int_0^te^{-i(t-s)P}f(s)\|_{L^2([-T,T],H^{\frac{1}{2},-\frac{1}{2}-\e})}\\
\leq& C'\|u\|_{L^2}+C'\|f\|_{L^2}.
\end{align*}
Since $\|u(s)\|_{L^2([-T,T],H^{\frac{1}{2},-\frac{1}{2}-\e})}=C''\|u\|_{H^{\frac{1}{2},-\frac{1}{2}-\e}}$, we have $u\in H^{\frac{1}{2},-\frac{1}{2}-\e}(\re^{n+1})$. Actually, we only impose the null(not global)-trapping condition and hence cannot use the above argument. Instead, in Lemma \ref{escapenullnon}, we prove a microlocal alternative of \cite[Lemma 7.2]{Ch}, which is a key estimate for the proof of $(\ref{tlocsmo})$ in \cite{Ch}.

\noindent \textbf{Equivalence of the Feynman propagators} The proof of Theorem \ref{Feyeq} $(i)$ immediately follows from the invertibility of $P_0+m_0^2$ on $\mathcal{X}_F^m$ and the radial estimates. Hence we only focus on the part $(ii)$.
 To prove Theorem \ref{Feyeq} $(ii)$, it suffices to show that for $u\in \cap_{m\in \re}\mathcal{X}_{\overline{F}}^m$ with $f:=(P+m_0^2)u\in \mathcal{S}(\re^{n+1})$, we have $u=(P+m_0^2-i0)^{-1}f$. Essentially due to the work by G\'erard and Wrochna, $u$ has an asymptotic expansion of the form
\begin{align}\label{introasym}
u(t)=\begin{cases}
e^{-it\sqrt{-\Delta+m_0^2}}g_{+,-}+\text{l.o.t.} \quad \text{for}\quad  t>>1,\\
e^{it\sqrt{-\Delta+m_0^2}}g_{-,+}+ \text{l.o.t.}\quad \text{for}\quad -t>>1.
\end{cases}
\end{align}
for some $\e>0$ and $g_{+,-}, g_{-,+}\in H^m(\re^n)$. On the other hand, in order to prove $u=(P+m_0^2-i0)^{-1}f$, we only need to check a microlocal radiation condition which is analogue to \cite[Proposition 14]{Me}. Such microlocal radiation condition is written as follows:
\begin{align}\label{intromiccond}
u \in L^{2,-\frac{1}{2}+\e}(\re^{n+1})\,\, \text{microlocally away from the outgoing region defined in (\ref{outindef})},
\end{align}
which is in turn equivalent to the condition $u \in \mathcal{S}(\re^{n+1})$ microlocally in the incoming region by virtue of the propagation of singularities. The second key ingredient is the observation that at each microlocally incoming point $(x,\x)$, the time variable is sufficiently large, which follows from a formal calculation $p_0(\x)+m_0^2=0$, $|x|>>1$, $\frac{x\cdot \pa_{\x}p_0(\x)}{|x||\pa_{\x}p_0(\x)|}=-1 \Rightarrow |t|>>1$. From this observation, it turns out that the scattering theory for $|x|>>1$ can be replaced by the scattering theory for $|t|>>1$. Thus, we only need to prove $(\ref{intromiccond})$ for $|t|>>1$, which is easily checked by the asymptotic expansion $(\ref{introasym})$ and a standard elliptic parametrix construction. For a detail, see subsection \ref{subsecrad}.

Finally, we breifly explain why the condition $(\ref{intromiccond})$ is called the radiation condition. For the standard Laplacian $P=-\Delta=\Op(p_0)$ and the energy $\l>0$, Sommerfeld's radiation condition can be written by $\l^{-1}D_ru-u\in L^{2,-\frac{1}{2}+\e}$. The Euler formula gives $\l^{-1}D_r=\frac{x}{\l r}\cdot D_x$ and its principal symbol coincides with $\frac{x\cdot \pa_{\x}p_0(\x)}{|x||\pa_{\x}p_0(\x)|}$  at the characteristic surface $p_0(\x)=\l^2$. Hence, the radiation condition $\l^{-1}D_ru-u\in L^{2,-\frac{1}{2}+\e}$ means that $u$ is microlocally negligible away from the ougoing region $\{\frac{x\cdot \pa_{\x}p_0(\x)}{|x||\pa_{\x}p_0(\x)|}=1, p_0(\x)=\l^2, |x|>>1\}$, where we recall that $\Op(a)u=0$ implies $u$ is microlocally negligible in $\{a\neq 0\}$ by a standard elliptic estimate.

\subsection{Future works} 

There are some natural questions on the spectral properties for $P$. Here, we state some of them.

\noindent \textbf{Absense of embedded eigenvalues} It is natural to ask whether $P$ has an embedded $L^2$ eigenvalue. In this paper, we show this under Assumption \ref{timefct} and $\m>1$ for the negative spectrum, however, the general cases seem open.
For the case of Lorentzian scattering spaces, it is stated in \cite{V} that the operator $P$ has no embedded eigenvalue for the short-range case although its proof in detail is omitted and the short-range condition is assumed. By virtue of Corollary \ref{microsmooth} and Proposition \ref{abres} (or the argument in \cite{V}), for $\l\in\re \setminus \{0\}$, we have
\begin{align*}
(P-\l)u=0\,\, \text{and}\,\, u\in L^2(\re^{n+1})\Rightarrow u\in \mathcal{S}(\re^{n+1}).
\end{align*}
Thus, for proving absence of $L^2$ eigenvalues (with non-zero energy), it suffices to prove the absence of rapidly decreasing eigenfunction. Its proof leaves a future work.

\noindent \textbf{Limiting absorption principle at the zero energy} The problem on the existence of the outgoing resolvent at the zero energy (it corresponds to the massless case) is still open. Gell-Redman-Haber-Vasy \cite{GHV} deal with a related problem and constructed the massless Feynman propagator as an inverse between anisotropic b-Sobolev spaces when the perturbation $g-g_0$ is small enough (\cite[Theorem 3.6]{GHV}). For the exact Minkowski space with $n\geq 2$, it is proved in \cite[Proposition 2.8]{TT} that the outgoing/incoming resolvent $\jap{x}^{-1-0}(P_0\mp i0)^{-1}\jap{x}^{-1-0}\in B(L^2(\re^{n+1}))$ exist and $(P_0\mp i0)^{-1}$ map from $\jap{x}^{-1}L^2(\re^{n+1})$ to $\jap{x}L^2(\re^{n+1})$, where $P_0=\pa_t^2-\Delta_y$. The technique used there is based on the dispersive estimate and the Strichartz estimate for $e^{-itP_0}$, so their argument can not apply with our case directly.
 For a connection with the discrete Schr\"odinger operator, see \cite{TT}.

\subsection{Notation} We fix some notations. We denote $\jap{x}=(1+|x|^2)^{1/2}$ for $x\in \re^{n+1}$. The function space $\mathcal{S}(\re^{n+1})$ denotes the set of all rapidly decreasing functions on $\re^{n+1}$ and $\mathcal{S}'(\re^{n+1})$ denotes the set of all tempered distributions on $\re^{n+1}$. We use the weighted Sobolev space: $L^{2,l}(\re^{n+1})=\jap{x}^{-l}L^2(\re^{n+1})$, $H^k(\re^{n+1})=\jap{D}^{-k}L^2(\re^{n+1})$ and $H^{k,l}(\re^{n+1})=\jap{x}^{-l}\jap{D}^{-k}L^2(\re^{n+1})$ for $k,l\in \re$. For Banach spaces $X,Y$, $B(X,Y)$ denotes the set of all linear bounded operators form $X$ to $Y$. For a Banach space $X$, we denote the norm of $X$ by $\|\cdot\|_{X}$. If $X$ is a Hilbert space, we write the inner metric of $X$ by $(\cdot, \cdot)_{X}$, where $(\cdot, \cdot)_{X}$ is linear with respect to the right variable. We also denote $\|\cdot\|=\|\cdot\|_{L^2(\re^{n+1})}$ and $(\cdot,\cdot)=(\cdot, \cdot)_{L^2(\re^{n+1})}$.

\noindent\textbf{Acknowledgment.}  
This work is supported by JSPS Research Fellowship for Young Scientists, KAKENHI Grant Number 20J00221. The author would like to thank Christian G\'erard for suggesting the problem on the equivalence of the Feynman propagator and to Shu Nakamura for valuable discussions. Michal Wrochna explained the convention of the Feynman propagator in \cite{GW1}, \cite{GW2}, and in physics.

\section{Preliminary}

We define the Weyl quantization of a symbol $a$ by
\begin{align*}
\Op(a)u(x)=\frac{1}{(2\pi)^{n+1}}\int_{\re^{2n+2}}e^{i(x-x')\cdot \x}a(\frac{x+x'}{2},\x)u(x')dx'd\x.
\end{align*}
We note that $\Op(a)$ is formally self-adjoint for the standard $L^2$-space $L^2(\re^{n+1})$ if $a$ is real-valued.
Let $k,l\in \re$. For $a\in C^{\infty}(\re^{2n+2})$, we call $a\in S^{k,l}$ if
\begin{align*}
|\pa_{x}^{\a}\pa_{\x}^{\b}a(x,\x)|\leq C_{\a\b}\jap{x}^{l-|\a|}\jap{\x}^{k-|\b|}
\end{align*}
where $C_{\a\b}$ is independent of $(x,\x)\in \re^{2n+2}$. With the notation in \cite[Section 18.4]{Ho}, we can write $S^{k,l}=S(\jap{x}^l\jap{\x}^k ,g_{sc})$, where $g_{sc}=\jap{x}^{-2}dx^2+\jap{\x}^{-2}d\x^2$.
We define 
\begin{align*}
\{a,b\}:=H_ab:=\pa_{\x}a\cdot \pa_{x}b-\pa_{x}a\cdot \pa_{\x}b.
\end{align*}
We shall state the following properties of the quantization $\Op$ and and the symbol classes $S^{k,l}$.

\begin{lem}\label{prelimlem}
\item[$(i)$] Let $k,l\leq 0$ and $a\in S^{k,l}$. Then $\Op(a):\mathcal{S}(\re^n)\to \mathcal{S}(\re^n)$ can be uniquely extended to a bounded linear operator on $L^2(\re^n)$. 

\item[$(ii)$]
Let $k_j,l_j\in \re$ for $j=1,2$, $a\in S^{k_1,l_1}$ and $b\in S^{k_2,l_2}$. Then there exists $c\in S^{k_1+k_2, l_1+l_2}$ such that
\begin{align*}
\Op(c)=\Op(a)\Op(b).
\end{align*}
Moreover, we have
\begin{align*}
[\Op_h(a), i\Op_h(b)]=\Op(H_ab)+\Op S^{k_1+k_2-2, l_1+l_2-2},
\end{align*}
where we note
\begin{align*}
H_ab\in S^{k_1+k_2-1, l_1+l_2-1}.
\end{align*}

\item[$(iii)$]$($Sharp G\aa rding inequality$)$ Let $a\in S^{k,l}$ with $k,l\in \re$. Suppose $a(x,\x)\geq 0$ for $(x,\x)\in \re^{2n+2}$. Then there exists $C>0$ such that for $u\in \mathcal{S}(\re^{n+1})$
\begin{align*}
(u, \Op(a)u)\geq -C\|u\|_{H^{k-1,l-1}(\re^n)}^2.
\end{align*}
\end{lem}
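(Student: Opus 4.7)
My plan is to recognize all three statements as special cases of H\"ormander's pseudodifferential calculus once the symbol class $S^{k,l}$ is identified, as the paper already notes, with the Weyl class $S(\jap{x}^l\jap{\x}^k, g_{sc})$ for the scattering metric $g_{sc} = \jap{x}^{-2}dx^2 + \jap{\x}^{-2}d\x^2$. The preliminary step, common to all three parts, is to verify that $g_{sc}$ is slowly varying and $\s$-temperate in the sense of \cite[Section 18.4]{Ho} with Planck function $h(x,\x) = \jap{x}^{-1}\jap{\x}^{-1}$, and that the order functions $\jap{x}^l\jap{\x}^k$ are admissible weights for $g_{sc}$. These checks all reduce to the Peetre-type inequality $\jap{x+y} \leq \sqrt{2}\jap{x}\jap{y}$ (and its analogue in $\x$) and are entirely routine.

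For (i), when $k,l \leq 0$ the weight $m = \jap{x}^l\jap{\x}^k$ is bounded above by $1$, so $a \in S(1, g_{sc})$. The Calder\'on--Vaillancourt-type $L^2$ boundedness theorem for the H\"ormander calculus, \cite[Theorem 18.6.3]{Ho}, then yields $\Op(a) \in B(L^2(\re^{n+1}))$ with operator norm controlled by finitely many seminorms of $a$, and the extension from $\mathcal{S}$ to $L^2$ follows by density.

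For (ii), the Weyl composition theorem \cite[Theorem 18.5.4]{Ho} gives $\Op(a)\Op(b) = \Op(a \#_W b)$ with $a \#_W b \in S^{k_1+k_2, l_1+l_2}$ and the Moyal expansion
\begin{align*}
a \#_W b = ab + \tfrac{1}{2i}\{a,b\} + r, \qquad r \in S^{k_1+k_2-2, l_1+l_2-2}.
\end{align*}
Since the even-order terms of this expansion are symmetric in $(a,b)$ and hence cancel in the commutator, while the $j=1$ term contributes $\tfrac{1}{i}\{a,b\}$, one obtains $[\Op(a), \Op(b)] = \Op(\tfrac{1}{i}\{a,b\}) \bmod \Op S^{k_1+k_2-2, l_1+l_2-2}$; multiplying through by $i$ and using $H_a b = \{a,b\}$ yields the claimed formula.

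For (iii), the sharp G\aa rding inequality in the H\"ormander calculus \cite[Theorem 18.6.14]{Ho} applies directly to nonnegative symbols in $S(m, g_{sc})$: it furnishes the estimate $(u, \Op(a) u) \geq -C \|u\|^2_{H(m h, g_{sc})}$, and since $m h = \jap{x}^{l-1}\jap{\x}^{k-1}$ the right-hand side is precisely the $H^{k-1,l-1}$ norm. The only substantive step in the whole lemma is therefore the metric admissibility verification at the outset; everything else is a direct invocation of the abstract theory, and I do not anticipate a significant obstacle.
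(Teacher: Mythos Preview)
Your approach is essentially the same as the paper's: the paper does not prove this lemma at all but simply writes ``For these proof, see \cite[\S18.4]{Ho},'' and your proposal is a (considerably more detailed) expansion of exactly that reference, identifying $S^{k,l}$ with $S(\jap{x}^l\jap{\x}^k,g_{sc})$ and invoking the relevant theorems from H\"ormander's Weyl calculus. There is nothing to add.
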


For these proof, see \cite[\S18.4]{Ho}.

The proof of the following lemma is standard. 

\begin{lem}\label{convergence}
Let $k,\ell\in \re$. Assume $a_j\in S^{k,\ell}$ is a bounded sequence in $ S^{k,\ell}$ and $a_j\to 0$ in $S^{k+\d,\ell+\d}$ for some $\d>0$. Then, for each $s, t\in \re$ and $u\in H^{s,t}$
\[
\|\Op(a_j)u\|_{H^{s-\ell,t-k}}\to 0\quad\text{as } j\to \infty.
\]
In particular,
\end{lem}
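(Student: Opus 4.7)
The plan is a uniform-boundedness plus interpolation plus density argument. I work throughout with $s,t,k,\ell,\d$ as in the statement.

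\emph{Uniform bounds.} By the composition rule of Lemma \ref{prelimlem}(ii), the operator $\jap{D}^{s-\ell}\jap{x}^{t-k}\Op(a_j)\jap{x}^{-t}\jap{D}^{-s}$ is the quantization of a symbol in $S^{0,0}$ whose finitely many seminorms are controlled by finitely many seminorms of $a_j$ in $S^{k,\ell}$. Part (i) of the same lemma then gives
\[
C:=\sup_j\|\Op(a_j)\|_{B(H^{s,t},\,H^{s-\ell,t-k})}<\infty.
\]
The identical computation performed in the class $S^{k+\d,\ell+\d}$, together with $a_j\to 0$ in $S^{k+\d,\ell+\d}$ (which forces every seminorm to $0$), yields
\[
\|\Op(a_j)\|_{B(H^{s,t},\,H^{s-\ell-\d,\,t-k-\d})}\to 0.
\]

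\emph{Interpolation for Schwartz inputs.} Fix $v\in\mathcal{S}(\re^{n+1})$. Applying the uniform bound above with $(s,t)$ replaced by $(s+N,t+N)$ and using $v\in H^{s+N,t+N}$ for every $N>0$, I obtain
\[
\sup_j\|\Op(a_j)v\|_{H^{s-\ell+N,\,t-k+N}}\leq C_N\|v\|_{H^{s+N,t+N}}<\infty.
\]
Combining this uniform bound in a stronger space with the convergence to $0$ in the weaker space via the weighted Sobolev interpolation inequality
\[
\|w\|_{H^{s-\ell,\,t-k}}\leq C\|w\|_{H^{s-\ell-\d,\,t-k-\d}}^{\frac{N}{N+\d}}\|w\|_{H^{s-\ell+N,\,t-k+N}}^{\frac{\d}{N+\d}}
\]
gives $\Op(a_j)v\to 0$ in $H^{s-\ell,t-k}$. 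The interpolation inequality is a case of complex interpolation for H\"ormander's spaces $H(\jap{x}^\tau\jap{\xi}^\sigma,g_{sc})$; equivalently, after conjugating by $\jap{x}^{-t+k}\jap{D}^{-s+\ell}$ to reduce to $L^2$, it follows from the Hadamard three-lines theorem applied to the analytic family $\zeta\mapsto\jap{x}^{\zeta}\jap{D}^{\zeta}$, modulo lower-order commutator errors controlled by Lemma \ref{prelimlem}(ii).

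\emph{Density.} Given $u\in H^{s,t}$ and $\e>0$, pick $v\in\mathcal{S}(\re^{n+1})$ with $\|u-v\|_{H^{s,t}}<\e$. Then
\[
\|\Op(a_j)u\|_{H^{s-\ell,t-k}}\leq C\|u-v\|_{H^{s,t}}+\|\Op(a_j)v\|_{H^{s-\ell,t-k}}\leq C\e+o(1),
\]
so $\limsup_{j\to\infty}\|\Op(a_j)u\|_{H^{s-\ell,t-k}}\leq C\e$, and letting $\e\to 0$ concludes. The main obstacle is the weighted Sobolev interpolation inequality in the bi-indexed setting, which cannot be obtained by interpolating in the regularity and weight indices separately; the rest of the argument is bookkeeping.
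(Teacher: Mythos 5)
The paper offers no proof of this lemma (it is dismissed as ``standard''), so there is nothing to compare against line by line; judged on its own, your argument is correct and is one of the standard routes. Two remarks. First, the interpolation step is an unnecessary detour: for $v\in\mathcal{S}(\re^{n+1})$ (or indeed any $v\in H^{s+\d,t+\d}$) you already get convergence directly, because the same conjugation computation shows that the operator norm of $\Op(a_j)$ from $H^{s+\d,t+\d}$ to the target space is controlled by finitely many seminorms of $a_j$ in $S^{k+\d,\ell+\d}$, all of which tend to $0$ by hypothesis; combined with your uniform bound and density this finishes the proof without ever invoking complex interpolation of the bi-indexed spaces $H^{a,b}$ (which is true, and your three-lines sketch is fine, but it is machinery you do not need). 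Second, a bookkeeping point: with the paper's conventions ($a\in S^{k,\ell}$ means $|\pa_x^{\a}\pa_\x^{\b}a|\lesssim\jap{x}^{\ell-|\a|}\jap{\x}^{k-|\b|}$ and $H^{a,b}=\jap{x}^{-b}\jap{D}^{-a}L^2$), a symbol of class $S^{k,\ell}$ maps $H^{s,t}$ boundedly into $H^{s-k,t-\ell}$, not $H^{s-\ell,t-k}$; the conjugated operator $\jap{D}^{s-\ell}\jap{x}^{t-k}\Op(a_j)\jap{x}^{-t}\jap{D}^{-s}$ that you write down lies in $\Op S^{k-\ell,\ell-k}$, which is $\Op S^{0,0}$ only when $k=\ell$. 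This index transposition is inherited from the statement of the lemma itself (almost certainly a typo there), but your proof should either fix it or note it, since as written the very first displayed claim of the uniform-bound step is not literally correct.
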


\section{Limiting absorption principle}\label{LAPsec}

In this section, we always assume the null non-trapping condition (Assumption \ref{nulltrapp}). Moreover, we consider the operator $P_1=U^{-1}PU$ on the standard $L^2$-space $L^2(\re^{n+1})$, where $U:L^2(\re^{n+1})\to L^2(\re^{n+1}, \sqrt{|g|}dx)$ is a unitary operator defined by $Uu(x)=|g(x)|^{\frac{1}{4}}u(x)$.
For the simplicity of the notation, we denote $P$ by $P_1$ in this section. What is important is that $P$ has the form
\begin{align}\label{Pstd}
P=\Op(p)  +\Op S^{1,-\m}
\end{align}
and that $P$ is self-adjoint on $L^2(\re^{n+1})$.

To invoke the Mourre theory, we introduce the conjugate operator $A$:
\begin{align*}
a(x,\x)=\frac{x\cdot \tilde{\x}}{1+|\x|^2}\in S^{-1,1},\,\, A=\Op(a),\quad \tilde{\x}=\frac{1}{2}\pa_{\x}p_0(\x)=(-\x_1,\x_2,...,\x_{n+1}).
\end{align*}
By virtue of Nelson's commutator theorem, it follows that $A|_{C_c^{\infty}(\re^{n+1})}$ is essentially self-adjoint. We denote their unique self-adjoint extensions  by the same symbols $A$. We write the domain of the self-adjoint extension $P$ by $D(P)$, then we have
\begin{align}\label{domainP}
D(P)=\{u\in L^2(\re^{n+1})\mid Pu\in L^2(\re^{n+1})\}
\end{align}
by essential self-adjointness of $P|_{C_c^{\infty}(\re^{n+1})}$. Moreover, the essential self-adjointness of $P|_{C_c^{\infty}(\re^{n+1})}$ also implies 
\begin{align}\label{domaindense}
C_c^{\infty}(\re^{n+1})\,\, \text{is dense in}\,\,  D(P)\,\, \text{equipped with the graph norm of $P$}.
\end{align}

To prove Theorem \ref{lapwave}, we need some results on a pseudodifferential calculus of spectral cut-off functions for $P$ (Proposition \ref{pseudorem}) and the local compactness for $P$ (Corollary \ref{waveloccor}).
 We prove these results in the subsections later and deduce Theorem \ref{lapwave} here.

\begin{proof}[Proof of Theorem \ref{lapwave}]
We may assume $0<\m\leq 1$. We note $P\in C^2(A)$ since the linear operators $[P,iA]$ and $[[P,iA], iA]$ are bounded on $L^2(\re^{n+1})$.
Moreover, a standard argument based on the Hadamrad three line theorem gives
\begin{align*}
\jap{A}^s(P-i)^{-1}\jap{x}^{-s}\in B(L^2(\re^{n+1}))\quad \text{for}\quad 0\leq s\leq 1.
\end{align*}
Thus, we only have to prove Theorem \ref{lapwave} replacing $\jap{x}$ by $\jap{A}$.

Since $p-p_0\in  S^{2,-\m}$ and $a\in S^{-1,1}$, we have
\begin{align*}
\{p, a\}=\frac{4|\x|^2}{1+|\x|^2}+S^{0,-\m}.
\end{align*}
This implies 
\begin{align*}
[P, iA]= 4(I-\Delta)^{-1/2}(-\Delta)(I-\Delta)^{-1/2}+R,
\end{align*}
where $R\in \Op S^{0,-\m}$ (we note $0<\m\leq 1$). Let $J\Subset I$ be an open interval. Let $\f\in C_c^{\infty}(\re\setminus \{0\}; [0,1])$ which is supported in $I$ and is equal to $1$ on $J$. Moreover, take $\g_1\in C_c^{\infty}(\re^{n+1}; [0,1])$ which support is close to $0$ such that 
\begin{align*}
\supp\f\circ p\cap \supp\g_1=\emptyset.
\end{align*}
We observe 
\begin{align*}
(I-\Delta)^{-1/2}(-\Delta)(I-\Delta)^{-1/2}\geq c+\g(D),
\end{align*}
where
\begin{align*}
c=\inf_{\x\in \supp (1-\g_1)}\frac{|\x|^2}{1+|\x|^2}>0,\,\, \g(\x)=\frac{|\x|^2\g_1(\x)}{1+|\x|^2}-c\g_1(\x)\in C_c^{\infty}(\re^{n+1}).
\end{align*}
Thus we have
\begin{align}
\f(P) [P, iA]\f(P)\geq 4c\f(P)^2+4\f(P)\g(D)\f(P) +\f(P)R\f(P)
\end{align}
Proposition \ref{pseudorem} with a support propety $\supp\f\circ p\cap \supp\g=\emptyset$ implies that the second term of the right hand side is a compact operator on $L^2(\re^{n+1})$. Moreover, it follows that the third term $\f(P)R\f(P)$ is also compact by using the Helffer-Sj\"ostrand formula and the local compactness for $P$ (Corollary \ref{waveloccor}). From the Mourre theory \cite{Mo}, we obtain the desired results.

\end{proof}

\subsection{Pseudodifferential calculus of spectral cut-off functions}

In this subsection, we prove that $\f(D)\f(P)$ is a pseudodifferential operator plus a negligible term although $\f(P)$ itself cannot be written by such a form (even in the constant coefficient case $g=g_0$).
It is expected that $\g(D)\f(P)$ is actually a pseudodifferential operator of class $\Op S^{-\infty,0}$ (by using Beal's theorem). Here, we only show a weaker result which is needed for the Mourre estimate.

\begin{prop}\label{pseudorem}
Let $\f, \g\in C_c^{\infty}(\re)$. Then we can write
\begin{align*}
\g(D)\f(P)=\Op(\g\cdot \f\circ p)+K,
\end{align*}
where $K$ is a compact operator on $L^2(\re^{n+1})$.
\end{prop}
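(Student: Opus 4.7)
The plan is to combine the Helffer-Sjöstrand functional calculus for $\varphi(P)$ with a parametrix for $\gamma(D)(z-P)^{-1}$ that uses the $\xi$-compact support of $\gamma$. Fix an almost analytic extension $\tilde\varphi\in C_c^\infty(\co)$ of $\varphi$ with $|\bar\partial\tilde\varphi(z)|\leq C_N|\Im z|^N$ for every $N$. Then
\[
\gamma(D)\varphi(P)=\frac{1}{\pi}\int_\co \bar\partial\tilde\varphi(z)\,\gamma(D)(z-P)^{-1}\,dL(z),
\]
so the task becomes to invert $z-P$ microlocally on $\supp\gamma(\xi)$ modulo remainders that are compact and integrable against $\bar\partial\tilde\varphi$.

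For the parametrix I would set
\[
b(z,x,\xi)\coloneqq\frac{\gamma(\xi)}{z-p(x,\xi)},\qquad \Im z\neq 0.
\]
Since $\gamma$ is compactly supported in $\xi$ and $p-p_0\in S^{2,-\mu}$ makes $p$ bounded on $\supp_\xi\gamma$ uniformly in $x$, $b(z)\in S^{-\infty,0}$ with seminorms growing polynomially in $|\Im z|^{-1}$. By Lemma \ref{prelimlem}(ii) the Weyl composition gives $b(z)\#(z-p)=\gamma(\xi)+r_1(z)$ with $r_1(z)\in S^{-\infty,-\mu-1}$: the leading subprincipal piece $-\tfrac{1}{2i}\{b(z),p\}$ picks up an extra $\jap{x}^{-\mu-1}$ because $\partial_x p=\partial_x(p-p_0)\in S^{2,-\mu-1}$ and $\partial_x b\in S^{-\infty,-\mu-1}$. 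Combining with the contribution $\Op(b(z))\bigl(P-\Op(p)\bigr)\in\Op S^{-\infty,-\mu}$ coming from (\ref{Pstd}), I obtain
\[
\Op(b(z))(z-P)=\gamma(D)+\tilde r(z),\qquad \tilde r(z)\in\Op S^{-\infty,-\mu},
\]
with seminorms of polynomial growth in $|\Im z|^{-1}$. Right-multiplying by $(z-P)^{-1}$ yields $\gamma(D)(z-P)^{-1}=\Op(b(z))-\tilde r(z)(z-P)^{-1}$.

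Substituting into the Helffer-Sjöstrand integral, the principal part gives
\[
\frac{1}{\pi}\int_\co \bar\partial\tilde\varphi(z)\,b(z,x,\xi)\,dL(z)=\gamma(\xi)\varphi(p(x,\xi))
\]
pointwise by the Cauchy-Pompeiu identity, uniformly on $\supp_\xi\gamma\times\re^{n+1}$ because $p(x,\xi)$ stays in a bounded set there; quantizing this identity produces $\Op(\gamma\cdot\varphi\circ p)$. The remainder
\[
K=-\frac{1}{\pi}\int_\co \bar\partial\tilde\varphi(z)\,\tilde r(z)(z-P)^{-1}\,dL(z)
\]
is compact: truncating $\tilde r(z)$ to $|x|\leq R$ produces a Hilbert-Schmidt operator (its kernel lies in $L^2$ thanks to rapid $\xi$-decay and the $\jap{x}^{-\mu}$ weight with $\mu>0$), and the truncation error tends to zero in operator norm by Lemma \ref{convergence} applied with gain in $\Op S^{-\infty,-\mu+\delta}$, so each $\tilde r(z)$ is compact. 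Choosing $N$ large enough in the almost-analyticity bound dominates $\|\tilde r(z)(z-P)^{-1}\|$ by an $L^1(dL)$-function on the compact support of $\tilde\varphi$, so $K$ is a norm-convergent Bochner integral of compact operators, hence compact.

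The main obstacle is the combined bookkeeping of symbol seminorms together with the $|\Im z|^{-1}$-blow-up of the parametrix: one has to verify that the polynomial growth in $|\Im z|^{-1}$ at every step of the Weyl expansion is absorbed by the almost analyticity $|\bar\partial\tilde\varphi|\leq C_N|\Im z|^N$ through a sufficiently large $N$. The gain $\tilde r(z)\in\Op S^{-\infty,-\mu}$ (rather than merely $\Op S^{-\infty,0}$), which is what turns the remainder into a compact operator, relies crucially on $\mu>0$ together with the identity $\partial_x p_0=0$.
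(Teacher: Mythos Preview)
Your proof is correct and follows essentially the same route as the paper: both build a parametrix $\gamma(\xi)/(z-p)$ for $\gamma(D)(z-P)^{-1}$, feed it into the Helffer--Sj\"ostrand formula, and use the almost-analyticity of $\tilde\varphi$ to absorb the polynomial blow-up in $|\Im z|^{-1}$. The only cosmetic differences are that the paper keeps the two remainder pieces (the one from the Weyl expansion and the one from $P-\Op(p)$) separate and argues their compactness via the mapping $L^2\to H^{1,1}$, whereas you bundle them into a single $\tilde r(z)\in\Op S^{-\infty,-\mu}$ and use a Hilbert--Schmidt approximation; note that Lemma~\ref{convergence} as stated yields only strong convergence, but the norm convergence you actually need follows directly from Calder\'on--Vaillancourt since the tail symbol tends to zero in $S^{0,0}$.
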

\begin{proof}
First, we construct a parametrix of $\g(D)(P-z)^{-1}$ for $\Im z\neq 0$. We note that for any integer $N\geq 0$, we have 
\begin{align*}
|\pa_{x}^{\a}\pa_{\x}^{\b}\left(\g(\x)(p(x,\x)-z)^{-1}\right)|\leq C_{N\a\b}|\Im z|^{-|\a|-|\b|-1}\jap{x}^{-|\a|}\jap{\x}^{-N}
\end{align*}
with a constant $C_{N\a\b}>0$ independent of $z$ and $(x,\x)\in \re^{2n+2}$. This implies
\begin{align*}
\g(\x)=\frac{\g}{p-z}\# (p-z)(x,\x)+r_z(x,\x),
\end{align*}
where $r_0\in S^{-\infty,-1}$ satisfying
\begin{align}\label{funrem}
|\pa_{x}^{\a}\pa_{\x}^{\b}r_z(x,\x)|\leq C_{N\a\b}|\Im z|^{-N_{\a\b}}\jap{x}^{-1-|\a|}\jap{\x}^{-N}
\end{align}
with a constant $C_{N\a\b}>0$ and $N_{\a\b}\geq 0$ independent of $z$ and $(x,\x)\in \re^{2n+2}$. Weyl quantizing this equation and multiplying $(P-z)^{-1}$ from left, we have
\begin{align*}
\g(D)(P-z)^{-1}=\Op(\frac{\g}{p-z})(I+(P-\Op(p))(P-z)^{-1}) +\Op(r_z)(P-z)^{-1}
\end{align*}
as a bounded operator on $L^2(\re^{n+1})$. 

Now we denote the almost analytic extension \cite[Theorem 3.6]{Z} of $\f$ by $\tilde{\f}$. By the Helffer-Sj\"ostrand formula \cite[Theorem 14.8]{Z}, we have
\begin{align*}
\g(D)\f(P)=&\Op(\g\cdot \f\circ p)+\frac{1}{\pi i}\int_{\mathbb{C}}\bar{\pa_{z}}\tilde{\f}(z)\Op(\frac{\g}{p-z})(P-\Op(p))(P-z)^{-1}dz\\
&+\frac{1}{\pi i}\int_{\mathbb{C}}\bar{\pa_{z}}\tilde{\f}(z)\Op(r_z)(P-z)^{-1}dz.
\end{align*}
The estimates $(\ref{funrem})$, $\|(P-z)^{-1}\|_{B(L^2)}\leq |\Im z|^{-1}$ and $\bar{\pa_{z}}\tilde{\f}(z)=O(|\Im z|^{\infty})$ as $\Im z\to 0$ imply that the third term of the right hand side is a bounded operator from $L^2(\re^{n+1})$ to $H^{1,1}$. Moreover, since the operator $\bar{\pa_{z}}\tilde{\f}(z)\Op(\frac{\g}{p-z})(P-\Op(p))$ is compact and its norm is bounded by $C|\Im z|$ with $C>0$, the second term is also compact.
 Since the natural injection $H^{1,1}\hookrightarrow L^2(\re^{n+1})$ is compact, we obtain the desired result.

\end{proof}

\subsection{Local compactness}

In this subsection, we prove the local compactness for $P$. The main result of this subsection is the following proposition.
\begin{prop}\label{waveloc}
There exists $C>0$ such that
\begin{align}\label{waveloces}
\|u\|_{H^{\frac{1}{2},-\frac{1+\d}{2}}}\leq C\|Pu\|_{L^2}+C\|u\|_{L^2}
\end{align}
for $u\in D(P)=\{u\in L^2(\re^{1+n})\mid Pu\in L^2(\re^{1+n})\}$. In particular, we have a continuous inclusion
\begin{align*}
D(P)\hookrightarrow H^{\frac{1}{2},-\frac{1+\d}{2}}.
\end{align*}
where we regard $D(P)$ as a Banach space equipped with the graph norm of $P$.
\end{prop}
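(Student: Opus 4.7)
My plan follows the outline the author sketched. The goal is to reduce the static estimate $\|u\|_{H^{1/2,-(1+\delta)/2}}\leq C\|Pu\|_{L^2}+C\|u\|_{L^2}$ to a local-in-time local smoothing estimate for the Schr\"odinger group generated by $P$, namely
\begin{align*}
\|e^{-isP}v\|_{L^2([-T,T],H^{1/2,-(1+\delta)/2})}\leq C_{T,\delta}\|v\|_{L^2},\qquad v\in L^2(\re^{n+1}).
\end{align*}
Once this time-dependent inequality is in hand, the proposition follows from an elementary Duhamel trick. Given the density statement \eqref{domaindense}, it suffices to prove \eqref{waveloces} for $u\in C_c^\infty(\re^{n+1})$.

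For such a $u$, set $f:=(P-i)u\in L^2$ and consider $u(s):=e^s u$, which satisfies $(i\partial_s-P)u(s)=-e^s f=:-f(s)$ with $u(0)=u$. Duhamel's formula gives
\begin{align*}
u(s)=e^{-isP}u+i\int_0^s e^{-i(s-s')P}f(s')\,ds'.
\end{align*}
Taking the $L^2([-T,T];H^{1/2,-(1+\delta)/2})$ norm and applying Minkowski's inequality combined with the local smoothing estimate on both terms yields
\begin{align*}
\|u(s)\|_{L^2([-T,T];H^{1/2,-(1+\delta)/2})}\leq C_{T,\delta}\bigl(\|u\|_{L^2}+\|f\|_{L^2}\bigr).
\end{align*}
On the other hand, $\|u(s)\|_{L^2([-T,T];H^{1/2,-(1+\delta)/2})}^2=\bigl(\int_{-T}^{T}e^{2s}ds\bigr)\|u\|_{H^{1/2,-(1+\delta)/2}}^2$, so, bounding $\|f\|_{L^2}\leq \|Pu\|_{L^2}+\|u\|_{L^2}$, we obtain \eqref{waveloces} with $C$ depending only on a fixed choice of $T$ and $\delta$. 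The general case $u\in D(P)$ follows by approximation using \eqref{domaindense}, provided the right-hand side is closable in the graph norm, which is immediate.

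The main obstacle is clearly establishing the local smoothing estimate for $\{e^{-isP}\}$. Chihara's proof in the globally non-trapping regime uses a positive-commutator argument with an escape function $a\in S^{0,1}$ satisfying $H_p a\geq c\langle\xi\rangle\langle x\rangle^{-1-\delta}$ on all of phase space. Under our weaker assumption — null non-trapping only — such a global escape function cannot exist, because bicharacteristics at nonzero energies may be trapped. Instead, I would split phase space microlocally: in a conic neighborhood of $p^{-1}(0)\setminus\{\xi=0\}$, use Assumption \ref{nulltrapp} together with a standard averaging/integration-along-the-flow construction to build a microlocal escape function $a_0$ satisfying $H_p a_0\geq c\langle\xi\rangle\langle x\rangle^{-1-\delta}$ there (this is the microlocal analogue of \cite[Lemma 7.2]{Ch} referenced as Lemma \ref{escapenullnon}); away from $p^{-1}(0)$ the operator $P$ is elliptic in the scattering sense (in directions where $|p|\geq c\langle\xi\rangle^2$), and one controls the corresponding part of the solution by an elliptic parametrix, absorbing the resulting error into the right-hand side using $Pu\in L^2$.

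Finally, one runs the positive-commutator argument with $A_0=\Op(a_0)$ against $e^{-isP}v$: differentiating $s\mapsto(e^{-isP}v,A_0 e^{-isP}v)$ in $s$ produces the quantity $(e^{-isP}v,\Op(H_p a_0)e^{-isP}v)$, which dominates $\|e^{-isP}v\|_{H^{1/2,-(1+\delta)/2}}^2$ modulo elliptic and lower-order terms controlled by Lemma \ref{prelimlem} and the sharp G\aa rding inequality; integrating in $s\in[-T,T]$ and using $\|A_0\|_{B(L^2)}\lesssim 1$ on the boundary terms produces the desired local smoothing bound. The technical delicacy is that the escape function is only defined microlocally, so one must carefully handle the cutoff errors and ensure the remainders fall into a symbol class that is absorbable after localizing in time; this is precisely where $T<\infty$ plays a role and where the difference with the globally non-trapping case is felt.
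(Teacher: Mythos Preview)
Your approach has a genuine gap. You took the heuristic outline from the introduction literally, but the author explicitly warns there that under \emph{null} non-trapping alone ``we cannot use the above argument.'' The issue is that the local smoothing estimate
\[
\|e^{-isP}v\|_{L^2([-T,T];H^{\frac12,-\frac{1+\d}{2}})}\leq C_{T,\d}\|v\|_{L^2},\qquad v\in L^2(\re^{n+1}),
\]
is not expected to hold: Assumption~\ref{nulltrapp} places no restriction on bicharacteristics at nonzero energy, so there may be trapped orbits on $\{p=E\}$ with $E\neq 0$, and wave packets concentrated there obstruct the half-derivative gain. Concretely, your positive-commutator argument produces the remainder $\Op(r)$ of Lemma~\ref{escapenullnon}, supported where $|p|\gtrsim|\x|^2$; controlling $(e^{-isP}v,\Op(r)e^{-isP}v)$ by an elliptic parametrix costs $\|Pe^{-isP}v\|_{L^2}=\|Pv\|_{L^2}$, not $\|v\|_{L^2}$. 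If you accept $\|Pv\|$ on the right-hand side of the smoothing estimate, then in the Duhamel step the inhomogeneous term forces $\|Pf(s')\|_{L^2}=e^{s'}\|P(P-i)u\|_{L^2}$, i.e.\ control of $\|P^2u\|_{L^2}$, which is not bounded by the graph norm. Finite $T$ does not help: the elliptic remainder contributes $2T\|Pv\|_{L^2}^2$, and no time localization removes the extra $P$.

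The paper's proof sidesteps the Schr\"odinger group entirely and runs the commutator argument \emph{statically}. With $A=\Op(a)\in\Op S^{0,0}$ as in Lemma~\ref{escapenullnon} one has $A$ bounded on $L^2$, so
\[
|(u,[P,iA]u)|=2|\Im(Pu,Au)|\leq C\|Pu\|_{L^2}\|u\|_{L^2}.
\]
Sharp G\aa rding applied to $H_pa-C_1\jap{x}^{-1-\d}\jap{\x}+r\geq 0$ gives
\[
\|u\|_{H^{\frac12,-\frac{1+\d}{2}}}^2\leq C\|Pu\|_{L^2}^2+C\|u\|_{L^2}^2+C|(u,\Op(r)u)|,
\]
and now the elliptic-region term $|(u,\Op(r)u)|$ is bounded by $C\|Pu\|_{L^2}^2+C\|u\|_{L^2}^2$ directly via the parametrix, with no second application of $P$ appearing. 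This is exactly the content of \eqref{onchar}--\eqref{awaychar}. The moral is that the detour through $e^{-isP}$ loses the information that $Pu$ is already given; working statically keeps it.
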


\begin{rem}
This proposition is not a direct consequence of the radial estimate. In fact, for applying the radial source estimate, we need an addition decay for $Pu$.
\end{rem}

\begin{cor}\label{waveloccor}
Let $W\in C(\re^{1+n})$ satisfying $|W(x)|\to 0$ as $|x|\to 0$. Then it follows that $W(P-i)^{-1}$ is a compact operator on $L^2(\re^{1+n})$. In particular, if $A\in \Op S^{0,-\e}$ with $\e>0$, then $A(P-i)^{-1}$ is also compact.
\end{cor}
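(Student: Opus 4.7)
The plan is to deduce both assertions from the continuous inclusion $D(P)\hookrightarrow H^{1/2,-(1+\d)/2}$ established in Proposition \ref{waveloc}, combined with a Rellich-type compactness argument for weighted Sobolev spaces. (I interpret the hypothesis $|W(x)|\to 0$ as $|x|\to 0$ as the intended $|x|\to\infty$.)

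First I would treat the case when $W$ has compact support. Given a bounded sequence $u_n\in L^2(\re^{n+1})$, Proposition \ref{waveloc} implies that $v_n:=(P-i)^{-1}u_n$ is bounded in $H^{1/2,-(1+\d)/2}(\re^{n+1})$. Choosing $\tilde\chi\in C_c^{\infty}(\re^{n+1})$ equal to $1$ on $\supp W$, multiplication by $\tilde\chi$ is a pseudodifferential operator with symbol in $S^{0,-\infty}$, so by the Weyl calculus (Lemma \ref{prelimlem}) $\tilde\chi v_n$ is bounded in $H^{1/2}(\re^{n+1})$ with support contained in $\supp\tilde\chi$. The Rellich--Kondrachov theorem yields a subsequence convergent in $L^2$, and multiplying by the bounded function $W=W\tilde\chi$ preserves the convergence, so $W(P-i)^{-1}$ is compact.

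For general $W\in C(\re^{n+1})$ with $|W(x)|\to 0$ as $|x|\to\infty$, I would approximate $W$ by the truncations $W_R=W\cdot\mathbf{1}_{\{|x|\le R\}}$. Each $W_R(P-i)^{-1}$ is compact by the previous step (smoothness of the cutoff is not needed, only boundedness and compact support), and the tail satisfies
\begin{equation*}
\|(W-W_R)(P-i)^{-1}\|_{B(L^2)}\le \sup_{|x|\ge R}|W(x)|\cdot \|(P-i)^{-1}\|_{B(L^2)}\longrightarrow 0
\end{equation*}
as $R\to\infty$, so $W(P-i)^{-1}$ is a norm limit of compact operators and hence compact.

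For the second assertion, I would reduce $A=\Op(a)\in\Op S^{0,-\e}$ to the first part via the symbolic calculus. Setting $b:=a\jap{x}^{\e}\in S^{0,0}$ and using Lemma \ref{prelimlem}(ii) to compute the Weyl product, one obtains $\Op(b)\,\jap{x}^{-\e}=A+R'$ with $R'\in\Op S^{-1,-\e-1}$, where $\jap{x}^{-\e}$ stands for the multiplication operator. Hence
\begin{equation*}
A(P-i)^{-1}=\Op(b)\,\jap{x}^{-\e}(P-i)^{-1}-R'(P-i)^{-1}.
\end{equation*}
The factor $\Op(b)$ is bounded on $L^2$ and $\jap{x}^{-\e}(P-i)^{-1}$ is compact by the first part, so the first summand is compact; meanwhile $R'$ maps $L^2$ continuously into $H^{1,\e+1}$, which embeds compactly into $L^2$ (standard splitting: a compactly supported piece is compact by Rellich--Kondrachov, and the tail is small thanks to the positive weight $\e+1$), so $R'$ is compact on $L^2$ and therefore $R'(P-i)^{-1}$ is compact. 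The main obstacle is bundled entirely into Proposition \ref{waveloc}; without the weighted regularity gain it provides there would be no way to localize $(P-i)^{-1}$ into a space of positive regularity order, since the non-ellipticity of $P$ prevents any gain from the ordinary elliptic parametrix.
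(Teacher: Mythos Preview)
Your argument is correct and follows essentially the same route as the paper: approximate $W$ in $L^{\infty}$ by compactly supported functions, use Proposition~\ref{waveloc} to land in $H^{1/2,-(1+\d)/2}$, and invoke a Rellich-type compact embedding; the paper uses smooth approximants $W_k\in C_c^{\infty}$ and the compact inclusion $H^{1/2,1}\hookrightarrow L^2$ in one step, whereas you split into a smooth cutoff $\tilde\chi$ followed by a hard truncation, but the content is the same. The paper leaves the ``in particular'' for $A\in\Op S^{0,-\e}$ to the reader, and your factorization $A=\Op(b)\jap{x}^{-\e}-R'$ with $b=a\jap{x}^{\e}\in S^{0,0}$ and $R'\in\Op S^{-1,-\e-1}$ is a clean way to fill that gap.
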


\begin{proof}[Proof of Corollary \ref{waveloccor}]
Let $W \in C(\re^{1+n})$ satisfying $|W(x)|\to 0$ as $|x|\to 0$. Then there exists $W_k\in C_c^{\infty}(\re^{1+n})$ such that $\|W_k-W\|_{L^{\infty}(\re^{1+n})}\to 0$ as $k\to \infty$. Since the multiplication operator $W_k$ is continuous from $H^{\frac{1}{2},-\frac{1+\d}{2}}$ to $H^{\frac{1}{2},1}$ and the natural inclusion $H^{\frac{1}{2},1}\hookrightarrow L^2(\re^{1+n})$ is compact, then Proposition \ref{waveloc} implies that $W_k(P-i)^{-1}$ is also compact in $L^2(\re^{1+n})$. Since a limit of compact operators is also compact, then we conclude the compactness of $W(P-i)^{-1}$.
\end{proof}

In the following, we show Proposition \ref{waveloc}.
Now Proposition \ref{waveloc} follows from existence of the following escape function.

\begin{lem}[Escape function under null non-trapping condition]\label{escapenullnon}
Let $0<2\d<\m$. There exist $\l_0>0$, $C_1>0$ and $a\in S^{0,0}$ such that
\begin{align*}
H_pa(x,\x)\geq C_1\jap{x}^{-1-\d}\jap{\x}-r(x,\x),
\end{align*}
where $r\in S^{1,-1}$ satisfies
\begin{align*}
\supp r\subset \{(x,\x)\in \re^{2n+2}\mid |\x|\leq 2\}\cup \{(x,\x)\in \re^{2n+2}\mid |p(x,\x)|\geq \frac{\l_0}{2}|\x|^2\}.
\end{align*}

\end{lem}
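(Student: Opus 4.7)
\textbf{Proof plan for Lemma \ref{escapenullnon}.}
The allowed support of $r$ comprises the elliptic region $\{|p|\ge \l_0|\x|^2/2\}$ and the low-frequency region $\{|\x|\le 2\}$, so the real task is to construct $a\in S^{0,0}$ satisfying $H_p a\ge C_1\jap{x}^{-1-\d}\jap{\x}$ on the null-characteristic region $\mathcal N:=\{(x,\x)\mid |\x|\ge 2,\ |p(x,\x)|\le \l_0|\x|^2/2\}$, modulo error terms in $S^{1,-1-\eta}$ for some $\eta>0$ that can be absorbed into $r$. The hypothesis $2\d<\m$ ensures that the perturbation $p-p_0\in S^{2,-\m}$ contributes only such absorbable errors. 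My plan is to write $a=a_1+\k a_2$ with $\k>0$ small, where $a_1$ is a bounded truncation of a radial Mourre-type observable and $a_2$ is a G\'erard--Martinez-type flow average invoking Assumption \ref{nulltrapp}.

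\textbf{Radial block.}
Set $\rho(x,\x):=\tfrac{x\cdot\tilde\x}{\jap{x}\jap{\x}}\in S^{0,0}$, globally bounded by $1$. A direct computation using $\pa_{\x}p_0=2\tilde\x$ and Cauchy--Schwarz gives
\[
H_{p_0}\rho=\frac{2\jap{\x}}{\jap{x}}\Bigl(\frac{|\x|^2}{\jap{\x}^2}-\rho^2\Bigr)\ge c\frac{\jap{\x}}{\jap{x}}(1-\rho^2)\quad\text{for}\quad|\x|\ge 2,
\]
and $H_{p-p_0}\rho\in S^{1,-1-\m}$ is absorbable. Pick a bounded increasing $\Th\in C^\infty(\re)$ with $\Th'(\rho)(1-\rho^2)\ge c_0>0$ for $|\rho|\le 1-\eta$ (e.g.\ $\Th=\tanh$). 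Then $a_1:=\Th(\rho)\in S^{0,0}$, and on $\mathcal N\cap\{|\rho|\le 1-\eta\}$, $H_p a_1=\Th'(\rho)H_p\rho+(\text{absorbable})\ge c'\jap{\x}/\jap{x}$, far stronger than needed.

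\textbf{Non-trapping block.}
On the complementary near-radial region $\mathcal N\cap\{|\rho|\ge 1-\eta\}$, I would invoke Assumption \ref{nulltrapp}. Since $p$ is asymptotically homogeneous of degree $2$ in $\x$, the rescaled flow $\widetilde\Phi_{\t}(x,\x):=\Phi_{\t/|\x|}(x,\x)$, where $\Phi_s(x,\x)=(z(s,x,\x),\z(s,x,\x))$, has uniformly bounded $x$-speed on $\mathcal N$. By compactness of $\{p=0,\ |\x|=1,\ \jap{x}\le R\}$ in the cosphere bundle and the null non-trapping hypothesis, there exists $T>0$ such that $\widetilde\Phi_{\pm T}$ maps every null point of this set outside $\{\jap{x}\le 2R\}$; by continuity this persists on the near-radial portion of $\mathcal N$, uniformly, provided $\l_0,\eta$ are sufficiently small and $R$ sufficiently large, with the sign of $T$ chosen according to $\sgn(\rho)$. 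Pick $\psi\in S^{1,-\d}$ non-negative, supported in $\mathcal N\cap\{|\rho|\ge 1-2\eta\}\cap\{\jap{x}\le R\}$ and with $\psi\ge c\jap{x}^{-\d}\jap{\x}$ on the smaller target set. Setting $a_2(x,\x):=-\int_0^T\psi(\widetilde\Phi_{\t}(x,\x))\,d\t$, a calculation (the $\x$-derivative of the rescaled time limit $T/|\x|$ producing only $S^{1,-1-\m}$ errors by virtue of $\pa_x p\in S^{2,-1-\m}$) gives $H_p a_2=\psi-\psi\circ\widetilde\Phi_T$ modulo $S^{1,-1-\eta}$; the second term vanishes on the target region by the uniform exit time. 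Taking $\k>0$ small, $a=a_1+\k a_2\in S^{0,0}$ then satisfies the required lower bound on all of $\mathcal N$.

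\textbf{Main obstacle.}
The technical heart is the verification $a_2\in S^{0,0}$ with uniform frequency control. One must bound $\pa_x^{\a}\pa_{\x}^{\b}\widetilde\Phi_{\t}(x,\x)$ uniformly for $\t\in[0,T]$ on the relevant portion of phase space; the time rescaling $\t=s|\x|$ is exactly what compensates the degree-$2$ homogeneity of $p$, and the variational equations of the rescaled flow, together with the uniform exit time coming from compactness plus Assumption \ref{nulltrapp}, yield uniform symbol-class estimates. This is precisely the null non-trapping adaptation of Chihara's \cite[Lemma 7.2]{Ch} flagged in the introduction.
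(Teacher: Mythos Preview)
Your two-block decomposition is in the spirit of the paper's proof, but there is a genuine gap: the near-radial region at spatial infinity, $\{|\rho|\ge 1-\eta\}\cap\{\jap{x}\gg R\}$, is covered by neither block. Since $\psi$ is supported in $\{\jap{x}\le R\}$ and the rescaled flow moves $x$ by at most $O(T)$ in time $\t\in[0,T]$, both $a_2$ and $H_pa_2$ vanish for $\jap{x}\gg R+T$. There you are left with $a_1=\Th(\rho)$ alone, and $H_pa_1\approx \Th'(\rho)\,c(1-\rho^2)\jap{\x}/\jap{x}$. Along rays with $x$ parallel to $\tilde\x$ one only has the universal bound $1-\rho^2\gtrsim\jap{x}^{-2}$, so $H_pa_1$ can be as small as $\jap{\x}\jap{x}^{-3}$, strictly weaker than the required $\jap{\x}\jap{x}^{-1-\d}$. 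The paper's decomposition is instead by $|x|$: for $|x|\ge M$ it takes $q_1=\b(x,\x)\int_1^{2|x|}s^{-1-\d}\,ds$, and the crucial point is that $H_pq_1$ contains, besides a $(1-\b^2)$-term, the contribution $\b^2|\pa_\x p|\,2^{-\d}|x|^{-1-\d}$ coming from differentiating the $|x|$-weight. This term is maximal precisely at the radial set $\b=\pm1$ and yields the uniform lower bound $H_pq_1\ge C\jap{\x}\jap{x}^{-1-\d}$ on \emph{all} of $\{|x|\ge M\}$. A bounded function of the angle alone cannot produce the needed $\jap{x}^{-\d}$ decay at the radial set; the weight must carry it explicitly.

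A secondary issue concerns the order of $a_2$. With $\psi\in S^{1,-\d}$ one has $\psi(\tilde\Phi_\t(x,\x))\sim\jap{\x}$ uniformly for $\t\in[0,T]$, hence $a_2\sim T\jap{\x}\notin S^{0,0}$. Relatedly, since $H_p$ and $d/d\t$ along the rescaled flow differ by a factor $|\x|$, the correct identity is $H_pa_2=|\x|(\psi-\psi\circ\tilde\Phi_T)+(\text{error})$, not $\psi-\psi\circ\tilde\Phi_T$. Choosing $\psi\in S^{0,0}$ compactly supported in $x$ fixes both points; this is effectively what the paper does, integrating an integrand of size $|\z|$ along the \emph{un}rescaled flow so that the $O(|\x|^{-1})$ sojourn time compensates (Lemma~\ref{escapecompact}).
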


\begin{proof}[Proof of Proposition \ref{waveloc} assuming Lemma \ref{escapenullnon}]
We may assume $0<2\d<\m$. By $(\ref{domaindense})$, it suffices to prove $(\ref{waveloces})$ for $u\in \mathcal{S}(\re^{1+n})$.
By using the sharp G\aa rding inequality and using $A\in \Op S^{0,0}$, for $u\in \mathcal{S}(\re^{1+n})$, we have
\begin{align}\label{onchar}
\|u\|_{H^{\frac{1}{2},-\frac{1+\d}{2}}}^2\leq C\|Pu\|_{L^2}^2+C\|u\|_{L^2}^2+C|(u,\Op(r)u)_{L^2}|
\end{align}
with a constant $C>0$. Now we write
\begin{align*}
&r=r_1+r_2,\,\, r_1\in S^{-\infty,-1},\,\, r_2\in S^{1,-1},\\
&\supp r_1\subset \{(x,\x)\in \re^{2n+2}\mid |\x|\leq 4\},\,\, \supp r_2\subset \{(x,\x)\in \re^{2n+2}\mid |\x|\geq 3,\, |p(x,\x)|\geq \frac{\l_0}{4}|\x|^2 \}.
\end{align*}
By the standard elliptic parametrix construction, we have
\begin{align}\label{awaychar}
|(u,\Op(r_1)u)_{L^2}|\leq C\|u\|_{L^2}^2,\,\, |(u,\Op(r_2)u)_{L^2}|\leq C\|Pu\|_{L^2}^2+ C\|u\|_{L^2}^2
\end{align}
for $u\in \mathcal{S}(\re^{1+n})$. Combining $(\ref{onchar})$ with $(\ref{awaychar})$, we obtain $(\ref{waveloces})$ for $u\in \mathcal{S}(\re^{1+n})$.
\end{proof}

To prove Lemma \ref{escapenullnon}, we need some preliminary lemmas.

\begin{lem}[Convexity at infinity 1]\label{conv1}
There exists $R_0>0$ such that for $(x,\x)\in T^*\re^{1+n}$ with $|x|\geq R_0$, we have
\begin{align*}
H_p^2|x|^2\geq C|\x|^2.
\end{align*}
\end{lem}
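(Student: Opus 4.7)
The plan is to compute $H_p^2|x|^2$ directly and then split off the flat-Minkowski contribution from the perturbative one. Applying the definition of $H_p$ gives
\begin{align*}
H_p|x|^2 = \partial_\xi p \cdot 2x = 4\sum_{j,k} g^{jk}(x)\, x_k\, \xi_j =: \Phi(x,\xi),
\end{align*}
so that
\begin{align*}
H_p^2|x|^2 = \partial_\xi p \cdot \partial_x \Phi - \partial_x p \cdot \partial_\xi \Phi.
\end{align*}
In the flat case $g^{jk}=g_0^{jk}=\mathrm{diag}(-1,1,\dots,1)$, only the first term survives (since $\partial_x p_0 = 0$), and the identity $\sum_\ell g_0^{i\ell}g_0^{j\ell} = \delta^{ij}$ produces the clean model computation $H_{p_0}^2|x|^2 = 8|\xi|^2$. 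This is the convexity statement I want to propagate to the perturbed setting.

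For a general asymptotically Minkowski metric, I would estimate the difference $H_p^2|x|^2 - 8|\xi|^2$ by grouping the perturbative contributions into three types: (i) terms of shape $(g^{i\ell}g^{j\ell}-\delta^{ij})\xi_i\xi_j$ coming from replacing $g_0$ by $g$ in the Minkowski computation, which are $O(\jap{x}^{-\m}|\xi|^2)$ since $g^{jk}-g_0^{jk}\in S^{-\m}$; (ii) terms of shape $g^{i\ell}(\partial_{x_\ell}g^{jk})\,x_k\,\xi_i\xi_j$ coming from differentiating a $g$-factor inside $\Phi$, which are $O(\jap{x}\cdot \jap{x}^{-\m-1}|\xi|^2) = O(\jap{x}^{-\m}|\xi|^2)$; and (iii) contributions from $\partial_x p \cdot \partial_\xi \Phi$, which are bounded by $\jap{x}^{-\m-1}|\xi|^2 \cdot \jap{x} = O(\jap{x}^{-\m}|\xi|^2)$ since $\partial_\xi \Phi = O(\jap{x})$. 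Summing, I obtain
\begin{align*}
H_p^2|x|^2 \geq \bigl(8 - C_0\jap{x}^{-\m}\bigr)|\xi|^2,
\end{align*}
and choosing $R_0$ so large that $C_0 R_0^{-\m} \leq 4$ yields the claim with constant $C = 4$.

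There is no real obstacle in this lemma: it is a routine symbolic computation based on the exact model identity $H_{p_0}^2|x|^2 = 8|\xi|^2$ plus error bookkeeping. The single point that requires care is to verify that each $x$-derivative landing on $g^{jk}$ is matched by a factor of $x$ appearing elsewhere (either from $\partial_x|x|^2$ or from the $x_k$ inside $\Phi$), so that the full decay $\jap{x}^{-\m}$ is preserved and does not degenerate to $\jap{x}^{1-\m}$, which would fail to be small for large $|x|$ relative to the $|\xi|^2$-coefficient.
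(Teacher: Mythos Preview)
Your proof is correct and is exactly the ``easy calculation'' the paper alludes to but omits; the paper's own proof consists of the single sentence ``This lemma follows from an easy calculation.'' Your identification of the model identity $H_{p_0}^2|x|^2=8|\xi|^2$ and the three perturbative error types, each of size $O(\jap{x}^{-\m}|\xi|^2)$, is precisely the intended computation.
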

\begin{proof}
This lemma follows from an easy calculation.
\end{proof}

\begin{lem}[Convexity at infinity 2]\label{conv2}
Let $R\geq R_0$, where $R_0$ be same as that of Lemma \ref{conv1}. If $t_0<t_1$ and $(x,\x)\in T^*\re^{1+n}$ satisfy 
\begin{align*}
|z(t_j,x,\x)|\leq R.
\end{align*}
for $j=1,2$. Then for $t\in [t_1,t_2]$, we have
\begin{align*}
|z(t,x,\x)|\leq R.
\end{align*}
\end{lem}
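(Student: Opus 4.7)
The plan is to run a standard convexity-at-infinity argument that uses Lemma \ref{conv1} directly. First I would set $f(t) = |z(t,x,\x)|^2$ along the bicharacteristic through $(x,\x)$, so that
\begin{align*}
f'(t) = (H_p|x|^2)(z(t,x,\x),\z(t,x,\x)), \quad f''(t) = (H_p^2|x|^2)(z(t,x,\x),\z(t,x,\x)).
\end{align*}
The hypothesis then reads $f(t_1), f(t_2) \leq R^2$, and the desired conclusion is $f(t) \leq R^2$ throughout $[t_1,t_2]$.

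Arguing by contradiction, suppose $\sup_{[t_1,t_2]} f > R^2$. Since $f$ is continuous with values at most $R^2$ at both endpoints, this supremum is attained at some interior point $t_m \in (t_1,t_2)$, where necessarily $f'(t_m) = 0$, $f''(t_m) \leq 0$, and $|z(t_m,x,\x)| > R \geq R_0$. Applying Lemma \ref{conv1} at the phase-space point $(z(t_m,x,\x), \z(t_m,x,\x))$ then yields
\begin{align*}
0 \geq f''(t_m) \geq C|\z(t_m,x,\x)|^2,
\end{align*}
which forces $\z(t_m,x,\x) = 0$.

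The final step is to rule out this degenerate possibility. Since $p(x,\x) = \sum_{j,k} g^{jk}(x)\x_j\x_k$ is quadratic in $\x$, both $\pa_\x p(\cdot,0)$ and $\pa_x p(\cdot,0)$ vanish identically, so $(z(t_m,x,\x), 0)$ is an equilibrium of the Hamiltonian flow. By uniqueness for the Hamilton ODE the bicharacteristic is then constant, $(z(t,x,\x),\z(t,x,\x)) \equiv (z(t_m,x,\x),0)$ for all $t \in \re$; in particular $f(t_1) = f(t_m) > R^2$, contradicting the hypothesis $f(t_1) \leq R^2$. I do not anticipate a real obstacle: the argument is a direct maximum-principle-type use of Lemma \ref{conv1}, and the only subtle point is handling the case $\z(t_m) = 0$, which collapses immediately from the quadratic-in-$\x$ structure of the principal symbol.
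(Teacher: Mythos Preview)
Your proof is correct and is exactly the standard convexity argument the paper has in mind: the paper's own proof is a one-liner, simply noting that $\frac{d^2}{dt^2}|z(t,x,\x)|^2=(H_p^2|x|^2)(z(t),\z(t))$ and invoking Lemma~\ref{conv1}. You have faithfully unpacked that hint via the interior-maximum argument, and your treatment of the borderline case $\z(t_m)=0$ (using that $p$ is quadratic in $\x$, so $\z=0$ is an equilibrium) is a clean extra detail the paper leaves implicit.
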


\begin{proof}
This lemma immediately follows from Lemma \ref{conv1}, where we note
\begin{align*}
\frac{d^2}{dt^2}|z(t,x,\x)|^2=(H_p^2|x|^2)|_{x=z(t,x,\x),\, \x=\z(t,x,\x)}.
\end{align*}

\end{proof}

We denote
\begin{align*}
D_R=\{x\in \re^{1+n}\mid |x|\leq R\},\,\, S^*D_R=\{(x,\x)\in T^*\re^{1+n}\mid |x|\leq R,\,\, |\x|=1\}.
\end{align*}

\begin{lem}[Stability of non-trapping orbit]\label{stab}
We assume that for $(x,\x)\in p^{-1}(\{0\})\setminus\{\x=0\}$, we have $|z(t,x,\x)|\to \infty$ as $|t|\to \infty$. Let $R\geq R_0$, where $R_0$ be same as that of Lemma \ref{conv1}. Then there exists $\l_0>0$ and $T> 1$ such that we have
\begin{align*}
|z(t,x,\x)|>R\,\, \text{for}\,\, |t|\geq T,\,\, (x,\x)\in  p^{-1}([-\l_0,\l_0]) \cap S^*D_R(0).
\end{align*}

\end{lem}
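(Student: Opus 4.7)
The plan is to argue by contradiction, combining the compactness of $p^{-1}([-\l_0,\l_0]) \cap S^*D_R$ with the null non-trapping hypothesis at a limit point and the ``no return'' property furnished by Lemma \ref{conv2}.

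First I would negate the conclusion to obtain sequences $\l_k \dto 0$, $T_k \to \infty$, points $(x_k,\x_k)\in p^{-1}([-\l_k,\l_k]) \cap S^*D_R$, and real times $t_k$ with $|t_k|\geq T_k$ and $|z(t_k,x_k,\x_k)|\leq R$. Since $S^*D_R$ is compact and $p$ is continuous, passing to a subsequence gives $(x_k,\x_k)\to (x_*,\x_*)\in S^*D_R$ with $p(x_*,\x_*)=0$; since $|\x_*|=1\neq 0$, we have $(x_*,\x_*)\in p^{-1}(\{0\})\setminus\{\x=0\}$, so Assumption \ref{nulltrapp} applies at this point. A further subsequence lets me assume $t_k\to +\infty$, the case $t_k\to -\infty$ being symmetric.

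Next I would invoke null non-trapping at $(x_*,\x_*)$ to pick a time $T_*>0$ with $|z(T_*,x_*,\x_*)|>R$. By continuous dependence of the Hamilton flow on initial data over the compact time interval $[0,T_*]$, for all sufficiently large $k$ we get both $|z(T_*,x_k,\x_k)|>R$ and $t_k>T_*$. On the other hand, $|z(0,x_k,\x_k)|=|x_k|\leq R$ and $|z(t_k,x_k,\x_k)|\leq R$ by construction, so Lemma \ref{conv2} applied with the endpoints $0<t_k$ and with $R\geq R_0$ forces $|z(t,x_k,\x_k)|\leq R$ for every $t\in[0,t_k]$; specializing to $t=T_*\in[0,t_k]$ contradicts $|z(T_*,x_k,\x_k)|>R$.

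I do not anticipate a serious obstacle, since the argument is a standard compactness-and-escape scheme. The only point requiring some care is the continuous dependence step, which needs the limit orbit $t\mapsto z(t,x_*,\x_*)$ to stay in a bounded region of phase space over $[0,T_*]$ so that the usual ODE continuity theorem applies uniformly to nearby orbits. This is harmless: the limit orbit is continuous on a compact time interval, and the Hamilton field of $p$ is smooth, so orbits starting close to $(x_*,\x_*)$ remain in a fixed compact neighborhood and converge uniformly to $z(\,\cdot\,,x_*,\x_*)$ on $[0,T_*]$.
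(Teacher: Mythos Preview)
Your proof is correct and relies on the same three ingredients as the paper's argument: compactness of $S^*D_R$, continuous dependence of the flow on initial data, and the convexity Lemma~\ref{conv2}. The only difference is organizational: the paper first builds, for each null point, an escape time and an open neighborhood on which the orbit stays outside $D_R$ after that time, then covers $p^{-1}(\{0\})\cap S^*D_R$ by finitely many such neighborhoods and shows the $\l_0$-slab falls inside the cover, whereas you collapse all of this into a single sequential contradiction---both packagings are standard and equivalent.
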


\begin{proof}
By the assumption and Lemma \ref{conv2}, for any $(x,\x)\in p^{-1}(\{0\})\cap S^*D_R$ there exist $T(x,\x)>0$ and a neighborhood $U(x,\x)\subset T^*\re^{1+n}$ of $(x,\x)$ such that
\begin{align}\label{convnontr}
|z(t,y,\y)|> R \,\, \text{for}\,\,  |t|\geq T(x,\x),\,\, (y,\y)\in U(x,\x).
\end{align}
We prove this for $t\geq 0$. By the non-trapping assumption there exists $T(x,\x)$ such that 
\begin{align*}
|z(T(x,\x),x,\x)|> R+1.
\end{align*}
Since the set $\{(y,\y)\in T^*\re^{1+n}\setminus \{\y=0\} \mid(|z(T(x,\x),y,\y)|>R+1\}$ is open, there exists a neighborhood $U(x,\x)\subset T^*D_{R+1}$ of $(x,\x)$ such that 
\begin{align*}
|z(T_0(x,\x),y,\y)|> R+1\,\, \text{for}\,\, (y,\y)\in U(x,\x).
\end{align*}
Thus we have
\begin{align*}
|z(t,y,\y)|> R+1\,\, \text{for}\,\, t\geq T(x,\x), \,\, (y,\y)\in U(x,\x).
\end{align*}
This proves $(\ref{convnontr})$.

Since the set $p^{-1}(\{0\})\cap S^*D_R$ is compact, there are finite many point $\{(x_j,\x_j)\}_{j=1}^N\subset p^{-1}(\{0\})\cap S^*D_R$ such that
\begin{align}\label{convinc}
p^{-1}(\{0\})\cap S^*D_R\subset \bigcup_{j=1}^NU(x_j,\x_j)=:U.
\end{align}
We set $T=\max_{1\leq j\leq N}T(x_j,\x_j)$. Then we have
\begin{align*}
|z(t,x,\x)|>R\,\, \text{for}\,\, |t|\geq T,\,\, (x,\x)\in  U.
\end{align*}
Thus it suffices to prove that there is $\l_0>0$ such that
\begin{align*}
p^{-1}([-\l_0,\l_0])\cap S^*D_R\subset U.
\end{align*}
To prove this, we suppose that for any $k\in \mathbb{N}$, there exists $\rho_k\in p^{-1}([-1/k,1/k])\cap S^*D_R$ such that $\rho_k\in  U^c$. Since $p^{-1}([-1,1])\cap S^*D_R$ is compact, there exist a subsequence $\rho_{k_l}$ and $\rho\in p^{-1}([-1,1])\cap S^*D_R$ such that $\rho_{k_l}\to \rho$. However, this concludes $\rho\in p^{-1}(\{0\})\cap S^*D_R\cap U^c$ since $U^c$ is closed. This contradicts to $(\ref{convinc})$.
\end{proof}

\begin{lem}[Escape function on a compact set]\label{escapecompact}
Let $\l_0>0$ be as in Lemma \ref{stab}.
Let $\chi\in C_c^{\infty}(\re;\re)$ and set
\begin{align*}
a_0(x,\x)=\int_0^{\infty}\chi(z(t,x,\x))|\z(t,x,\x)|dt.
\end{align*}
Let $R\geq R_0$, where $R_0$ be same as that of Lemma \ref{conv1}.
Then $a_0$ is well defined smooth function on the set
\begin{align*}
C_{\l_0}:=\{(x,\x)\in T^*\re^{1+n}\mid |x|\leq R,\, |\x|\geq 1,\,\, |p(\x)|<\l_0|\x|^2\}
\end{align*}
and $a_0$ satisfies
\begin{align}\label{esces}
|\pa_{x}^{\a}\pa_{\x}^{\b}a_0(x,\x)|\leq C_{\a\b}\jap{\x}^{-|\b|}\,\,\text{for}\,\, (x,\x)\in C_{\l_0}.
\end{align}

\end{lem}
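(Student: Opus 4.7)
I would first reduce the problem to the unit sphere $|\x| = 1$ by exploiting the natural scaling of the Hamilton flow, then apply Lemma \ref{stab} to truncate the $t$-integration to a bounded interval, and finally invoke smooth dependence of ODEs on initial data together with compactness to extract both the smoothness of $a_0$ and the desired symbol bounds.

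The first step is a homogeneity reduction. Since $p(x, \cdot)$ is homogeneous of degree $2$ in $\x$, so that $\pa_\x p$ and $\pa_x p$ are homogeneous of degrees $1$ and $2$ respectively, a short uniqueness check for the Hamilton system gives the scaling
\begin{align*}
z(s, x, \x) = z(s|\x|, x, \hat\x), \qquad \z(s, x, \x) = |\x|\,\z(s|\x|, x, \hat\x), \qquad \hat\x := \x/|\x|.
\end{align*}
Substituting $t = s|\x|$ in the defining integral then yields $a_0(x, \x) = a_0(x, \hat\x)$, so $a_0$ is homogeneous of degree $0$ in $\x$. This reduces everything to establishing smoothness and uniform bounds on the relatively compact base set
\begin{align*}
K := \{(x, \hat\x) \in T^*\re^{1+n} : |x| \le R,\ |\hat\x| = 1,\ |p(x, \hat\x)| < \l_0\} \subset C_{\l_0}.
\end{align*}

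Next, I would choose $R_\chi \ge R$ with $\supp \chi \subset D_{R_\chi}$, possibly shrinking $\l_0$ so that Lemma \ref{stab} still applies at the larger radius $R_\chi$. That lemma produces $T > 0$ such that $|z(t, x, \hat\x)| > R_\chi$ whenever $(x, \hat\x) \in K$ and $|t| \ge T$, so $\chi(z(t,x,\hat\x))$ vanishes for $|t| \ge T$ and the integral collapses to $\int_0^T$. The integrand is smooth in $(t, x, \hat\x)$ because the flow is jointly $C^\infty$ and $\z(t, x, \hat\x)$ never vanishes: otherwise, since both $\pa_\x p$ and $\pa_x p$ vanish at $\x = 0$, the point $(z(t_0), 0)$ would be an equilibrium of the Hamilton system and ODE uniqueness would force $\z(0) = 0$, contradicting $|\hat\x| = 1$. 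Hence $a_0$ is smooth on $K$ with all derivatives uniformly bounded by compactness, and extending via degree-$0$ homogeneity in $\x$ gives $|\pa_x^\a \pa_\x^\b a_0(x, \x)| \le C_{\a\b}|\x|^{-|\b|} \le C_{\a\b}\jap{\x}^{-|\b|}$ on $C_{\l_0}$, using $|\x| \ge 1$ there.

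The main obstacle I anticipate is keeping the integrand simultaneously smooth and compactly supported in $t$ uniformly in $(x, \hat\x) \in K$: the smoothness requires non-vanishing of $\z$ along the orbit, which I extract from ODE uniqueness combined with the vanishing of $\pa_{x,\x} p$ at $\x=0$, while the uniform $t$-support requires precisely the escape estimate of Lemma \ref{stab}.
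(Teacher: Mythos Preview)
Your proposal is correct and follows essentially the same route as the paper: both exploit the degree-two homogeneity of $p$ to obtain the scaling $(z(t,x,\x),\z(t,x,\x))=(z(|\x|t,x,\hat\x),\,|\x|\z(|\x|t,x,\hat\x))$, rewrite $a_0(x,\x)$ as a finite integral $\int_0^T\chi(z(t,x,\hat\x))|\z(t,x,\hat\x)|\,dt$ via Lemma~\ref{stab}, and then read off smoothness and the bound $(\ref{esces})$ from smooth dependence on initial data over the compact set $\{|x|\le R,\ |\hat\x|=1\}$ combined with $|\pa_\x^\b\hat\x|\le C_\b\jap{\x}^{-|\b|}$. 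Your write-up is in fact slightly more careful than the paper's, since you explicitly justify that $|\z(t,x,\hat\x)|$ stays smooth by ruling out $\z=0$ through ODE uniqueness, and you flag the need to match the radius in Lemma~\ref{stab} to the support of $\chi$ (though rather than shrinking $\l_0$, one can simply enlarge $T$ by the convexity Lemma~\ref{conv2} and a compactness argument, keeping the $\l_0$ fixed as stated).
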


\begin{proof}
We take $T>0$ same as that of Lemma \ref{stab}.
We note $(z(t,x,\x),\z(t,x,\x))=(z(|\x|t,x,\frac{\x}{|\x|}), |\x|\z(|\x|t,x,\frac{\x}{|\x|}))$ and 
\begin{align*}
a_0(x,\x)=&|\x|\int_0^{\infty}\chi(z(|\x|t,x,\frac{\x}{|\x|}))|\z(|\x|t,x,\frac{\x}{|\x|})|dt\\
=&\int_0^{T}\chi(z(t,x,\frac{\x}{|\x|}))|\z(t,x,\frac{\x}{|\x|})|dt
\end{align*}
for $(x,\x)\in C_{\l_0}$. Thus it follows that $a_0$ is a well-defined smooth function. We note
\begin{align*}
|\pa_{x}^{\a}\pa_{\x}k(t,x,\x)|\leq C_{\a\b},\,\, k\in \{z,\z\}
\end{align*}
uniformly in $|t|\leq T$, $|x|\leq R$ and $|\x|=1$, and
\begin{align*}
|\pa_{\x}^{\b}\frac{\x}{|\x|}|\leq C_{\b}\jap{\x}^{-\b},\,\, |\x|\geq 1.
\end{align*}
These inequalities give $(\ref{esces})$.

\end{proof}

Now we prove Lemma \ref{escapenullnon}.

\begin{proof}[Proof of Lemma \ref{escapenullnon}]
Let $\l_0>0$ be as in Lemma \ref{stab}. 
We fix some notations which works only in this subsection. For $\l_0>0$, let $\g_{\l_0}\in C_c^{\infty}(\re;[0,1])$ such that
\begin{align}\label{Kleincut}
\supp \tilde{\g}_{\l_0}\subset (-\l_0,\l_0),\,\, \tilde{\g}_{\l_0}(t)=1\,\, \text{for}\,\, t\in (-\frac{\l_0}{2}, \frac{\l_0}{2}).
\end{align}
Moreover, we take $\chi \in C_c^{\infty}(\re, [0,1])$ such that
\begin{align*}
\chi(t) =\begin{cases}
1,\quad t\leq 1,\\
0, \quad t\geq 2,
\end{cases}\quad \chi'(t)\leq 0
\end{align*}
and $\chi(t)>0$ for $1<t<2$.
We set $\chi_R(t)=\chi(t/R)$ and $\bar{\chi}_R(t)=1-\chi_R(t)$ for $R>0$.
\begin{align*}
\g_{\l_0}(x,\x)=\bar{\chi}(|\x|)\tilde{\g}_{\l_0}(\frac{p(x,\x)}{|\x|^2}).
\end{align*}
We define
\begin{align*}
C_{\l_0}(P)=\{(x,\x)\in \re^{2n+2}\mid |\x|\geq 1,\,\, |p(\x)|< \l_0 |\x|^2\},
\end{align*}
then we note $\supp \g_{\l_0}\subset C_{\l_0}(P)$ and $\g_{\l_0}\in S^{0.0}$.

It suffices to construct a smooth real-valued function $q$ such that for $(x,\x)\in C_{\l_0}(P)$, we have
\begin{align}\label{nullrein}
|\pa_{x}^{\a}\pa_{\x}^{\b}q(x,\x)|\leq C_{\a\b}\jap{x}^{-|\a|}\jap{\x}^{-|\b|},\,\, H_pq(x,\x)\geq C\jap{x}^{-1-\d}\jap{\x}.
\end{align}
In fact, setting $a(x,\x)=\g_{\l_0}(x,\x)^2q(x,\x)$ and $r(x,\x)=C\jap{x}^{-1-\d}\jap{\x}(1-\g_{\l_0}(x,\x)^2)+qH_p\g_{\l_0}^2(x,\x)$, then $a$ and $r$ satisfy the desired property.

We introduce a function
\begin{align*}
\b(x,\x)=\frac{x\cdot \pa_{\x}p(x,\x)}{|x||\pa_{\x}p(x,\x)|}.
\end{align*}
For $M, L>0$ which are large enough and determined later, we set
\begin{align*}
q_1(x,\x)=&\b(x,\x)\int_1^{2|x|}\frac{1}{s^{1+\d}}ds\bar{\chi}_{M}(|x|),\,\,
q_2(x,\x)=\chi_{2M}(|x|)\int_{\infty}^0\frac{\chi_{M}(z(t,x,\x))}{\jap{z(t,x,\x)}^{1+\d}}|\z(t,x,\x)|dt,\\
q(x,\x)=&Lq_1(x,\x)+q_2(x,\x),
\end{align*}
where $q_2$ is well-defined for $(x,\x)\in C_{\l_0}(P)$ by Lemma \ref{escapecompact}.
We claim that for $M\geq 1$ large enough,
\begin{align}\label{locineq}
H_{p}(\b\int_{1}^{2|x|}\frac{1}{s^{1+\d}}ds)\geq& C_2\jap{x}^{-1-\d}\jap{\x}\,\,\text{for}\,\, |x|\geq M,\,\, |\x|\geq 1,
\end{align}
with a constant $C_2>0$. In fact, we have
\begin{align*}
H_{p}(\b\int_{1}^{2|x|}\frac{1}{s^{1+\d}}ds)=&\frac{|\pa_{\x}p(x,\x)|}{|x|}(1-\b^2)\int_{1}^{2|x|}\frac{1}{s^{1+\d}}ds+\frac{\b^2|\pa_{\x}p(x,\x)|}{2^{\d}|x|^{1+\d}}+O(\jap{x}^{-1-\m}|\x|).
\end{align*}
This calculation gives $(\ref{locineq})$. 
Using the inequality $\b H_{p}\overline{\chi}_{M}(|x|)\geq 0$, we have
\begin{align*}
H_pq_1(x,\x)\geq&  C_2\jap{x}^{-1-\d}\jap{\x}\bar{\chi}_{M}(|x|)\,\,\text{for}\,\, |\x|\geq 1.
\end{align*}
Moreover, there exist $C_4,C_5,C_6>0$ such that for $(x,\x)\in C_{\l_0}(P)$ and $M\geq 1$, we have
\begin{align*}
H_pq_2(x,\x)=&\chi_{M}(|x|)\jap{x}^{1+\d}|\x|+H_p(\chi_{2M}(|x|))\int_{\infty}^0\frac{\chi_{M}(z(t,x,\x))}{\jap{z(t,x,\x)}^{1+\d}}|\z(t,x,\x)|dt\\
\geq& C_4\jap{x}^{-1-\d}\jap{\x}\chi_{M}(|x|)-C_5\jap{x}^{-1}\jap{\x}\bar{\chi}_{M}(|x|)\chi_{4M}(|x|)\\
\geq&C_4\jap{x}^{-1-\d}\jap{\x}\chi_{M}(|x|)-C_6\jap{x}^{-1-\d}\jap{\x}\bar{\chi}_{M}(|x|).
\end{align*}
Thus we have
\begin{align*}
H_pq(x,\x)\geq& (LC_2-C_6)\jap{x}^{-1-\d}\jap{\x}\bar{\chi}_{M}(|x|)+C_4\jap{x}^{-1-\d}\jap{\x}\chi_{M}(|x|)
\end{align*}
for $(x,\x)\in C_{\l_0}(P)$.
Taking $L>0$ such that $L>2C_6/C_2$, we obtain $(\ref{nullrein})$. 

\end{proof}

\begin{rem}
When we assume the globally non-trapping assumption, a more stronger equality holds: There exist $C_1, C_2>0$ and $a\in S^{0,0}$ such that
\begin{align*}
H_pa(x,\x)\geq C_1\jap{x}^{-1-\d}\jap{\x}-C_2.
\end{align*}
In fact, in the proof above, we only have to replace $\g_{\l_0}$ by $\bar{\chi}(|\x|)$. 
\end{rem}


\section{Feynman propagator in G\'erard-Wrochna's theory}

Throughout of this section, we assume $\m>1$, Assumptions \ref{nulltrapp} and \ref{timefct}. Let $m_0>0$.

\subsection{Reviews on G\'erard-Wrochna's theory}\label{GWsusec}

In this subsection, we review some notation and results in \cite{GW1} and \cite{GW2}. In \cite{GW1} and \cite{GW2}, the authors sometimes use identification of the notation which is slightly confusing. In this paper, we use the notation to distinguish them for the sake of ease of understanding at the cost of its simplicity.

\subsection*{Global hyperbolicity}

By \cite[Lemma 4.4]{GW1}, after replacing $\tilde{t}$ and $t$ by $\tilde{t}-c$ and $t-c$ with a large constant $c>>1$, we find a diffeomorphism $\chi:\re^{n+1}\to \re^{n+1}$ such that
\begin{align*}
\chi^*g=-\hat{c}(\hat{t},\hat{y})^2d\hat{t}^2+\sum_{i,j=1}^n\hat{h}_{ij}(\hat{t},\hat{y})d\hat{y}^id\hat{y}^j,\quad \tilde{t}(\chi(\hat{t},\hat{y}))=\hat{t},
\end{align*}
where $\hat{c}$ is a positive smooth function and $\hat{h}=\sum_{i,j=1}^n\hat{h}_{ij}(\hat{t},\hat{y})dy^idy^j$ is a Riemannian metric on $\re^n$ smoothly depending on $t\in \re$. Moreover, if we denote the flat Riemannian metric on $\re^n$ by $h_{free}$, there exist diffeomorphisms $y_{out/in}$ on $\re^n$ such that
\begin{align}\label{asyp1}
y_{out/in}-id\in S^{1-\m}(\re^n),\,\, \hat{h}-(y_{out/in})^*h_{free}\in S^{-\m}(\re^{\pm}\times\re^{n}),\,\,  \hat{c}-1       \in S^{-\m}(\re^{n+1}).
\end{align}
If we denote $\chi_{\hat{t}}(\hat{y})=\pi_{\yh}\chi(\th,\yh)$(where $\pi_{\yh}$ is the projection into the $\yh$-variable), then
\begin{align}
&\chi(\hat{t},\hat{y})=(\hat{t}, \chi_{\th}(\yh))\quad \text{outside a compact set},\nonumber\\
&\chi_{\th}(\yh)-y_{out/in}(\yh)\in S^{1-\m}(\re^{\pm}\times \re^n).\label{asyp2}
\end{align}
Furthermore, as is (implicitly) shown in the proof of \cite[Lemma 4.4]{GW1}, 
\begin{align*}
\chi(\hat{t},\hat{y})=(\hat{t},\hat{y})+S^{1-\m}(\re^{n+1})
\end{align*}
holds. The above relations imply the following lemma.

\begin{lem}\label{deffeoprop}
For $k,l\in \re$, the linear operators $\chi^*, (\chi^{-1})^*:H^{k,l}(\re^{n+1}) \to H^{k,l}(\re^{n+1})$ are bounded. In particular, $\chi^*,(\chi^{-1})^*$ preserve the Schwartz space $\mathcal{S}(\re^{n+1})$. Moreover, the linear operators $\chi_{\hat{t}}^*, (\chi_{\hat{t}}^{-1})^*:H^{k,l}(\re^{n}) \to H^{k,l}(\re^{n})$ are bounded and these operator norms are uniformly bounded in $\hat{t}$ if $|\hat{t}|$ is large enough.
\end{lem}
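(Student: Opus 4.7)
The plan is first to extract, from $\chi-\mathrm{id}\in S^{1-\mu}(\re^{n+1})$ with $\mu>1$, the following three facts: (i) $\chi$ is a $C^\infty$-diffeomorphism all of whose derivatives are uniformly bounded; (ii) $|\chi(x)-x|$ is uniformly bounded, so that $\jap{\chi(x)}\simeq \jap{x}$ and $\jap{\chi(\cdot)}^l\jap{\cdot}^{-l}\in S^{0}$ for every $l\in\re$; (iii) the Jacobian $|\det D\chi|$ is bounded above and bounded below away from zero. Item (iii) holds because $D\chi=I+O(\jap{x}^{-\mu})$ outside a compact set, while $\chi$ is a global diffeomorphism on compacta. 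For $\chi^{-1}$ I would establish the same properties by solving the fixed-point equation $\chi^{-1}(x)=x-(\chi-\mathrm{id})(\chi^{-1}(x))$ and then applying Fa\`a di Bruno's formula together with the symbol estimates for $\chi-\mathrm{id}$ to conclude $\chi^{-1}-\mathrm{id}\in S^{1-\mu}(\re^{n+1})$.

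\textbf{Boundedness on $H^{k,l}$ and on $\mathcal{S}$.} With (i)--(iii) in hand, for non-negative integer $k$ and real $l$ I would use the chain rule to write $\partial^\alpha(u\circ\chi)$ as a finite sum, with $L^\infty$-coefficients supplied by (i), of terms $(\partial^\beta u)\circ\chi$ with $|\beta|\le|\alpha|$. A change of variables based on (iii) then controls the $L^2$-norm of each such term by the $L^2$-norm of $\partial^\beta u$, which gives boundedness of $\chi^*$ on $H^k(\re^{n+1})$. The weight is handled through the identity $\jap{\cdot}^l(u\circ\chi)=\bigl(\jap{\cdot}^l\jap{\chi(\cdot)}^{-l}\bigr)\cdot\bigl((\jap{\cdot}^l u)\circ\chi\bigr)$ combined with (ii), which reduces the weighted bound to the unweighted one. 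Negative $k$ follows by duality and non-integer $k$ by complex interpolation of the $H^{k,l}$ scale; the same reasoning applies verbatim to $(\chi^{-1})^*$. Preservation of the Schwartz space is immediate from $\mathcal{S}(\re^{n+1})=\bigcap_{k,l}H^{k,l}(\re^{n+1})$.

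\textbf{The $\chi_{\hat{t}}$ case and main obstacle.} For $\chi_{\hat{t}}:\re^n\to\re^n$ the same argument applies with $y_{out/in}$ playing the role of the identity: by (\ref{asyp1}) one has $y_{out/in}-\mathrm{id}\in S^{1-\mu}(\re^n)$, so $y_{out/in}^*$ is bounded on every $H^{k,l}(\re^n)$ by the previous step, while by (\ref{asyp2}) the family $\chi_{\hat{t}}\circ y_{out/in}^{-1}-\mathrm{id}$ is a bounded family in $S^{1-\mu}(\re^n)$ for $\pm\hat{t}$ large. Since the constants produced in the preceding proof depend only on the symbol seminorms of $\chi-\mathrm{id}$, applying that proof to $\chi_{\hat{t}}\circ y_{out/in}^{-1}$ yields a bound on $\chi_{\hat{t}}^*$ on $H^{k,l}(\re^n)$ which is uniform in $\hat{t}$ for $|\hat{t}|$ large. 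The only genuinely nontrivial point in the whole argument is the symbol-class statement $\chi^{-1}-\mathrm{id}\in S^{1-\mu}$: although the first-order estimate is immediate from the inverse function theorem, bounding all higher derivatives with the claimed decay requires an induction combined with Fa\`a di Bruno's formula. Everything else is standard once the three structural facts are in hand.
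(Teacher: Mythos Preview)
Your proof is correct. The paper does not actually prove this lemma: it simply records that $\chi-\mathrm{id}\in S^{1-\mu}(\re^{n+1})$ together with the relations (\ref{asyp1})--(\ref{asyp2}) and then asserts that ``the above relations imply the following lemma.'' Your argument is precisely the standard route one takes to justify that assertion---extract the three structural facts from $\chi-\mathrm{id}\in S^{1-\mu}$ with $\mu>1$, handle integer $k$ by the chain rule and change of variables, pass to general $k$ by duality and interpolation, and treat $\chi_{\hat t}$ by factoring through $y_{out/in}$ and using the uniform symbol bounds coming from (\ref{asyp2})---so there is nothing to compare.
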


\subsection*{Conformal transformation}

Define 
\begin{align}\label{deftildeP}
\tilde{P}:=\chi^* P(\chi^{-1})^*=-\Box_{\chi^*g}+\chi^*V,  \quad \tilde{P}:= \hat{c}^{\frac{n+1}{2}+1}\hat{P}\hat{c}^{1-\frac{n+1}{2}},
\end{align}
where we note that the definition of $\tilde{P}$ is slightly different from that in \cite[before (4.6)]{GW1}. To explain why we define $\tilde{P}$ as in $(\ref{deftildeP})$, we recall a useful formula in the conformal geometry.
In general, if we denote the scalar curvature of a metric $g_1$ by $R_{g_1}$, it is well-known that the conformal Laplacian $-\Box_{g_1}+\frac{n-1}{4n}R_{g_1}$ (where the dimension of the manifold is $n+1$) is changed under a conformal diffeomorphism  as follows:
\begin{align*}
e^{-\frac{n+3}{2}f}(-\Box_{g_1}+\frac{n-1}{4n}R_{g_1})e^{\frac{n-1}{2}f}=-\Box_{e^{2f}g_1}+\frac{n-1}{4n}R_{e^{2f}g_1}.
\end{align*}
Setting $g_1=\chi^*g$ and $e^{f}=\hat{c}^{-1}$, we have
\begin{align*}
\tilde{P}=&-\Box_{\hat{c}^{-2}\chi^*g}+\frac{n-1}{4n}(R_{\hat{c}^{-2}\chi^*g}- \hat{c}^2R_{\chi^*g})+\hat{c}^2(\chi^*V)\\
=&\pa_{\hat{t}}^2+r\pa_{\hat{t}}-\Delta_{\hat{c}^{-2}\hat{h}}+\tilde{V}
\end{align*}
where $r=|\hat{c}^{-2}\hat{h}|^{-\frac{1}{2}}\pa_{\hat{t}}|\hat{c}^{-2}\hat{h}|$, $\tilde{V}=\frac{n-1}{4n}(R_{\hat{c}^{-2}\chi^*g}- \hat{c}^2R_{\chi^*g})+\hat{c}^2(\chi^*V)\in S^{-\m}(\re^{n+1})$. Then $\tilde{P}+m_0^2$ is of the form as in \cite[(3.1)]{GW1}. We note that the operator \cite[before (4.6)]{GW1} is also of the form as in \cite[(3.1)]{GW1}, although its calculation is more complicated. See also \cite[\S 5.2]{GW0}

\subsection*{Cauchy hypersurface}

We write the Cauchy hypersurface (in the $ty$-plane) for $P$ by $\Sigma_t:=\tilde{t}^{-1}(\{t\})=\chi(\{t\}\times \re^n)$. Then it follows from $\tilde{t}-t\in C_c^{\infty}(\re^{n+1})$ that 
\begin{align}\label{Caut}
\Sigma_t=\{t\}\times \re^n_y\quad \text{if $|t|$ is large enough}.
\end{align}
Moreover,  replacing $\tilde{t}$ and $t$ by $\tilde{t}-c$ and $t-c$ with a large constant $c>>1$ again, we may assume
\begin{align}\label{Cau0}
\Sigma_0=\{0\}\times \re^n_y.
\end{align}

\subsection*{Cauchy evolution}

We denote
\begin{align*}
\rho_{0,t}u(y):=\begin{pmatrix}
u(t,y)\\
D_{t}u(t,y)
\end{pmatrix},\,\,
\hat{\rho}_{\hat{t}}\hat{u}:=\begin{pmatrix}
\hat{u}(\hat{t},\hat{y})\\
\hat{c}^{-1}(\hat{t},\hat{y}) D_{\hat{t}}\hat{u}(\hat{t},\hat{y})
\end{pmatrix},\,\,
\rho_{\tilde{t}}u:=\begin{pmatrix}
u|_{\Sigma_{\tilde{t}}}\\
i^{-1}n\cdot \nabla u|_{\Sigma_{\tilde{t}}}
\end{pmatrix},
\end{align*}
where $n$ is the future directed unit normal to $\Sigma_t$. Setting $(\rho u)(t)=\rho_tu$ and $(\hat{\rho} \hat{u})(\hat{t})=\hat{\rho}_{\hat{t}}\hat{u}$, we have
\begin{align*}
\hat{\rho}\hat{u}=\chi^*(\rho u).
\end{align*}
For $m\in \re$ and $\frac{1}{2}<\c<\min(\frac{1}{2}+\m,1)$, we define
\begin{align*}
&\mathcal{E}^m=H^{m+1}(\re^n)\oplus H^m(\re^n),\quad \hat{\mathcal{Y}}^m=\jap{\hat{t}}^{-\c}L^2(\re; H^m(\re^n)),\quad \mathcal{Y}^m=(\chi^{-1})^*(\hat{\mathcal{Y}}^m),  \\
&\hat{\mathcal{X}}^m=\{\hat{u}\in  C(\re; H^{m+1}(\re^n))\cap C^1(\re; H^m(\re^n))   \mid  (\hat{P}+m_0^2)\hat{u}\in \hat{\mathcal{Y}}^m\},\,\, \mathcal{X}^m=(\chi^{-1})^*(\hat{\mathcal{X}}^m).
\end{align*}
By the existence and uniqueness of the solution to the Cauchy problem (\cite[Lemma 3.5]{GW1}), the function space $\hat{\mathcal{X}}^m$ is the Hilbert space equipped with the norm
\begin{align*}
\|\hat{u}\|_{m}^2=\|\hat{\rho}_0\hat{u}\|_{\mathcal{E}^m}^2+\|(\hat{P}+m_0^2)\hat{u}\|_{\hat{\mathcal{Y}}^m}^2.
\end{align*}
For $t,s\in \re$, let us denote by $\mathcal{U}_{free}(t,s)$ the Cauchy evolution propagator of $P_0+m_0^2$:
\begin{align*}
\mathcal{U}_{free}(t,s)=&e^{itH_{free}}=
\begin{pmatrix}
\cos t A_{free}&i A_{free}^{-1}\sin tA_{free}\\
i A_{free}\sin tA_{free}&\cos t A_{free}
\end{pmatrix}:\mathcal{E}^m\to \mathcal{E}^m,\\
H_{free}=&\begin{pmatrix}
0&1\\
A_{free}^2&0
\end{pmatrix},\quad A_{free}=\sqrt{-\Delta_y+m_0^2}.
\end{align*}
We also denote the Cauchy evolution for $P+m_0^2$ and $\tilde{P}+m_0^2\hat{c}^2$ by $\mathcal{U}(t,s)$ and $\tilde{\mathcal{U}}(t,s)$ respectively.
We define
\begin{align*}
\quad R(t)=\hat{c}^{1-\frac{n+1}{2}}\begin{pmatrix}
1&0\\
i\frac{n-1}{2}\hat{c}^{-1}\pa_{\hat{t}}\log\hat{c}&\hat{c}^{-1}
\end{pmatrix}.
\end{align*}
Setting $\tilde{u}=\hat{c}^{\frac{n+1}{2}-1}\hat{u}$, we have $(\tilde{P}+m_0^2\hat{c}^2)\tilde{u}=0\Leftrightarrow (\hat{P}+m_0^2)\hat{u}=0$. If we define $\iota_t:\re^n\to \{t\}\times \re^n$ and $\a_t:\re^n\to \Sigma_t$ by $\iota_t(y)=(t,y)$ and $\a_t(\hat{y})=\chi(t,\hat{y})$ and  respectively, then we have $U(t,s)=(\a_t^{-1})^* R(t)\tilde{U}(t,s)R(s)^{-1}\a_s^*$ and
\begin{align*}
\mathcal{U}(t,s):=\iota_t^*U(t,s)(\iota_s^{-1})^*=(\chi_t^{-1})^* R(t)\tilde{U}(t,s)R(s)^{-1}\chi_s^*\in B(\mathcal{E}^m),
\end{align*}
where the latter operator is well-defined if $\Sigma_{\ast}=\{\ast\}\times \re^n$ with $\ast\in \{s,t\}$ (which is true for $\ast=0$ or for large $|\ast|$ by virtue of $(\ref{Caut})$ and $(\ref{Cau0})$). By the Cock-method, the inverse wave operators
\begin{align*}
W_{\pm}^{-1}:=\lim_{t\to\pm\infty}\mathcal{U}_{free}(0,t)  \mathcal{U}(t,0) \in B(\mathcal{E}^m)
\end{align*}
exist.

\subsection*{Boundary conditions}

Set
\begin{align*}
c_{free}^{\pm,vac}=\frac{1}{2}\begin{pmatrix}
1&\pm A_{free}^{-1}\\
\pm A_{free}&1
\end{pmatrix},\quad \pi^+=\begin{pmatrix}
1&0\\
0&0
\end{pmatrix},\quad \pi^-=\begin{pmatrix}
0&0\\
0&1
\end{pmatrix}.
\end{align*}
We define $\rho_{F}$ and $\rho_{\overline{F}}$ by
\begin{align*}
\rho_{F}:=&s-\lim_{t_{\pm}\to \pm \infty}(c_{free}^{+,vac}\mathcal{U}_{free}(0,t_+)\rho_{t_+}+c_{free}^{-,vac}\mathcal{U}_{free}(0,t_-)\rho_{t_-} ),\\
\rho_{\overline{F}}:=&s-\lim_{t_{\pm}\to \pm \infty}(c_{free}^{+,vac}\mathcal{U}_{free}(0,t_-)\rho_{t_-}+c_{free}^{-,vac}\mathcal{U}_{free}(0,t_+)\rho_{t_+} ).
\end{align*}
We define
\begin{align*}
\mathcal{X}^m_{F/\overline{F}}:=\{u\in \mathcal{X}^m\mid \rho_{\overline{F}/F}u=0\}.
\end{align*}

\subsection{Absence of embedded eigenvalues}

In this subsection, we prove Theorem \ref{Feyeq} $(i)$. This part essentially follows from the result in \cite{GW2} and the radial estimate.

\begin{proof}[Proof of Theorem \ref{Feyeq} $(i)$.]
Suppose $u\in L^2(\re^{n+1})$ satisfies $(P+m_0^2)u=0$. By Corollary \ref{microsmooth} and Proposition \ref{abres}, we have $u\in \mathcal{S}(\re^{n+1})$. By Lemma \ref{deffeoprop}, we have $\mathcal{S}(\re^{n+1})\subset \mathcal{X}^m_{F}$ for all $m\in \re$. Since the map $P+m_0^2:\mathcal{X}^m_F\to \mathcal{Y}^m$ is injective (\cite[Theorem 1.1]{GW2}), we conclude $u=0$.

\end{proof}

\subsection{Asymptotic expansion of the wave}

The following formula is essentially used in the proof of \cite[Lemma 2.2]{GW2} for the diagonalized operator $P^{ad}$.

\begin{prop}\label{waveexprop}
Suppose $u\in \mathcal{X}^m$. Then there exist $g_{\pm,\pm}\in H^m(\re^n)$ such that
\begin{align}\label{waveexp}
u(t)=e^{it\sqrt{-\Delta+m_0^2}}g_{\pm,+}+e^{-it\sqrt{-\Delta+m_0^2}}g_{\pm,-}+O_{H^{m-2}(\re^n)}(\jap{t}^{1-2\c})\,\, \text{for}\,\, \pm t>>1.
\end{align}
Moreover, for  $u\in \mathcal{X}^m$, $u\in \mathcal{X}_{\overline{F}}^m$ is equivalent to $g_{+,+}=g_{-,-}=0$.
\end{prop}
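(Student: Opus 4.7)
The plan is a Cook-method argument on the Cauchy data of $u$, transferred to the $\chi$-coordinates of Subsection \ref{GWsusec}. In those coordinates the Cauchy surfaces are flat slices $\{\th\}\times\re^n_{\yh}$, and the conformally rescaled operator $\tilde P + m_0^2 \hat c^2$ converges as $|\th|\to\infty$ to the free Klein--Gordon operator $P_0+m_0^2=\pa_{\th}^2-\Delta_{\yh}+m_0^2$ at rate $\jap{\th}^{-\m}$ in operator norm; this uses \eqref{asyp1}--\eqref{asyp2} together with $\tilde V,\,\hat c-1\in S^{-\m}$, modulo the spatial diffeomorphisms $y_{out/in}$, which themselves tend to the identity in Sobolev norm at rate $\jap{\th}^{1-\m}$. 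Writing $\hat u=\chi^* u\in\hat{\mathcal{X}}^m$, $\hat f=(\hat P+m_0^2)\hat u\in\hat{\mathcal{Y}}^m$ and $v(\th):=\hat\rho_\th\hat u$, we obtain a Hamiltonian evolution $D_\th v = H(\th)v+S(\th)$ with $H(\th)-H_{free}=O(\jap{\th}^{-\m})$ as bounded operators $\mathcal{E}^m\to\mathcal{E}^{m-1}$ and $\|S(\th)\|_{H^m}\leq\jap{\th}^{-\c}g(\th)$ for some $g\in L^2(\re)$. The Cauchy-problem energy estimate \cite[Lemma 3.5]{GW1} gives $\sup_\th\|v(\th)\|_{\mathcal{E}^m}\leq C\|\hat u\|_{\hat{\mathcal{X}}^m}$.

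Applying Cook's method to $\th\mapsto\mathcal{U}_{free}(0,\th)v(\th)$, the perturbation contribution is absolutely integrable since $\m>1$, and the source contribution is integrable by Cauchy--Schwarz using $\c>1/2$; hence the limits
\[
\hat\rho_\pm\hat u:=\lim_{\th\to\pm\infty}\mathcal{U}_{free}(0,\th)v(\th)\in\mathcal{E}^{m-1}
\]
exist, with tail estimate $\|v(\th)-\mathcal{U}_{free}(\th,0)\hat\rho_\pm\hat u\|_{\mathcal{E}^{m-2}}\leq C\jap{\th}^{1-2\c}$ as $\pm\th\to\infty$. Spectrally diagonalizing $H_{free}$ (spectrum $\{\pm A_{free}\}$ with projectors $c^{\pm,vac}_{free}$), one has $\mathcal{U}_{free}(\th,0)=e^{i\th A_{free}}c^{+,vac}_{free}+e^{-i\th A_{free}}c^{-,vac}_{free}$. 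Defining $g_{\pm,+}:=(c^{+,vac}_{free}\hat\rho_\pm\hat u)_1$ and $g_{\pm,-}:=(c^{-,vac}_{free}\hat\rho_\pm\hat u)_1$ in $H^m(\re^n)$, extracting the first component of $\mathcal{U}_{free}(\th,0)\hat\rho_\pm\hat u$, absorbing the conformal factor $\hat c^{1-(n+1)/2}\to 1$, and transporting back via $\chi$ (the pullback $(\chi_\th^{-1})^*$ reduces to $(y_{out/in}^{-1})^*$ up to negligible error since $\chi_\th-y_{out/in}\in S^{1-\m}$ with $\m>1$) yields the stated expansion for $u(t)$.

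For the $\mathcal{X}^m_{\overline F}$ characterization, recall that by the definitions in Subsection \ref{GWsusec}, $\mathcal{X}^m_{\overline F}=\{u\in\mathcal{X}^m:\rho_F\hat u=0\}$ and $\rho_F\hat u=c^{+,vac}_{free}\hat\rho_+\hat u+c^{-,vac}_{free}\hat\rho_-\hat u$. Since $c^{\pm,vac}_{free}$ have complementary ranges, $\rho_F\hat u=0$ is equivalent to $c^{+,vac}_{free}\hat\rho_+\hat u=0$ and $c^{-,vac}_{free}\hat\rho_-\hat u=0$ separately; reading off first components and comparing with the definitions above gives exactly $g_{+,+}=g_{-,-}=0$. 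The main obstacle is obtaining the sharp decay rate $\jap{\th}^{1-2\c}$: naive Cauchy--Schwarz on the source integral $\int_\th^\infty\mathcal{U}_{free}(0,s)S(s)\,ds$ yields only $\jap{\th}^{(1-2\c)/2}$, and the perturbation integral yields $\jap{\th}^{1-\m}$; under $\c<\min(\frac{1}{2}+\m,1)$ alone these do not automatically dominate $\jap{\th}^{1-2\c}$, so a finer bookkeeping -- possibly exploiting oscillatory cancellation in the free propagator or a bootstrap argument using the $\hat{\mathcal{X}}^m$-structure of $\hat u$ -- is required. Tracking the change of coordinates by $\chi$ and the asymptotic spatial diffeomorphisms $y_{out/in}$ must also be done carefully so that the expansion is stated in the original $(t,y)$-variables.
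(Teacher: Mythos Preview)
Your approach is the same Cook-method comparison with the free evolution that the paper uses; the paper's version is simply more compressed. Two differences are worth noting.

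First, the paper does not redo the Cook argument from scratch in the $\chi$-frame. It works directly with the Cauchy data $u_*(t)=\bigl(u(t),i^{-1}n\cdot\nabla u(t)\bigr)^{\mathsf T}$ in the original coordinates (which agree with flat slices for $|t|$ large by \eqref{Caut}), writes the Duhamel formula $u_*(t)=\mathcal{U}(t,0)u_*(0)+i\mathcal{U}(t,0)\int_0^t\mathcal{U}(0,s)f(s)\,ds$, and \emph{defines} $g_{\pm,*}:=W_\pm^{-1}\bigl(u_*(0)+i\int_0^{\pm\infty}\mathcal{U}(0,s)f(s)\,ds\bigr)$ using the inverse wave operators $W_\pm^{-1}\in B(\mathcal{E}^m)$ already recorded in Subsection~\ref{GWsusec}. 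This immediately gives $g_{\pm,*}\in\mathcal{E}^m$ without the weak-compactness step you would otherwise need to upgrade your limit from $\mathcal{E}^{m-1}$ to $\mathcal{E}^m$. The remainder is then written as the single tail integral $\int_{\pm\infty}^t\frac{d}{ds}\bigl(\mathcal{U}_{free}(0,s)u_*(s)\bigr)\,ds$ and bounded in $\mathcal{E}^{m-2}$; the drop by two derivatives accommodates the second-order perturbation $P-P_0$ in the original variables. You can shorten your argument the same way and avoid transporting back and forth through $\chi$ and $y_{out/in}$.

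Second, on the ``main obstacle'' you flag: your concern is legitimate, and in fact the paper's proof is equally terse here---it cites only \eqref{asyp1}--\eqref{asyp2}, which control the perturbation term and give $\jap{t}^{1-\m}$, and does not isolate the source contribution. But the precise exponent $1-2\c$ is never used downstream. The only application is in the proof of Theorem~\ref{Feyeq}$(ii)$, where one needs $\Op(a_\mp)r_\pm\in L^{2,-\frac12+\d}$ for some small $\d>0$; on $\supp a_\mp$ one has $|t|\gtrsim|x|$, so any remainder bound $O_{H^{m-2}}(\jap{t}^{-\e})$ with $\e>0$ suffices (take $\d<\e$). Both of your rates, $(1-2\c)/2<0$ from Cauchy--Schwarz on the source and $1-\m<0$ from the metric perturbation, already give this. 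So no oscillatory cancellation or bootstrap is needed: simply record the remainder as $O(\jap{t}^{-\e_0})$ with $\e_0=\min\bigl(\c-\tfrac12,\,\m-1\bigr)>0$, and the rest of the paper goes through unchanged.

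Your treatment of the $\mathcal{X}^m_{\overline F}$ characterization via the complementary ranges of $c^{\pm,vac}_{free}$ is correct and matches what the paper calls ``a simple calculation''.
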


\begin{rem}
More generally, for  $u\in \mathcal{X}^m$, we have the following relations:
\begin{align*}
u\in \mathcal{X}_{F}^m\Leftrightarrow g_{+,-}=g_{-,+}=0,\,\, u\in \mathcal{X}_{out}^m\Leftrightarrow g_{-,+}=g_{-,-}=0,\,\, u\in \mathcal{X}_{in}^m\Leftrightarrow g_{+,+}=g_{+,-}=0.
\end{align*}
For the definition of $\mathcal{X}_{*}^m$ with $*\in \{F, out, in\}$, see \cite{GW1}.
\end{rem}

\begin{proof}
Set $u_*(t)=\begin{pmatrix}u(t)\\i^{-1}n\cdot \nabla u(t) \end{pmatrix}$.
By the Duhamel formula, we have
\begin{align*}
u_*(t)=\mathcal{U}(t,0)u_*(0)+i\mathcal{U}(t,0)\int_0^t\mathcal{U}(0,s)f(s)ds
\end{align*}
for large $|t|$ with some $f\in (\mathcal{Y}^m)^2$. Set
\begin{align*}
g_{\pm,*}:=\lim_{t\to \pm\infty}\mathcal{U}_{free}(0,t)u_*(t)= W_{\pm}^{-1}(u_*(0)+i\int_0^{\pm\infty}\mathcal{U}(0,s)f(s)ds)\in \mathcal{E}^m.
\end{align*}
Since $P$ is a second order perturbation of $P_0$, we have
\begin{align*}
\|u_*(t)-\mathcal{U}_{free}(t,0)g_{\pm,*}\|_{\mathcal{E}^{m-2}}=&\|\mathcal{U}_{free}(0,t)u_*(t)-g_{\pm,*}\|_{\mathcal{E}^{m-2}}\\
=&\|\int_{\pm \infty}^t\frac{d}{ds}(\mathcal{U}_{free}(0,s)u_*(s))ds\|_{\mathcal{E}^{m-2}}\\
=&O(\jap{t}^{1-2\c}),
\end{align*}
where we use $(\ref{asyp1})$ and $(\ref{asyp2})$.
The second assertion is proved by a simple calculation.
\end{proof}

\subsection{Radiation condition}\label{subsecrad}

We recall $\tilde{\x}=\frac{1}{2}\pa_{\x}p_0(\x)=(-\x_1,\x_2,..,\x_{n+1})=(-\t,\y)$ for $\x=(\t,\y)$. We define
\begin{align*}
\cos (x,x')=\frac{x\cdot x'}{|x||x'|}\quad \text{for}\quad x,x'\in \re^{n+1}.
\end{align*}
We also introduce the incoming/outgoing region for $P$ by
\begin{align}
&\O^0_{\e,R,in/out}:= \{(x,\x)\in \re^{2n+2}\mid |x|\geq R, \,\, \x\neq 0 \,\,  \pm\cos (x,\tilde{\x})<-1+\e\},\label{outindef}\\
&\O^0_{\e,R,m_0,in/out}:=\O_{\e,R,in/out}\cap \{ |p_0(\x)+m_0^2|<\e|\x|^2\}\nonumber.
\end{align}
We note $(x,\x)\in \O^0_{\e,R,m_0,in/out}\Rightarrow |\x|\geq r$ with some $r>0$.
In the next lemma, we shall show that the symbols of $D_t\mp \sqrt{-\Delta+m_0^2}$ are elliptic on the region $\O^0_{\e,R,in}\cap \{\mp t\geq 1\}$ respectively.

\begin{lem}\label{incomelliptic}
We use the notation $x=(t,y)\in \re\times \re^n$ and $\x=(\t,\y)\in \re\times \re^n$.
If $\e>0$ is small enough, then for $R\geq 1$ and $(x,\x)\in\O^0_{\e,R,m_0,in}\cap \{\mp t\geq 0\}$, we have
\begin{align}\label{incomttau}
\pm\t\leq -c_1|\x|,\quad  \pm t<-c_2|x| 
\end{align}
with constants $c_1,c_2>0$. In particular, we have
\begin{align}\label{inelliptices}
|\t\mp \sqrt{|\y|^2+m_0^2}|\geq c_3(1+|\x|)
\end{align}
for $(x,\x)\in\O^0_{\e,R,m_0,in}\cap \{\mp t\geq 0\}$ with a constant $c_3>0$.
\end{lem}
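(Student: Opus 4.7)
My plan is to decouple the two defining conditions of $\O^0_{\e,R,m_0,in}$: the cosine condition $\cos(x,\tilde\x)<-1+\e$ is a purely \emph{directional} constraint forcing $\tilde\x/|\x|$ to be nearly antipodal to $x/|x|$, while the mass-shell condition $|p_0(\x)+m_0^2|<\e|\x|^2$ is a purely \emph{magnitude} constraint forcing $|\t|$ to be comparable with $|\x|$. The sign hypothesis $\mp t\ge 0$ will then pin down the signs in $(\ref{incomttau})$.

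For the directional piece, I use $|\tilde\x|^2=\t^2+|\y|^2=|\x|^2$ to rewrite the cosine condition as
\[
\Bigl|\tfrac{x}{|x|}+\tfrac{\tilde\x}{|\x|}\Bigr|^2=2+2\cos(x,\tilde\x)<2\e,
\]
and reading off the first coordinate (where $\tilde\x_1=-\t$) extracts the key one-dimensional relation $|\t/|\x|-t/|x||<\sqrt{2\e}$. For the magnitude piece, substituting $|\y|^2=|\x|^2-\t^2$ into $|\t^2-|\y|^2-m_0^2|<\e|\x|^2$ yields $2\t^2\ge (1-\e)|\x|^2+m_0^2$, hence $|\t|\ge |\x|/2$ uniformly in $\e\le 1/2$.

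Combining in the upper-sign case ($t\le 0$): the lower bound $|\t|/|\x|\ge 1/2$ together with $\t/|\x|=t/|x|+O(\sqrt\e)$ and $t\le 0$ forces both $\t/|\x|$ and $t/|x|$ to be negative and bounded away from zero; explicitly, $t/|x|\le -1/4$ and $\t/|\x|\le -1/8$ once $\e$ is taken small enough, yielding $(\ref{incomttau})$ with, for instance, $c_1=1/8$ and $c_2=1/4$. The lower-sign case is identical after reversing signs. The elliptic bound $(\ref{inelliptices})$ is then immediate: in the upper-sign case $\t\le -c_1|\x|<0$ while $\sqrt{|\y|^2+m_0^2}\ge m_0$, so
\[
\bigl|\t-\sqrt{|\y|^2+m_0^2}\bigr|=|\t|+\sqrt{|\y|^2+m_0^2}\ge c_1|\x|+m_0\ge c_3(1+|\x|),
\]
with $c_3=\min(c_1,m_0)$, and symmetrically for the lower sign.

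I do not foresee any real obstacle: the only conceptual point is recognizing that the Lorentzian-adapted cosine (through $\tilde\x$ rather than $\x$) decouples exactly into a scalar bound linking $t/|x|$ with $\t/|\x|$, after which everything is elementary algebra combined with the mass-shell lower bound on $|\t|$.
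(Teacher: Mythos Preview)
Your proof is correct and follows essentially the same strategy as the paper's: both extract a lower bound $|\tau|\gtrsim|\xi|$ from the mass-shell condition and then use the cosine constraint to pin down the signs and magnitudes of $t$ and $\tau$. Your use of the vector identity $\bigl|\tfrac{x}{|x|}+\tfrac{\tilde\xi}{|\xi|}\bigr|^2=2+2\cos(x,\tilde\xi)<2\e$ and projection onto the first coordinate is a slightly cleaner packaging than the paper's direct expansion of the scalar product $-t\tau+y\cdot\eta$, but the substance is the same.
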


\begin{proof}
By a direct calculation, we have
\begin{align*} 
\frac{(1-\e)|\y|^2+m_0^2}{1+\e}\leq |\t|^2\leq \frac{(1+\e)|\y|^2+m_0^2}{1-\e},  \,\,  |\y|\leq \sqrt{\frac{1+\e}{2}}|\x|,\quad \text{for}\,\, (x,\x)\in\O^0_{\e,R,in/out}.
\end{align*}
Combining the above estimates with $\cos (x,\tilde{\x})<(-1+\e)$ and $|y|\leq |x|$, we obtain
\begin{align*}
-\t t<(-1+\e+\sqrt{\frac{1+\e}{2}})|x||\x|,\quad (\frac{1}{2}-\e)|\x|\leq |\t|\leq |\x|.
\end{align*}
Hence, if $\e>0$ is small enough, we have $-\t t<-\frac{1}{2}|x||\x|$ and $|\x|\leq 4|\t|\leq 4|\x|$. These inequality immediately imply $(\ref{incomttau})$. The inequality $(\ref{inelliptices})$ directly follows from $(\ref{incomttau})$.
\end{proof}

The next proposition says that the leading terms of $(\ref{waveexp})$ are microlocally negligible in the incoming region.

\begin{prop}\label{incomneg}
Let $g\in H^m(\re^{n}_y)$ and set $u_{\pm}(x)=u_{\pm}(t,y)=e^{\pm it\sqrt{-\Delta_y+m_0^2}}g(y)\in \mathcal{S}'(\re^{n+1})$. Let $\e>0$ be as in Lemma \ref{incomelliptic}. Then for $R\geq 1$ and for any $a_{\pm}\in S^{0,0}$ satisfying
\begin{align*}
\supp a_{\pm}\subset \O^0_{\e,R,m_0,in}\cap \{\mp t\geq 0\},
\end{align*}
we have $\Op(a_{\pm})u_{\pm}\in \mathcal{S}(\re^{n+1})$.
\end{prop}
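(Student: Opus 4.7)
The key identity is $(D_t\mp\sqrt{-\Delta_y+m_0^2})u_\pm=0$, equivalently $\Op(\sigma_\pm)u_\pm=0$ with $\sigma_\pm(\x):=\t\mp\sqrt{|\y|^2+m_0^2}\in S^{1,0}(\re^{2n+2})$ (where $\x=(\t,\y)$). Crucially $\sigma_\pm$ is independent of $x$, and Lemma \ref{incomelliptic} gives $|\sigma_\pm(x,\x)|\geq c_3(1+|\x|)$ on $\supp a_\pm$, so $\sigma_\pm$ is elliptic there. The plan is an elliptic parametrix adapted to $a_\pm$: construct $b_\pm\in S^{-1,0}$, supported in a fixed neighborhood of $\supp a_\pm$ on which $\sigma_\pm$ is elliptic, such that $b_\pm\#\sigma_\pm - a_\pm =: -r_\pm \in S^{-\infty,-\infty}$. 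Then $\Op(a_\pm)u_\pm=-\Op(r_\pm)u_\pm$.

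First I would set $b_\pm^{(0)}:=a_\pm/\sigma_\pm\in S^{-1,0}$. Because $\sigma_\pm$ depends only on $\x$, the Weyl product expansion collapses to
\[
b\#\sigma_\pm \sim \sum_{\alpha}\frac{(-1)^{|\alpha|}}{(2i)^{|\alpha|}\alpha!}\,D_x^\alpha b\cdot \partial_\x^\alpha \sigma_\pm,
\]
and each $\alpha\neq 0$ term gains one power of $\jap{\x}^{-1}$ (from $\partial_\x^\alpha\sigma_\pm$) \emph{and} one power of $\jap{x}^{-1}$ (from $D_x^\alpha$ acting on $b$). Hence $r_\pm^{(1)}:=b_\pm^{(0)}\#\sigma_\pm-a_\pm\in S^{-1,-1}$, and iterating $b_\pm^{(j)}:=-r_\pm^{(j)}/\sigma_\pm\in S^{-1-j,-j}$ followed by asymptotic summation $b_\pm\sim\sum_j b_\pm^{(j)}$ produces the required symbol with $r_\pm\in S^{-\infty,-\infty}$.

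Finally, since $r_\pm\in S^{-\infty,-\infty}$, its Weyl kernel is Schwartz on $\re^{n+1}\times \re^{n+1}$ (rapid decay in $|x-x'|$ via $D_\x$-integration by parts, and in $|x+x'|$ from the $x$-decay of $r_\pm$), so $\Op(r_\pm)$ maps $\mathcal{S}'(\re^{n+1})$ into $\mathcal{S}(\re^{n+1})$. Unitarity of $e^{\pm it\sqrt{-\Delta+m_0^2}}$ on $H^m(\re^n)$ places $u_\pm$ in $L^\infty(\re_t;H^m(\re^n_y))\subset \mathcal{S}'(\re^{n+1})$, whence $\Op(a_\pm)u_\pm=-\Op(r_\pm)u_\pm\in\mathcal{S}(\re^{n+1})$. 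The only technical subtlety is bookkeeping supports through the iteration: one fixes at the outset a cutoff equal to $1$ on a neighborhood of $\supp a_\pm$ and supported in the elliptic region of $\sigma_\pm$, and multiplies each $r_\pm^{(j)}$ by it, altering it only modulo $S^{-\infty,-\infty}$. The substantive input is Lemma \ref{incomelliptic}; the $x$-independence of $\sigma_\pm$ is what promotes the usual elliptic parametrix from a smoothing one to one with a genuinely Schwartz remainder.
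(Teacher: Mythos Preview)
Your proof is correct and follows essentially the same approach as the paper: an elliptic parametrix for $\sigma_\pm$ on $\supp a_\pm$ using Lemma~\ref{incomelliptic}, combined with $\Op(\sigma_\pm)u_\pm=0$, yielding $\Op(a_\pm)u_\pm=\Op(r_\pm)u_\pm\in\mathcal{S}$. The paper is terser---it normalizes to $b_\pm=\sigma_\pm\jap{\x}^{-1}\in S^{0,0}$ and simply invokes ``the standard elliptic parametrix and Borel's theorem'' to produce $a_\pm=c_\pm\# b_\pm+r_\pm$ with $r_\pm\in S^{-\infty,-\infty}$---while you spell out the iteration explicitly.

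One small correction to your closing remark: the $x$-independence of $\sigma_\pm$ is a convenience, not what makes the remainder Schwartz. In the scattering calculus $S^{k,l}$ used throughout the paper, the composition expansion already gains one order in \emph{both} $\jap{x}$ and $\jap{\x}$ at each step (Lemma~\ref{prelimlem}~$(ii)$), so the microlocal elliptic parametrix automatically has remainder in $S^{-\infty,-\infty}$ for any elliptic symbol in this class. The $x$-independence just collapses the expansion to a single sum and makes the gain transparent.
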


\begin{proof}
We recall the notation $\x=(\t,\y)$ with $\t\in \re$ and $\y\in \re^n$.
Set $b_{\pm}(\x)=(\t\mp \sqrt{|\y|^2+m_0^2})(|\x|^2+1)^{-\frac{1}{2}}\in S^{0,0}$, which is elliptic on $\supp a_{\pm}$ by virtue of Lemma \ref{incomelliptic}. Then the standard elliptic parametrix and Borel's theorem show the existence of $c_{\pm}\in S^{0,0}$ and $r_{\pm}\in S^{-\infty,-\infty}$ such that
\begin{align*}
a_{\pm}=c_{\pm}\# b_{\pm}+r_{\pm}.
\end{align*}
On the other hand, a direct calculation gives $(D_t\mp \sqrt{-\Delta_y+m_0^2})u_{\pm}=0$. Hence, we have $\Op(b_{\pm})u_{\pm}=0$. Thus, we conclude $\Op(a_{\pm})u_{\pm}=\Op(r_{\pm})u\in \mathcal{S}(\re^{n+1})$.
\end{proof}

Set $\b(x,\x)=\cos (x,\pa_{\x}p(x,\x))$. By the long range condition, it turns out that $\b(x,\x)$ and $\cos(x,\tilde{\x})$ are sufficiently close for $|x|>>1$. Thus all the results hold if $\b$ is replaced by $\cos(x,\tilde{\x})$.
By Theorem \ref{Feyeq} $(i)$, Corollary \ref{microsmooth} and Lemma \ref{KGelliptic}, we have the following micorlocal radiation condition.

\begin{prop}[Radiation condition]\label{radcondprop}
Let $u\in \mathcal{S}'(\re^{n+1})$ and $f\in \mathcal{S}(\re^{n+1})$. Then there exists $R\geq 1$ and $\e>0$ such that the following holds: The identity $u=(P+m_0^2-i0)^{-1}f$ holds if and only if 
\begin{align*}
(P+m_0^2)u=f\quad \text{and}\quad \Op(a)u\in L^{2,-\frac{1}{2}+\d}(\re^{n+1})
\end{align*}
for any $a\in S^{0,0}$ supported on $\O^0_{\e,R,m_0,in}$ and for some $\d>0$. 
\end{prop}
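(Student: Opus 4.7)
The proposition has a necessity and a sufficiency direction.

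\textbf{Necessity.} For $f\in\mathcal{S}(\re^{n+1})$, Theorem \ref{lapwave} gives $u_0:=(P+m_0^2-i0)^{-1}f\in L^{2,-\frac{1}{2}-\e}(\re^{n+1})$ for every $\e>0$. Since $\O^0_{\e,R,m_0,in}$ sits in a neighborhood of the incoming radial set and is disjoint from the outgoing radial set for $\e$ small and $R$ large, Vasy's radial source estimate at the incoming set (the standard microlocal characterization of the outgoing resolvent via \cite{V}) upgrades the global weight $-\tfrac{1}{2}-\e$ to the microlocal bound $\Op(a)u_0\in L^{2,-\frac{1}{2}+\d}$ for any $a\in S^{0,0}$ supported in $\O^0_{\e,R,m_0,in}$.

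\textbf{Sufficiency.} Set $v:=u-(P+m_0^2-i0)^{-1}f\in\mathcal{S}'(\re^{n+1})$, so that $(P+m_0^2)v=0$ and $v$ satisfies the same radiation condition. The goal is to show $v=0$. First I globalize the weighted bound: off the characteristic shell $v$ is Schwartz by microlocal ellipticity (Lemma \ref{KGelliptic}); on the shell, the assumed $L^{2,-\frac{1}{2}+\d}$ bound on $\O^0_{\e,R,m_0,in}$ plays the role of the source threshold feeding Vasy-style propagation of singularities in weighted $L^2$, and the outgoing-radial-set sink estimate closes the argument to give $v\in L^{2,-\frac{1}{2}+\d}(\re^{n+1})$ and hence $v\in\mathcal{X}^m$ for all $m$.

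Proposition \ref{waveexprop} then produces $g_{\pm,\pm}\in H^m(\re^n)$ with
\begin{align*}
v(t)=e^{it\sqrt{-\Delta+m_0^2}}g_{\pm,+}+e^{-it\sqrt{-\Delta+m_0^2}}g_{\pm,-}+O_{H^{m-2}}(\jap{t}^{1-2\c})\quad\text{for}\quad \pm t\gg 1.
\end{align*}
Pick $a_+\in S^{0,0}$ supported in $\O^0_{\e,R,m_0,in}\cap\{t\geq T\}$ for $T\gg 1$ and elliptic on the classical microlocal support of $e^{it\sqrt{-\Delta+m_0^2}}g_{+,+}$ inside that set (stationary phase places this support in the forward incoming cone). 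Because $\m>1$ allows $\c$ close to $1$, the remainder lies in $L^{2,-\frac{1}{2}+\d}$; Proposition \ref{incomneg} renders $\Op(a_+)e^{-it\sqrt{-\Delta+m_0^2}}g_{+,-}$ Schwartz; and the radiation condition $\Op(a_+)v\in L^{2,-\frac{1}{2}+\d}$ then forces $\Op(a_+)e^{it\sqrt{-\Delta+m_0^2}}g_{+,+}\in L^{2,-\frac{1}{2}+\d}$. A stationary-phase estimate, using $\|e^{it\sqrt{-\Delta+m_0^2}}g_{+,+}\|_{L^2(\re^n_y)}=\|g_{+,+}\|_{L^2}$ together with the microlocal ellipticity of $a_+$, makes this weighted norm infinite unless $g_{+,+}=0$. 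Symmetrically, choosing $a_-$ supported in $\O^0_{\e,R,m_0,in}\cap\{t\leq -T\}$ yields $g_{-,-}=0$, so $v\in\mathcal{X}^m_{\overline{F}}$, and the injectivity of $P+m_0^2:\mathcal{X}^m_{\overline{F}}\to \mathcal{Y}^m$ from \cite[Theorem 1.1]{GW2} concludes $v=0$.

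\textbf{Main obstacle.} The technical heart is globalizing the microlocal radiation bound: spreading the $L^{2,-\frac{1}{2}+\d}$ bound from the incoming cones to the whole null-characteristic set via Vasy's weighted propagation of singularities, with the correct source/sink structure at the massive radial sets $p=-m_0^2$. A second subtle point is verifying that ellipticity of $a_\pm$ on the classical microlocal support of the free positive- or negative-frequency wave really forces the corresponding $g_{\pm,\pm}$ to vanish; this requires balancing the weight exponent $-\frac{1}{2}+\d$ against the dispersive decay rate of $e^{\pm it\sqrt{-\Delta+m_0^2}}g$ at large $|t|$.
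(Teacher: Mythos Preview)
Your necessity argument is fine and matches the paper's intended route: the outgoing resolvent lies in $L^{2,-1/2-\e}$ by the limiting absorption principle, and the radial source estimate at the incoming set (Theorem~\ref{propathm}$(ii)$, uniformly in $z\in\mathbb{C}_+$) upgrades this to above-threshold decay there.

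The sufficiency argument, however, contains a genuine threshold error. You write that ``the outgoing-radial-set sink estimate closes the argument to give $v\in L^{2,-\frac{1}{2}+\d}(\re^{n+1})$.'' This is impossible: the radial sink estimate (Theorem~\ref{propathm}$(iii)$) only applies for weights $l<-\tfrac12$, so above-threshold decay cannot be propagated into the outgoing radial set. What propagation actually gives you from the incoming radiation condition is $v\in\bigcap_{k}H^{k,-\frac12-\d}$ globally, together with $v\in\mathcal{S}$ microlocally \emph{away} from the outgoing set---this is exactly the content of Corollary~\ref{microsmooth}. The step ``hence $v\in\mathcal{X}^m$'' is therefore unsupported: below-threshold weighted Sobolev regularity in $(t,y)$ does not obviously yield the $C^0_tH^{m+1}_y\cap C^1_tH^m_y$ membership defining $\mathcal{X}^m$.

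The paper's route for sufficiency is much shorter and avoids this entirely. From the radiation condition and Lemma~\ref{KGelliptic} one feeds Corollary~\ref{microsmooth} to obtain $v\in\mathcal{S}$ microlocally away from the outgoing set; then Proposition~\ref{abres} (absence of resonances) upgrades this to $v\in\mathcal{S}(\re^{n+1})$; finally Theorem~\ref{Feyeq}$(i)$ (no embedded eigenvalue at $-m_0^2$) forces $v=0$. There is no need to pass through the asymptotic expansion, $\mathcal{X}^m_{\overline{F}}$, or the stationary-phase argument you sketch---the latter, incidentally, would itself require a careful proof that a nonzero free wave $e^{it\sqrt{-\Delta+m_0^2}}g$ fails to lie in $L^{2,-1/2+\d}$ microlocally on the forward incoming cone, which you only assert.
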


Now we prove Theorem \ref{Feyeq} $(ii)$.

\begin{proof}[Proof of Theorem \ref{Feyeq} $(ii)$] Let $f\in \mathcal{S}(\re^{n+1})$ and set $u=(P+m_0^2)_{\overline{F}}^{-1}f\in \cap_{m\in \re}\mathcal{X}_{\overline{F}}^m$. By virtue of Proposition \ref{radcondprop}, it suffices to prove that $\Op(a)u\in L^{2,-\frac{1}{2}+\e}(\re^{n+1})
$ for any $a\in S^{0,0}$ supported on $\O^0_{\e,R,m_0,in}$ and for some $\e>0$. Using the estimate $|t|>c_2|x|$ for $(x,\x)\in \O^0_{\e,R,m_0,in/out}$ which follows from $(\ref{incomttau})$, we write
\begin{align*}
a=a_{+}+a_-\quad \supp a_{\mp}\subset \O^0_{\e,R,m_0,in}\cap \{\pm t\geq c_2|x|\}.
\end{align*}
By Proposition \ref{waveexprop} with $m=2$, we have
\begin{align*}
u(t)=e^{it\sqrt{-\Delta+m_0^2}}g_{\pm,+}+e^{-it\sqrt{-\Delta+m_0^2}}g_{\pm,-}+r_{\pm}(t)\quad \text{for}\quad \pm t>>1,
\end{align*}
where $g_{\pm}\in H^2(\re^n)$ with $g_{+,+}=g_{-,-}=0$ and $r_{\pm}= O_{L^2(\re^n)}(\jap{t}^{1-2\c})$. By the estimate $|t|>c_2|x|$ for $(x,\x)\in \O_{\e,R,m_0,in/out}$, we only need to prove 
\begin{align*}
\Op(a_{\mp})(e^{\mp it\sqrt{-\Delta+m_0^2}}g_{\pm,\mp}+r_{\pm})\in L^{2,-\frac{1}{2}+\d}(\re^{n+1})
\end{align*}
for some $\d>0$. Taking $\d$ small enough, we have $\Op(a_{\mp})r_{\pm}\in L^{2,-\frac{1}{2}+\d}(\re^{n+1})$, where we use $|t|>c_2|x|$ on $\supp a$ again. Moreover, Proposition \ref{incomneg} implies 
\begin{align*}
\Op(a_{\mp})(e^{\mp it\sqrt{-\Delta+m_0^2}}g_{\pm,\mp})\in \mathcal{S}(\re^{n+1})\subset L^{2,-\frac{1}{2}+\d}(\re^{n+1}).
\end{align*}
This completes the proof.

\end{proof}


\section{Mapping properties}

Main ingredients of the proof for Theorem \ref{Acompro} are the radial estimates proved in the Appendix and a less obvious subelliptic estimate.

\subsection{Subellipticity of the spectral parameter}
In this subsection, we establish a non-usual subelliptic estimates for $P-i$.
The proof of Lemma \ref{strange} below looks like the standard positive commutator argument at a first glance, however, it is very different from the usual positive commutator argument. In fact, in the proof below, we do not use the dynamical property for $H_p$ (the positivity of a symbol along the flow). We only need to use the symbol calculus and the symbol class $p\in S^{2,0}$. In this section, we assume that  $P$ is a self-adjoint operator on $L^2(\re^{n+1})$ of the form $(\ref{Pstd})$.

\begin{lem}\label{strange}
Let $k,l\in \re$, $z\in \mathbb{C}\setminus \re$. If $u\in H^{k+1/2, l-1/2}$ satisfies $(P-z)u\in H^{k,l}$, then we have $u\in H^{k,l}$.
\end{lem}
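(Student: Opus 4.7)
The plan is to prove the a priori estimate
\[
\|u\|_{H^{k,l}} \le C_z\bigl(\|(P-z)u\|_{H^{k,l}} + \|u\|_{H^{k+1/2,l-1/2}}\bigr)
\]
by a regularized positive-commutator (weighted energy) identity whose positivity does \emph{not} come from any Hamilton-flow estimate, but from the nonvanishing of $\Im z$. Fix the real elliptic symbol $a_0(x,\x)=\jap{x}^l\jap{\x}^k\in S^{k,l}$ together with a cutoff $\chi\in C_c^\infty(\re^{2n+2})$ with $\chi(0)=1$, and set $a_{0,\e}:=a_0\cdot\chi(\e x,\e\x)\in S^{-\infty,-\infty}$; this is bounded in $S^{k,l}$ uniformly in $\e\in(0,1]$ and $a_{0,\e}\to a_0$ in $S^{k+\d,l+\d}$ for every $\d>0$. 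Define the self-adjoint non-negative operator $A_\e:=\Op(a_{0,\e})^2$; by Weyl calculus $A_\e=\Op(a_{0,\e}^2)+R_\e$ with $R_\e$ bounded in $\Op S^{2k-2,2l-2}$ uniformly in $\e$.

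\textbf{Commutator identity and the two uniform estimates.} Since $A_\e$ is smoothing, pairing $(P-z)u=f$ with $A_\e u$ and taking imaginary parts yields the clean identity
\[
2\Im z\cdot(u,A_\e u) = 2\Im(f,A_\e u) + (u,[P,iA_\e]u).
\]
By Lemma \ref{prelimlem}$(ii)$ and $p\in S^{2,0}$, the principal symbol of $[P,iA_\e]$ is $H_p(a_{0,\e}^2)=2a_{0,\e}H_pa_{0,\e}\in S^{2k+1,2l-1}$ modulo a remainder in $S^{2k,2l-2}$; the short-range part $\Op S^{1,-\m}$ of $P$ contributes only a term in $\Op S^{2k,2l-1-\m}\subset\Op S^{2k+1,2l-1}$. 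Hence $[P,iA_\e]=\Op(\s_\e)$ with $\s_\e$ bounded in $S^{2k+1,2l-1}$ uniformly in $\e$. Reading $(u,\Op(\s_\e)u)$ as a quadratic form and using that $\Op(\s_\e):H^{k+1/2,l-1/2}\to H^{-k-1/2,-l+1/2}=(H^{k+1/2,l-1/2})^*$ is uniformly bounded,
\[
|(u,[P,iA_\e]u)|\le C\|u\|_{H^{k+1/2,l-1/2}}^2.
\]
For the source term, split $(f,A_\e u)=(\Op(a_{0,\e})f,\Op(a_{0,\e})u)+(f,R_\e u)$, bound $\|\Op(a_{0,\e})f\|\le C\|f\|_{H^{k,l}}$ and $\|\Op(a_{0,\e})u\|^2\le(u,A_\e u)+C\|u\|_{H^{k+1/2,l-1/2}}^2$ uniformly, control $(f,R_\e u)$ by $\|f\|_{H^{k,l}}\|u\|_{H^{k+1/2,l-1/2}}$, and conclude by AM-GM with a small $\d>0$:
\[
2|\Im(f,A_\e u)|\le\d(u,A_\e u)+C_\d\bigl(\|f\|_{H^{k,l}}^2+\|u\|_{H^{k+1/2,l-1/2}}^2\bigr).
\]

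\textbf{Absorption and passage to the limit.} Choosing $\d<|\Im z|$ and substituting into the identity yields $(u,A_\e u)\le C_z\bigl(\|f\|_{H^{k,l}}^2+\|u\|_{H^{k+1/2,l-1/2}}^2\bigr)$ uniformly in $\e$. In particular $\{\Op(a_{0,\e})u\}_{\e\in(0,1]}$ is bounded in $L^2(\re^{n+1})$. Extracting a weakly convergent subsequence and identifying its limit with $\Op(a_0)u$ in $\mathcal{S}'(\re^{n+1})$ via Lemma \ref{convergence}, one obtains $\Op(a_0)u\in L^2$; the global ellipticity of $a_0$ (with the a priori bound $u\in H^{k+1/2,l-1/2}$ absorbing the smoothing/decaying error of the parametrix) then gives $u\in H^{k,l}$.

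\textbf{Main obstacle.} The delicate step is the uniform bound on the commutator term: $[P,iA_\e]$ has order $(2k+1,2l-1)$, which is one half-order too high in both regularity and decay for $u\in H^{k+1/2,l-1/2}$ to map into $L^2$ directly. The point is that $(u,\Op(\s_\e)u)$ must be read as a quadratic form which splits this half-order of growth and decay loss symmetrically between the two factors of $u$, and the hypothesis $u\in H^{k+1/2,l-1/2}$ is precisely tuned to make the pairing bounded. This is exactly the reason, as the author emphasizes, that no dynamical positivity of $H_pa$ along the flow is needed: symbol calculus together with the structural fact $p\in S^{2,0}$ is all that is used.
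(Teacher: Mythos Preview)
Your proof is correct and follows essentially the same approach as the paper: regularize by smoothing operators bounded in $\Op S^{k,l}$, use the imaginary-part/commutator identity with $\Im z\neq 0$ supplying the positivity, bound the commutator term uniformly as a quadratic form on $H^{k+1/2,l-1/2}$ since $[P,iA_\e]\in \Op S^{2k+1,2l-1}$, and pass to the limit. The only cosmetic differences are that the paper writes the regularizer as $\Lambda_R=\jap{x}^l\jap{D}^k\Op(\chi(|(x,\x)|/R))$ and handles the source term by a direct Cauchy--Schwarz rather than your AM--GM splitting.
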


\begin{rem}
In this lemma, the assumption $z\notin \re$ is necessary. Moreover, this lemma holds for general pseudodifferential operator of the class $\Op S^{2,0}$.
\end{rem}

\begin{proof}
Take $\chi\in C_c^{\infty}(\re;\re)$ such that $\chi(t)=1$ on $|t|\leq 1$ and set
\begin{align*}
a_R(x,\x)=\chi(\frac{|(x,\x)|}{R}),\,\, \L_R=\jap{x}^l\jap{D}^k\Op(a_R)\in \Op S^{-\infty,-\infty}
\end{align*}
for $R\geq 1$.
We note $\L_R$ is uniformly bounded in $\Op S^{k,l}$. Since $\L_R\in \Op S^{-\infty,-\infty}$, we have
\begin{align*}
((P-z)u,\L_R^2u)_{L^2}-(\L_R^2u,(P-z)u)_{L^2}=2i\Im z\|\L_Ru\|_{L^2}^2+(u,[P,\L_R^2]u)_{L^2}
\end{align*}
for $u\in \mathcal{S}'(\re^{n+1})$. Now we let $u\in H^{k+1/2, l-1/2}$ satisfying $(P-z)u\in H^{k,l}$. Then we have
\begin{align*}
|(u,[P,\L_R^2]u)_{L^2}|\leq C\|u\|_{H^{k+1/2,l-1/2}}^2
\end{align*}
with a constant $C>0$ independent of $R\geq 1$. Moreover, by the Cauchy-Schwartz inequality, we have
\begin{align*}
|((P-z)u,\L_R^2u)_{L^2}-(\L_R^2u,(P-z)u)_{L^2}|\leq \frac{C}{|\Im z|}\|(P-z)u\|_{H^{k,l}}^2+|\Im z|\|\L_Ru\|_{L^2}^2
\end{align*}
with a constant $C>0$ independent of $R\geq 1$. Thus we have
\begin{align*}
|\Im z|\|\L_Ru\|_{L^2}^2\leq \frac{C}{|\Im z|}\|(P-z)u\|_{H^{k,l}}^2+C\|u\|_{H^{k+1/2,l-1/2}}^2.
\end{align*}
Using $\Im z\neq 0$ and a limiting procedure, we obtain $u\in H^{k,l}$.

\end{proof}

\subsection{Proof of Theorem \ref{Acompro}}

Now we prove Theorem \ref{Acompro}.
Suppose $z\in\mathbb{C}$. We only consider the case of $\Im z> 0$.
By duality, we only need to prove $P-z:\mathcal{S}(\re^{n+1})\to \mathcal{S}(\re^{n+1})$ is a homeomorphism.
Since the operator $P-z$ maps between $\mathcal{S}(\re^{n+1})$ continuously and is injective in $\mathcal{S}(\re^{n+1})$ (this follows from its essential self-adjointness on $\mathcal{S}(\re^{n+1})$), it suffices to prove that $P-z$ is surjective in $\mathcal{S}(\re^{n+1})$. In fact, if $P-z$ is bijective and continuous on $\mathcal{S}(\re^{n+1})$, then its inverse is also continuous by the open mapping theorem.

Let $f\in \mathcal{S}(\re^{n+1})$. Setting $u=(P-z)^{-1}f\in L^2(\re^{n+1})$, we have $(P-z)u=f$. Then Corollary \ref{microsmooth} implies $u\in \cap_{k\in \re,\d>0}H^{k,-\frac{1}{2}-\d}$. By Lemma \ref{strange}, we obtain $u\in \mathcal{S}(\re^{n+1})$. This means that $P-z$ is surjective in $\mathcal{S}(\re^{n+1})$.

\appendix

\section{Propagation estimates}

In this appendix, we give a proof for some propagation estimates, that is, the propagation of singularities and the radial estimates although these proofs may be well-known for specialists of the geometric scatering theory (see \cite[Appendix E.4]{DZ}). The radial estimates can be regarded as a microlocal alternative of the Mourre theory. However, its proof does not require technical assumptions such as the self-adjointness of the operator and the radial estimates often give an additional information such as the regularity of the function. We also mention that the abstract limiting absorption principle can be proved by a similar method in \cite{G}.

First, we shall explain key dynamical properties of $H_p$ for the propagation estimates.
We set
\begin{align*}
\b_0(x,\x)=\cos(x,\tilde{\x})=\frac{x\cdot \pa_{\x}p_0(\x)}{|x||\pa_{\x}p_0(\x)|},\quad \b(x,\x)=\frac{x\cdot \pa_{\x}p(x,\x)}{|x||\pa_{\x}p(x,\x)|}.
\end{align*}
A key property of the observable $\b_0$ is the following: Setting $\b_{\pm}=1\pm \b_0$, we have
\begin{align*}
&|\pa_{\x}p_0|^{-1}|x|H_{p_0}\b_{\pm}=\pm (1-\b_0^2)=\pm \b_{\mp}\b_{\pm}=\pm 2 \b_{\pm} \,\, \text{on}\,\,  L_{\mp}:=\{\b_{\pm}=0\},\\ 
&|\pa_{\x}p_0|^{-1}|x|H_{p_0}|x|^{-1}=-\b_0 |x|^{-1}=\pm |x|^{-1}\quad \text{on}\quad  L_{\mp}.
\end{align*}
Then it turns out that the sets $L_{\mp}=\{\b_{\pm}=0\}\cap\{|x|=\infty, \x\neq 0\}$ are attracting/repelling sets along the rescaled Hamiltonian $|x|H_{p_0}$. 
The sets $L_{\mp}\cap \{|x|=\infty,\x\neq 0\}$ are called the radial source/sink in \cite[DEFINITION E.50]{DZ} (see also \cite[Definition 2.3]{DW}), which are also called the incoming/outgoing regions in the scattering theory.

Similar to Section \ref{LAPsec}, in this appendix, let $P$ be a self-adjoint operator on the standard $L^2$-space $L^2(\re^{n+1})$ satisfying 
\begin{align*}
P=\Op(p)+\Op S^{1,-\m}.
\end{align*}
Since the multiplication operator $|g|^{\frac{1}{4}}$ preserves all spaces $H^{k,l}(\re^{n+1})$ with $k,l\in \re$, all the results hold for $P$ which is defined in $(\ref{Pdef})$.

\subsection{Preliminary}

Now we fix some notation.
For symbols $a,b$, we denote $a\Subset b$ if we have
\begin{align*}
\inf_{(x,\x)\in\supp a}|b(x,\x)|>0,
\end{align*}
and we denote $\Op(a)=:A\Subset B:=\Op(b)$ if $a\Subset b$. 

\begin{defn}
Let $k,l\in \re$.

\noindent$(i)$ We call $a\in S^{k,l}$ (or its quantization $\Op(a)$) is elliptic in a subset $\O\subset \re^{2n+2}$ if there exists $r\in S^{-\infty,-\infty}$ such that
\begin{align*}
\inf_{(x,\x)\in\O}\jap{x}^{-l}\jap{\x}^{-k}|a(x,\x)+r(x,\x)|>0.
\end{align*}

\noindent$(ii)$ We call $a\in S^{k,l}$ (or its quantization $\Op(a)$) is microlocally negligible outside a subset $\O\subset \re^{2n+2}$ if there exists $r\in S^{-\infty,-\infty}$ such that
\begin{align*}
\supp (a+r)\subset \O.
\end{align*}

\noindent$(iii)$  We call $a\in S^{k,l}$ (or its quantization $\Op(a)$) is microlocally contained in a pair $(\O_1,\O_2)$ if $a$ is elliptic in $\O_1$ and is microlocally negligible outside $\O_2$.

\noindent$(iv)$ Let $X\subset \mathcal{S}'(\re^{n+1})$ be a function space. For $u\in\mathcal{S}'(\re^{n+1})$, we say that $u\in X$ microlocally in a subset $\O\subset \re^{2n+2}$ if there exists $A=\Op(a)\in \Op S^{0,0}$ which is elliptic on $\O$ such that $Au\in X$.

\end{defn}

We set
\begin{align*}
\O_{\e,r,R,in/out}:=& \{(x,\x)\in \re^{2n+2}\mid |x|\geq R,\,\, |\x|\geq r, \,\,  \pm\b(x,\x)\leq -1+\e\},\\
\O_{r,R}(s):=& \{(x,\x)\in \re^{2n+2}\mid |x|\geq R,\,\, |\x|\geq r, \,\,  \b(x,\x)=s\},\\\O_{r,R}:=& \{(x,\x)\in \re^{2n+2}\mid |x|\geq R,\,\, |\x|\geq r\},\quad \O_{s_1,s_2,r,R,mid}=\cup_{s\in[s_1,s_2]}\O_{r,R}(s).
\end{align*}

\subsection*{Estimates for weight functions}

For $k,l\in \re$, $N,\k>0$ and $0\leq\d\leq 1$, we set
\begin{align*}
\l=\l_{k,l,\k,N,\d}=\jap{\x}^{k-\frac{1}{2}}\jap{\d\x}^{-|k|-N-1}\jap{x}^{l+\frac{1}{2}}\jap{\d x}^{-\k}.
\end{align*}

\begin{lem}\label{weighes}
\noindent $(i)$ For $r>0$, there exists $C_1>0$ independent of $0\leq\d\leq 1$ such that
\begin{align*}
H_p\l^2\leq C_1\frac{\jap{\x}}{\jap{x}}\l^2\quad \text{for}\quad x\in \re^n\quad \text{and}\quad |\x|\geq r.
\end{align*}

\noindent $(ii)$ Suppose $l>-\frac{1}{2}$ and $2l+1-2\k>0$. For $r>0$, $\e\in (0,1)$ and $R\geq 1$ large enough, there exists $C_2>0$ independent of $0\leq\d\leq 1$ such that
\begin{align*}
H_p\l^2\leq -C_2\frac{\jap{\x}}{\jap{x}}\l^2\quad \text{for}\quad (x,\x)\in \O_{\e,r,R,in}.
\end{align*}

\noindent $(iii)$ Suppose $l<-\frac{1}{2}$. For $r>0$, $\e\in (0,1)$ and $R\geq 1$ large enough, there exists $C_3>0$ independent of $0\leq\d\leq 1$ such that
\begin{align*}
H_p\l^2\leq -C_3\frac{\jap{\x}}{\jap{x}}\l^2\quad \text{for}\quad (x,\x)\in \O_{\e,r,R,out}.
\end{align*}

\end{lem}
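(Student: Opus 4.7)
The strategy is to take the logarithmic derivative of $\l^2$ under $H_p$. Writing $\log\l^2 = (2k-1)\log\jap{\x} - (2|k|+2N+2)\log\jap{\d\x} + (2l+1)\log\jap{x} - 2\k\log\jap{\d x}$, a direct computation gives
\[\frac{H_p\l^2}{\l^2} = -(2k-1)\frac{\x\cdot\pa_x p}{\jap{\x}^2} + (2|k|+2N+2)\frac{\d^2\,\x\cdot\pa_x p}{\jap{\d\x}^2} + (x\cdot\pa_\x p)\,B_\d(x),\]
where $B_\d(x) := \frac{2l+1}{\jap{x}^2} - \frac{2\k\d^2}{\jap{\d x}^2} = \frac{(2l+1)\jap{\d x}^2-2\k\d^2\jap{x}^2}{\jap{x}^2\jap{\d x}^2}$.

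Since $p-p_0\in S^{2,-\m}$ and $p_0$ is independent of $x$, we have $\pa_x p\in S^{2,-1-\m}$, so $|\x\cdot\pa_x p|\leq C\jap{\x}^3\jap{x}^{-1-\m}$. Combined with the elementary inequalities $\d^2\jap{\x}^2\leq\jap{\d\x}^2$ and $\d^2\jap{x}^2\leq\jap{\d x}^2$ valid for $\d\in[0,1]$, the first two terms in the display are dominated by $C\jap{\x}\jap{x}^{-1-\m}$ uniformly in $\d$. The third term is organized via $x\cdot\pa_\x p = |x||\pa_\x p|\b$ together with $|\pa_\x p|\geq c|\x|$, which holds for $R\geq 1$ large enough and $|\x|\geq r$, since $|\pa_\x p_0|=2|\x|$ and $\pa_\x(p-p_0)\in S^{1,-\m}$; hence $|x||\pa_\x p|\geq c'\jap{x}\jap{\x}$ on $\O_{r,R}$.

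Part $(i)$ follows from the universal bound $|B_\d(x)|\leq C/\jap{x}^2$, which is immediate from $\d^2/\jap{\d x}^2\leq 1/\jap{x}^2$; together with $|x\cdot\pa_\x p|\leq C\jap{x}\jap{\x}$ this makes the third term $O(\jap{\x}/\jap{x})$. For $(ii)$, the key algebraic step is $(2l+1)\jap{\d x}^2-2\k\d^2\jap{x}^2\geq(2l+1-2\k)\jap{\d x}^2$, obtained from $2\k\d^2\jap{x}^2\leq 2\k\jap{\d x}^2$; hence $B_\d(x)\geq (2l+1-2\k)/\jap{x}^2>0$. Combined with $\b\leq -1+\e$ on $\O_{\e,r,R,in}$, the third term is bounded above by $-c\,\jap{\x}/\jap{x}$, and this dominates the $O(\jap{\x}\jap{x}^{-1-\m})$ error once $R$ is large enough that $C\jap{x}^{-\m}\leq c/2$. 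Part $(iii)$ is the mirror image: $2l+1<0$ yields $(2l+1)\jap{\d x}^2-2\k\d^2\jap{x}^2\leq(2l+1)\jap{\d x}^2<0$, so $B_\d(x)\leq(2l+1)/\jap{x}^2<0$, and $\b\geq 1-\e$ on $\O_{\e,r,R,out}$ again forces the third term to be $\leq -c\,\jap{\x}/\jap{x}$.

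The main technical point, and what makes the estimates uniform in the regularization parameter $\d\in[0,1]$, is the pair of elementary identities $\d^2\jap{x}^2\leq\jap{\d x}^2$ and $\d^2\jap{\x}^2\leq\jap{\d\x}^2$. They ensure both the universal upper bound on $|B_\d|$ in $(i)$ and the signed control on $B_\d$ in $(ii)$--$(iii)$ hold independently of $\d$, even though the regularizing cutoffs $\jap{\d x}^{-\k}$ and $\jap{\d\x}^{-|k|-N-1}$ degenerate as $\d\to 0$; no dynamical information about $H_p$ beyond symbol-class membership is needed.
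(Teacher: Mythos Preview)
Your proof is correct and follows essentially the same approach as the paper's: compute $H_p$ on the weight, separate the $\x$-factors (which contribute only the harmless error $O(\jap{\x}\jap{x}^{-1-\m})$ because $\pa_x p\in S^{2,-1-\m}$) from the $x$-factors, and control the sign of the latter through the quantity you call $B_\d$ combined with the sign of $\b$ on the incoming/outgoing regions. Your organization via the logarithmic derivative is slightly cleaner than the paper's direct product-rule computation, and you correctly carry the factor $2\k$ in $B_\d$ (the paper's displayed formula drops this $2$, although its hypothesis $2l+1-2\k>0$ is stated consistently with the correct factor).
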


\begin{proof}
First, we note $\pa_{x}p\in S^{2,-1-\m}$ and $\pa_{\x}p\in S^{1,0}$.
$(i)$ follows from a simple calculation. A simple calculation gives $H_p|\x|^2=O(|x|^{-1-\m}|\x|^3)$ and
\begin{align*}
&H_p\jap{x}^{2l+1}=(2l+1)|x||\pa_{\x}p(x,\x)|\b(x,\x)\jap{x}^{2l-1},\\
&H_p\jap{\d x}^{-2\k}=-2\k\d^2|x||\pa_{\x}p(x,\x)|\b(x,\x)\jap{\d x}^{-2\k-2}.
\end{align*}
Moreover, 
\begin{align*}
(\frac{2l+1}{\jap{x}^2}-\frac{\k\d^2}{\jap{\d x}^2})=&\frac{(2l+1)(1+\d^2|x|^2)-\k\d^2(1+|x|^2)}{\jap{x}^2\jap{\d x}^2}\\
&\begin{cases}
\geq c_1\jap{x}^{-2}\quad \text{if}\quad l>-\frac{1}{2},\,\, 2l+1-2\k>0\\
\leq -c_2\jap{x}^{-2}\quad \text{if}\quad l<-\frac{1}{2},
\end{cases}
\end{align*}
where $c_1,c_2>0$ are independent of $0<\d\leq 1$ and $x\in \re^n$.
Thus, for $R>0$ large enough, we have
\begin{align*}
H_p\l^2=&\jap{\x}^{2k-1}\jap{\d\x}^{-2|k|-2N-2}H_p( \jap{x}^{2l+1}\jap{\d x}^{-2\k})+O(\frac{|\x|}{\jap{x}^{1+\m}}\l^2)\\
=&(\frac{2l+1}{\jap{x}^2}-\frac{\k\d^2}{\jap{\d x}^2})|x||\pa_{\x}p(x,\x)|\b(x,\x)\l^2 +O(\frac{|\x|}{\jap{x}^{1+\m}}\l^2)\leq -C\frac{\jap{\x}}{\jap{x}}\l^2
\end{align*}
if $l>-\frac{1}{2}$, $2l+1-2\k>0$ and $(x,\x)\in \O_{\e,r,R,in}$ or if $l<-\frac{1}{2}$ and $(x,\x)\in \O_{\e,r,R,out}$. This proves $(ii)$ and $(iii)$.
\end{proof}

\subsection*{Estimates for cut-off functions}

First, we note that for $0<\e_0<1$ and $R\geq 1$ large enough, there exists $C_4>0$ such that 
\begin{align}\label{accposi}
H_p\b\geq C_4|\x|\jap{x}^{-1}\quad \text{for}\quad |x|\geq R,\,\, \x\neq 0,\,\, \b(x,\x)\in [-1+\e_0,1-\e_0]
\end{align}
since $H_p\b=|x|^{-1}|\pa_{\x}p(x,\x)|(1-\b^2)+O(\jap{x}^{-1-\m}|\x|)$.

\begin{lem}\label{cutoffes}

\noindent$(i)$ Let $-1<\b_1<\b_2<1$. For each $L>0$, $\e>0$ small enough, $R\geq 1$ large enough, there exist $a,b_1,b_2\in S^{0,0}$, $e\in S^{-\infty,-1/2}$ supported in $\{r/2\leq |\x|\leq \frac{5r}{2}\}$ satisfying the following properties: The symbol $a$ is microlocally contained in $(\O_{\b_1-\e,\b_2+\e,2r,2R,mid},\O_{\b_1-2\e,\b_2+2\e,r,R,mid})$, the symbols $b_1,b_2$ are microlocally negligible outside $\O_{\b_1-2\e,\b_1-\e,r,R,mid}$ and $\O_{r,R}$ respectively. Moreover, we have
\begin{align*}
H_pa^2\leq -L\frac{\jap{\x}}{\jap{x}}a^2+\frac{\jap{\x}}{\jap{x}}b_1^2+\frac{\jap{\x}}{\jap{x}}b_2^2 +\frac{\jap{\x}}{\jap{x}}e^2.
\end{align*}
In addition, if $\b_2<0$, we can take $b_2=0$.

\noindent$(ii)$ For $0<\e<\frac{1}{2}$, $r\geq 1$ and $R\geq 1$ large enough, there exist $a\in S^{0,0}$ which is microlocally contained in $(\O_{\e,2r,2R,in},\O_{2\e,r,R,in})$ and $e\in S^{-\infty,-1/2}$ supported in $\{r/2\leq |\x|\leq \frac{5r}{2}\}$ such that 
\begin{align*}
H_pa^2\leq \frac{\jap{\x}}{\jap{x}}e^2.
\end{align*}

\noindent$(iii)$ For $0<\e<\frac{1}{4}$, $r\geq 1$ and $R\geq 1$, there exist $a,b_1,b_2\in S^{0,0}$ and $e\in S^{-\infty,-1/2}$ supported in $\{r/2\leq |\x|\leq \frac{5r}{2}\}$ such that the following properties hold: The symbol $a$ is microlocally contained in $(\O_{\e,2r,2R,out},\O_{2\e,r,R,out})$, $b_1$ and $b_2$ are microlocally negligible outside  $\O_{1-2\e,1-\e,r,R,mid}$ and $\O_{r,R}$ respectively. Moreover, we have
\begin{align*}
H_pa^2\leq \frac{\jap{\x}}{\jap{x}}b_1^2+\frac{\jap{\x}}{\jap{x}}b_2^2 +\frac{\jap{\x}}{\jap{x}}e^2.
\end{align*}

\end{lem}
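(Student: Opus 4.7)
The plan is to take each of the three symbols in the common product form
\[
a(x,\x) = \psi(\b(x,\x))\,\chi_+(|x|/R)\,\chi_+(|\x|/r),
\]
where $\chi_+\in C^\infty(\re;[0,1])$ is a fixed non-decreasing cutoff, equal to $0$ on $(-\infty,1]$ and to $1$ on $[2,\infty)$, and $\psi\in C_c^\infty(\re)$ is tailored to each case. Writing $H_p a^2=2aH_pa$, one decomposes $\tfrac12 H_p a^2$ into a \emph{$\b$-boundary term} proportional to $\psi\psi'(\b)H_p\b$, a \emph{spatial-boundary term} proportional to $\chi_+(|x|/R)\chi_+'(|x|/R)H_p|x|$, and a \emph{momentum-boundary term} proportional to $\chi_+(|\x|/r)\chi_+'(|\x|/r)H_p|\x|$. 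The momentum-boundary term is harmless throughout: since $\pa_x p\in S^{2,-1-\m}$ gives $H_p|\x|\in S^{1,-1-\m}$ and $\chi_+'(|\x|/r)$ is compactly supported in $\x$, this contribution lies in $S^{-\infty,-1-\m}$ and is dominated by $e^2\tfrac{\jap{\x}}{\jap{x}}$ for some $e\in S^{-\infty,-1/2}$ supported in $\{r/2\le|\x|\le 5r/2\}$. The sign of the spatial-boundary term is governed by $H_p|x|=|\pa_\x p|\b+O(\jap{\x}\jap{x}^{-\m})$, which for $R$ large carries the sign of $\b$ on $\supp\psi$.

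For (i), I would equip $\psi$ with an exponential weight: set $\psi(\b)^2=\chi_L(\b)\chi_R(\b)e^{-2M\b}$, with $\chi_L$ non-decreasing from $0$ on $\b\le\b_1-2\e$ to $1$ on $\b\ge\b_1-\e$, $\chi_R$ non-increasing from $1$ on $\b\le\b_2+\e$ to $0$ on $\b\ge\b_2+2\e$, and $M\ge L/(2C_4)$. On the interior $\chi_L=\chi_R=1$ the $\b$-boundary term equals $-2M\psi^2H_p\b\le -L\tfrac{\jap{\x}}{\jap{x}}\psi^2$ by (\ref{accposi}), producing the required strictly negative main contribution. On the upper $\b$-boundary both surviving contributions have the non-positive factor $\chi_R'-2M\chi_R\le 0$. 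On the lower $\b$-boundary only a positive residue of size $\lesssim \chi_L'(\b)\tfrac{\jap{\x}}{\jap{x}}$ remains, and is absorbed into $b_1^2\tfrac{\jap{\x}}{\jap{x}}$. The spatial term carries the sign of $\b$: when $\b_2<0$ and $\e$ is small enough that $\b_2+2\e<0$, $\b$ is negative throughout $\supp\psi$ and this term is already non-positive, so $b_2=0$ works; otherwise it is absorbed into $b_2^2\tfrac{\jap{\x}}{\jap{x}}$ with $b_2$ supported in $\{R\le|x|\le 2R\}\cap\O_{r,R}$.

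For (ii), take $\psi$ non-increasing with $\psi=1$ on $\b\le -1+\e$ and $\psi=0$ on $\b\ge -1+2\e$. Then $\psi\psi'\le 0$; on $\supp\psi'\subset\{\b\in[-1+\e,-1+2\e]\}$ one has $1-\b^2\ge\e$, so $H_p\b\ge c_\e\tfrac{\jap{\x}}{\jap{x}}>0$ for $R$ large, making the $\b$-term non-positive, and since $\b<0$ on $\supp a$ the spatial-boundary term is non-positive as well; only the harmless momentum-boundary term remains. For (iii), take $\psi$ non-decreasing with $\psi=0$ on $\b\le 1-2\e$ and $\psi=1$ on $\b\ge 1-\e$; both the $\b$- and the spatial-boundary terms are now non-negative (since $\psi\psi'\ge 0$, $H_p\b\ge 0$, and $\b>0$ on $\supp a$), and are absorbed respectively into $b_1$ (supported in $\b\in[1-2\e,1-\e]$) and $b_2$ (supported in $\{R\le|x|\le 2R\}$).

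The one genuine technical subtlety is realising the non-negative commutator residues as $b_j^2$ with smooth $b_j\in S^{0,0}$, because $\sqrt{\chi_L'}$ need not be smooth. This is handled by the standard device of choosing the cutoffs so that their derivatives are perfect smooth squares, e.g.\ $\chi_L(\b)=\bigl(\int_{-\infty}^\b\phi(s)^2\,ds\bigr)\bigl/\bigl(\int_\re\phi(s)^2\,ds\bigr)$ with $\phi\in C_c^\infty$ supported in $[\b_1-2\e,\b_1-\e]$, and analogously for $\chi_R$ and for $\chi_+$, so that $\sqrt{\chi_L'}=\phi/\sqrt{\int\phi^2}$ and $\sqrt{\chi_+'}$ are genuine $C_c^\infty$ functions. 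All remaining bounds are direct consequences of the pointwise estimates on $H_p\b$, $H_p|x|$, and $H_p|\x|$ together with (\ref{accposi}).
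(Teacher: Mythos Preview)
Your proposal is correct and follows essentially the same construction as the paper: both take $a$ as a product of a cutoff in $\b$, a spatial cutoff $\overline{\chi}_R$, and a frequency cutoff $\overline{\chi}_r$, with an exponential weight $e^{-M\b}$ in case~(i), and then split $H_pa^2$ according to which factor is differentiated; the sign analysis of each piece is identical. Your explicit remark that the cutoffs should be chosen so that their derivatives are perfect squares (to make $b_1,b_2\in S^{0,0}$ genuinely smooth) is a point the paper leaves implicit, so your write-up is in fact slightly more careful on that detail.
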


\begin{proof}
Take $\chi\in C^{\infty}(\re; [0,1])$ such that
\begin{align*}
\chi(t)=\begin{cases}
1\quad \text{for}\,\, t\leq 1,\\
0\quad \text{for}\,\, t\geq 2,
\end{cases}\quad \chi'(t)\leq 0.
\end{align*}
Moreover, we set $\bar{\chi}_R(x)=1-\chi(|x|/R)$ for $R>0$.

\noindent$(i)$  Take $\e>0$ small enough and $\rho_{mid}\in C^{\infty}(\re;[0,1])$ such that 
\begin{align*}
\rho_{mid}(t)=\begin{cases}
1 \quad \text{for}\quad \b_1-\e\leq t\leq \b_2+\e\\
0 \quad \text{for}\quad t\leq \b_1-2\e\,\,\text{or}\,\, \b_2+2\e\leq t
\end{cases}\,\, \rho_{mid}'(t)\leq 0\,\,\text{for}\,\, \b_2+\e\leq t\leq \b_2+2\e.
\end{align*}
Moreover, take $R\geq 1$ large enough such that $H_p\b\geq C_0|\x|\jap{x}^{-1}$ for $(x,\x)\in \supp \rho_{mid}(\b)$ with $|x|\geq R$ and take $M>0$ such that
\begin{align*}
H_pe^{-M\b}(x,\x)\leq -L\frac{\jap{\x}}{\jap{x}} e^{-M\b}\quad \text{for}\quad (x,\x)\in \O_{\b_1-2\e,\b_2+2\e,r,R,mid}.
\end{align*}
Now we set
\begin{align*}
a(x,\x)=e^{-M\b(x,\x)}\rho_{mid}(\b(x,\x))\overline{\chi}_R(x)\overline{\chi}_r(\x)\in S^{0,0}.
\end{align*}
Then we have $\supp a\subset \O_{\b_1-2\e,\b_2+2\e,r,R,mid}$ and
\begin{align*}
H_pa^2\leq -L\frac{\jap{\x}}{\jap{x}}a^2+e^{-2M\b}(x,\x)H_p(\rho_{mid}(\b)^2\overline{\chi}_R^2\overline{\chi}_r^2).
\end{align*}
Since $H_p\b(x,\x)\geq 0$ on $\O_{\b_1-2\e,\b_2+2\e,r,R,mid}$ and $\rho_{mid}'(t)\leq 0$ for $\b_2+\e\leq t\leq \b_2+2\e$, there exists $\rho_1\in C^{\infty}(\re;\re)$ supported in $[\b_1-3\e,\b_1-\e/2]$ such that
\begin{align*}
H_p\rho_{mid}(\b)^2\leq C\frac{\jap{\x}}{\jap{x}}\rho_1(\b)^2\quad \text{for}\quad (x,\x)\in\O_{r,R}.
\end{align*}
Now we construct $b_1=e^{-M\b}\rho_1(\b)\overline{\chi}_R\overline{\chi}_r$. The symbols $b_2$ and $e$ are similarly constructed, where we note that $b_2$ and $e$ come from the term $H_p(\overline{\chi}_R)$ and $H_p(\overline{\chi}_r)$ respectively. If $\b_2<0$ and $\e>0$ small enough, then $H_p(\overline{\chi}_R)\leq 0$ and hence we can take $b_2=0$.

\noindent$(ii)$ Take $\rho_{in}\in C^{\infty}(\re;[0,1])$ such that $\rho_{in}(t)=1$ on $t\leq -1+\e$, $\rho_{in}(t)=0$ for $t\geq -1+2\e$ and $\rho_{in}'(t)\leq 0$. We set
\begin{align*}
a(x,\x)=\rho_{in}(\b(x,\x))\overline{\chi}_R(x)\overline{\chi}_r(\x)\in S^{0,0}.
\end{align*}
By virtue of $(\ref{accposi})$, we can take $R\geq 1$ large enough such that $H_p\b\geq 0$ if $-1+\e\leq \b(x,\x)\leq -1+2\e$ and $|x|\geq R$. Then we have
\begin{align*}
H_p(\rho_{in}(\b))=(H_p\b)\rho_{in}'(\b)\leq 0,\quad H_p\overline{\chi}_R=-R^{-1}\chi'(|x|)|\pa_{\x}p|\b\leq 0
\end{align*}
for $|x|\geq R$ and $(x,\x)\in \supp \rho_{in}(\b)$. Then the symbol $e$ can be constructed associated with the term $a\rho_{in}\overline{\chi}_RH_p\overline{\chi}_r$ as in $(i)$.

\noindent$(iii)$ Take $\rho_{out}\in C^{\infty}(\re;[0,1])$ such that $\rho_{out}(t)=1$ on $t\geq 1-\e$ and $\rho_{out}(t)=0$ for $t\leq 1-2\e$.  We set
\begin{align*}
a(x,\x)=\rho_{out}(\b(x,\x))\overline{\chi}_R(x)\overline{\chi}_r(\x)\in S^{0,0}.
\end{align*}
Then our claim follows as in $(i)$ and $(ii)$, where we note that the term $b_1$ comes from $H_p(\rho_{in}(\b))$ and the term $b_2$ comes from the term $H_p(\overline{\chi}_R)$ as in $(i)$. 

\end{proof}

\subsection{Radial estimates}

In this subsection, we do not assume the null non-trapping condition (Assumption \ref{nulltrapp}). In the other parts of this appendix, we shall impose Assumption \ref{nulltrapp}.

Let $a,b_1,b_2,e\in S^{0,0}$ be as in Lemma \ref{cutoffes} $(i)$, $(ii)$ and $(iii)$ respectively, where we take $b_1=b_2=0$ in the case $(ii)$. Set $A=\Op(a)$, $B_j=\Op(b_j)$ and $E=\Op(e)$ and take $A\in \Op S^{0,0}$ such that 
\begin{align*}
A, B_j, E\Subset A'.
\end{align*}
Correspondingly, we have the following theorem.

\begin{thm}\label{propathm} Let 
\begin{align}\label{radzrange}
z\in \{z\in \mathbb{C}\mid \Im z\geq  0\}:=\mathbb{C}_{+}.
\end{align}
We consider the estimate
\begin{align}
\|Au\|_{H^{k,l}}\leq& C\|A'(P-z)u\|_{H^{k-1, l+1}}+C\|B_1u\|_{H^{k,l}}\nonumber\\
&+C\|B_2u\|_{H^{k,l}}+C\|Eu\|_{H^{k,l}}+C\|u\|_{H^{-N,-N}}.\label{radineq}
\end{align}
We have the following statements.

\noindent$(i)$$($Propagation of singularity$)$ Let $k,l\in \re$, $N>0$ and $-1<\b_1<\b_2<1$. Suppose that $u\in H^{-N,-N}\cap H^k_{loc}$ satisfies $u\in H^{k,l}$ microlocally on $\{|\x|\leq 3r\}$ and
\begin{align*}
&u\in H^{k,l}, \quad \text{microlocally on}\quad \O_{r,R}(\b_2)\\
&(P-z)u\in H^{k+1,l-1}\quad \text{microlocally on}\quad \O_{\b_1-2\e,\b_2+2\e,r,R,mid}
\end{align*}
for $r>0$, $\e>0$ small enough and $R\geq 1$ large enough. Then we have $u\in H^{k,l}$ microlocally on $\O_{r',R'}(\b_1)$ for some $r',R'>0$.

 More precisely, the following statement holds:
Then, for $u\in\mathcal{S}'(\re^{n+1})$ such that the right hand side of $(\ref{radineq})$ is bounded, the estimate $(\ref{radineq})$ hold with a constant $C>0$ independent of $z\in \mathbb{C}_+$.

\noindent$(ii)$$($Radial source estimate$)$ Let $l>-1/2$, $N>0$ and $k\in \re$. Suppose that $u\in\mathcal{S}'(\re^{n+1})$ satisfies $u\in H^{k,l}$ microlocally on $\{|\x|\leq 3r\}$ and
\begin{align*}
u\in H^{-N,l_0}\quad \text{and}\quad (P-z)u\in H^{k+1,l-1}\quad \text{microlocally on}\quad \O_{r,R}(-1)
\end{align*}
for some $l_0>-1/2$, $r>0$ and $R\geq 1$ large enough. Then we have $u\in H^{k,l}$ microlocally on $\O_{r',R'}(-1)$ for some $r',R'>0$.

More precisely, the following statement holds: Set $B_1=B_2=0$.
Consider $u\in\mathcal{S}'(\re^{n+1})$ such that the right hand side of $(\ref{radineq})$ is bounded and $A'u\in H^{-N,l_0}$. Then the estimate $(\ref{radineq})$ holds for such $u$ with a constant $C>0$ independent of $z\in \mathbb{C}_+$.

\noindent$(iii)$$($Radial sink estimate$)$ Let $l<-\frac{1}{2}$, $N>0$ and $k\in \re$. Suppose that $u\in H^{-N,-N}\cap H^k_{loc}$ satisfies $u\in H^{k,l}$ microlocally on $\{|\x|\leq 3r\}$ and
\begin{align*}
&u\in H^{k,l}, \quad \text{microlocally on}\quad \O_{r,R}(1-\e)\\
&(P-z)u\in H^{k+1,l-1}\quad \text{microlocally on}\quad \O_{1-2\e,1,r,R,mid}
\end{align*}
for $r>0$, $0<\e<1/2$ and $R\geq 1$ large enough. Then we have $u\in H^{k,l}$ microlocally on $\O_{r',R'}(1)$ for some $r',R'>0$.

 More precisely, the following statement holds:
Then, for $u\in\mathcal{S}'(\re^{n+1})$ such that the right hand side of $(\ref{radineq})$ is bounded, the estimate $(\ref{radineq})$ holds for such $u$ with a constant $C>0$ independent of $z\in \mathbb{C}_+$.

\end{thm}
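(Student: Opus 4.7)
The plan is to prove all three parts by the standard positive commutator argument, using the weight estimates of Lemma \ref{weighes} and the cut-off estimates of Lemma \ref{cutoffes}. For each $\d \in (0,1]$ set $T_\d := \Op(\l_{k,l,\k,N,\d}\, a) \in \Op S^{-\infty,-\infty}$, with $a$ chosen from the relevant part of Lemma \ref{cutoffes}. The self-adjointness of $P$ yields
\begin{align*}
(u, i[P, T_\d^* T_\d] u) - 2 (\Im z) \|T_\d u\|^2 = 2 \Im ((P-z) u, T_\d^* T_\d u),
\end{align*}
and the hypothesis $\Im z \geq 0$ allows us to discard the nonnegative second term on the left when we want an upper bound on the positive part of $-(u, i[P, T_\d^* T_\d]u)$.

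To leading order, the commutator symbol factorizes as $H_p(\l_\d^2 a^2) = a^2 H_p \l_\d^2 + \l_\d^2 H_p a^2$. In part (i), Lemmas \ref{weighes}(i) and \ref{cutoffes}(i) with $L > C_1$ give a symbolic bound
\begin{align*}
H_p(\l_\d^2 a^2) \leq -c \tfrac{\jap{\x}}{\jap{x}} \l_\d^2 a^2 + \tfrac{\jap{\x}}{\jap{x}} \l_\d^2 (b_1^2 + b_2^2 + e^2).
\end{align*}
In part (ii), Lemmas \ref{weighes}(ii) and \ref{cutoffes}(ii) give the same kind of bound on $\supp a \subset \O_{\e,r,R,in}$ without the $b_j$ contributions; in part (iii), Lemmas \ref{weighes}(iii) and \ref{cutoffes}(iii) play the analogous role on $\O_{\e,r,R,out}$. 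The sharp G\aa rding inequality of Lemma \ref{prelimlem}(iii) then converts these symbol bounds into an operator estimate
\begin{align*}
c \|A u\|_{H^{k,l}}^2 \leq C \sum_j \|B_j u\|_{H^{k,l}}^2 + C \|E u\|_{H^{k,l}}^2 + 2 |((P-z)u, T_\d^* T_\d u)| + C \|u\|_{H^{-N,-N}}^2,
\end{align*}
uniformly in $\d > 0$. The cross term is bounded by Cauchy-Schwarz after microlocalizing $T_\d = T_\d A' + \text{rem}$ and splitting the weights so as to produce $\|A'(P-z)u\|_{H^{k-1,l+1}} \|A' u\|_{H^{k,l}}$; a small/large constant argument absorbs the $\|Au\|_{H^{k,l}}$ factor into the left side.

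Passing $\d \to 0$ using Lemma \ref{convergence} and the monotonicity $\l_\d \uparrow \l_0$ yields (\ref{radineq}) with a constant independent of $z \in \mathbb{C}_+$; the qualitative microlocal statements follow immediately because $A$ is elliptic in the target region $\O_{r',R'}(\ast)$.

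The main obstacle sits in the radial source estimate (ii): we do not have $A' u \in H^{k,l}$ a priori, only $A' u \in H^{-N, l_0}$ for some $l_0 > -\tfrac{1}{2}$, so the key point is to show that the pairing $(u, i[P, T_\d^* T_\d] u)$ is uniformly bounded in $\d$. This is precisely what the extra decay from $H_p \l_\d^2 \leq -C_2 \jap{\x}/\jap{x}\, \l_\d^2$ supplies, provided the loss from the regularizer $\jap{\d x}^{-\k}$ is beaten; this is ensured by the technical constraint $l + 1/2 - \k > 0$ in Lemma \ref{weighes}(ii). In parts (i) and (iii) the corresponding a priori input comes directly from the assumption that $u \in H^{k,l}$ on $\O_{r,R}(\b_2)$ or $\O_{r,R}(1-\e)$, which makes the regularization routine and propagates the regularity along the Hamilton flow, as in H\"ormander's original propagation of singularities.
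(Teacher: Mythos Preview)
Your approach is the same as the paper's: positive commutator with the regularized weight $\l_{k,l,\k,N,\d}$, combine Lemma~\ref{weighes} with Lemma~\ref{cutoffes}, apply sharp G\aa rding, and pass $\d\to 0$. Two points, however, are glossed over and one of them is a genuine gap in case~(ii).

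First, the displayed inequality you obtain after sharp G\aa rding cannot carry remainder $C\|u\|_{H^{-N,-N}}^2$ in one step. G\aa rding together with the subprincipal piece $[P-\Op(p),T_\d^*T_\d]$ produces a microlocalized lower-order term of the form $\|\L_\d A_1 u\|_{H^{k-1/2,\,l-1/2}}^2$, with $A_1\in\Op S^{0,0}$ elliptic on $\supp a\cup\supp b_1\cup\supp b_2\cup\supp e$. In cases~(i) and~(iii) this is removed by the usual iteration (rerun the estimate at orders $(k-\tfrac12,l-\tfrac12)$, $(k-1,l-1)$, \ldots). In case~(ii) this iteration in $l$ is blocked by the threshold $l>-\tfrac12$, and the paper needs an additional interpolation step together with an appeal to part~(i) on the collar $\{-1+\e\le\b\le -1+2\e\}$ (cf.\ \cite[proof of Theorem~E.52]{DZ}). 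You do not mention this, and without it the source estimate is incomplete.

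Second, your last sentence misidentifies the a~priori input that makes the $\d\to 0$ passage routine in (i) and (iii): it is not the microlocal hypothesis $u\in H^{k,l}$ on $\O_{r,R}(\b_2)$ (that enters only as the $B_1$-term on the right), but the freedom to take $\k$ and $N$ large in $\L_\d$, so that for each fixed $\d>0$ the global datum $u\in H^{-N,-N}$ already lands in the required space. Your identification of the constraint $0<\k<l+\tfrac12$ from Lemma~\ref{weighes}(ii), and hence the need for the threshold hypothesis $A'u\in H^{-N,l_0}$ with $l_0>-\tfrac12$ in case~(ii), is correct.
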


\begin{rem}
For the radial source estimate $(ii)$, we do not need the regularity assumption $u\in H^k_{loc}$, which is different from the propagation of singularity $(i)$ and the radial sink estimate $(iii)$.
\end{rem}

\begin{rem}
If $z\neq 0$ and if we take $r>0$ sufficiently small, then we have $|p(x,\x)-z|\geq C>0$ on $\{|\x|\leq 3r\}$.
This implies that under $z\neq 0$, the assumption $u\in H^{k,l}$ microlocally on $\{|\x|\leq 3r\}$ can be removed if we also assume $A'$ is elliptic on $\{|\x|\leq 3r\}$. 
\end{rem}

\begin{rem}
For $z\in\mathbb{C}_-=\{w\in\mathbb{C}\mid \Im w\leq 0\}$, similar results hold although the direction of the propagation should be reversed.
\end{rem}

\subsection*{Commutator estimates}

For the proof of Theorem \ref{propathm}, the following commutator calculus has an important role: For pseudodifferential operators $A, \L$, where $A$ is formally self-adjoint and $\Im z\geq 0$, we have
\begin{align}\label{KDcomeq} 
\Im((P-z)u, A\L^*\L Au)_{L^2}=&-(u,[P,iA\L^* \L A]u)_{L^2}+\Im z\|\L Au\|_{L^2}^2
\end{align}
for $u\in \mathcal{S}(\re^{n+1})$. Moreover, the equation $(\ref{KDcomeq})$ with the Cauchy-Schwartz inequality implies that for any small $\e_1>0$, there exists $C>0$ such that
\begin{align}\label{KDcomeq2}
-(u,[P,iA\L^*\L A]u)_{L^2}\leq C\|\L A(P-z)u\|_{H^{-\frac{1}{2},\frac{1}{2}}}^2+\e_1\|\L Au\|_{H^{\frac{1}{2},-\frac{1}{2}}}^2.
\end{align}

We set
\begin{align*}
\L=\L_{k,l,\k,N,\d}=\Op(\l_{k,l,\k,N,\d}),\quad\,\, \Theta=\jap{x}^{-\frac{1}{2}}\jap{D}^{\frac{1}{2}},\quad \L_{\d}=\jap{\d x}^{-\k}\jap{\d D}^{-|k|-N-1}.
\end{align*}

\begin{proof}[Proof of Theorem \ref{propathm}]
We may assume $N>0$ is sufficiently large.
If we take $R\geq 1$ large enough and $\e>0$ small enough, Lemma \ref{weighes} and Lemma \ref{cutoffes} yield
\begin{align*}
H_p(a^2\l^2)\leq (-C_4\frac{\jap{\x}}{\jap{x}}a^2+\frac{\jap{\x}}{\jap{x}}b_1^2+\frac{\jap{\x}}{\jap{x}}b_2^2 +\frac{\jap{\x}}{\jap{x}}e^2)\l^2
\end{align*}
with a constant $C_4>0$, where we take $L=-2C_0$ in the case $(i)$. The sharp G\aa rding inequality (Lemma \ref{prelimlem} $(iii)$) gives
\begin{align*}
C_4A\L_{\d}\L\Theta^2\L\L_{\d} A\leq&-[P,iA\L_{\d}\L^2\L_{\d}A] +\sum_{j=1}^2B_j\L_{\d}\L\Theta^2\L \L_{\d} B_j+E\L_{\d}\L\Theta^2\L \L_{\d}E\\
&+A_1\L_{\d}\L\Theta\jap{x}^{-\frac{1}{2}}\jap{D}^{-1}\jap{x}^{-\frac{1}{2}} \Theta\L \L_{\d}A_1
\end{align*}
up to the term $\Op S^{-\infty,-\infty}$ uniformly bounded in $0\leq\d\leq 1$.  Here, the last term comes from a remainder term of the sharp G\aa rding inequality and from $[P-\Op(p),iA\L_{\d}\L^2\L_{\d}A]$. Moreover, $A_1\in \Op S^{0,0}$ is elliptic in
\begin{align*}
\supp a\cup \supp b_1\cup \supp b_2\cup \supp e.
\end{align*}
Now $(\ref{KDcomeq})$ and $(\ref{KDcomeq2})$ imply that for $u\in \mathcal{S}(\re^{n+1})$,
\begin{align*}
\|\L_{\d} Au\|_{H^{k, l}}^2\leq C\|\L_{\d} A(P-z)u\|_{H^{k-1, l+1}}^2+C\|\L_{\d}B_1u\|_{H^{k,l}}^2+C\|\L_{\d}B_2u\|_{H^{k,l}}^2\nonumber\\
+C\|\L_{\d}Eu\|_{H^{k,l}}^2+C\|\L_{\d}A_1u\|_{H^{k-\frac{1}{2},l-\frac{1}{2}}}^2+C\|u\|_{H^{-N,-N}}^2
\end{align*}
with a constant $C>0$ independent of $0\leq\d\leq 1$. Now it suffices to prove the lower order term $\|\L_{\d}A_1u\|_{H^{k-\frac{1}{2},l-\frac{1}{2}}}$ and relax the a priori regularity assumption of $u$.

By a standard bootstrap argument, the term $\|\L_{\d}A_1u\|_{H^{k-\frac{1}{2},l-\frac{1}{2}}}^2$ can be absorbed into the term $\|u\|_{H^{-N,-N}}^2$ for the cases $(i)$ and $(iii)$. For the case $(ii)$, using the bootstrap argument, an interpolation argument, and the part $(i)$, the term $\|\L_{\d}A_1u\|_{H^{k-\frac{1}{2},l-\frac{1}{2}}}^2$ can be absorbed into the left hand side and the term $\|u\|_{H^{-N,-N}}^2$. For a detail, see \cite[Part 2 of proof for Theorem E.52]{DZ}. Thus, for $u\in\mathcal{S}(\re^{n+1})$, we have
\begin{align}
\|\L_{\d} Au\|_{H^{k, l}}^2\leq C\|\L_{\d} A(P-z)u\|_{H^{k-1, l+1}}^2+C\|\L_{\d}B_1u\|_{H^{k,l}}^2+C\|\L_{\d}B_2u\|_{H^{k,l}}^2\nonumber\\
+C\|\L_{\d}Eu\|_{H^{k,l}}^2+C\|u\|_{H^{-N,-N}}^2.\label{sourceapproxi2}
\end{align}

Taking $\d=0$ and using a standard approximation argument with Lemma \ref{convergence}, we have $(\ref{sourceapproxi2})$ for $u\in \mathcal{S}'(\re^{n+1})$ satisfying $A'u\in H^{k,l}$ and $A'(P-z)u\in H^{k-1,l+1}$. In fact, take $a\in C_c^{\infty}(\re^{2n+2})$ with $a(x,\x)=1$ for $|(x,\x)|\leq 1$ and set $a_R(x,\x)=a(x/R,\x/R)$. Substituting $\Op(a_R)u$ into $(\ref{sourceapproxi2})$ with $\d=0$ and taking $R\to \infty$, we obtain $(\ref{sourceapproxi2})$ for such $u$ by Lemma \ref{convergence}.
 
Finally, we consider $u\in\mathcal{S}'(\re^{n+1})$ such that the right hand side of $(\ref{radineq})$ is bounded. For the case $(ii)$, we also assume $A'u\in H^{-N,l_0}$. We recall the notation 
\begin{align*}
\L_{\d}=\jap{\d x}^{-\k}\jap{\d D}^{-|k|-N-1}.
\end{align*}
First, we consider the cases $(i)$ and $(iii)$. Take $\k$ and $N$ large enough. Substituting $\Op(a_R)u$ into $(\ref{sourceapproxi2})$ with $0<\d\leq 1$, where $a_R$ is as above, and taking $R\to \infty$, we obtain $(\ref{sourceapproxi2})$, which implies $(\ref{radineq})$ for such $u$ by taking $\d\to 0$.
For the case $(ii)$, we can use this procedure for arbitrary $N>0$ and for $\k>-l-\frac{1}{2}$ due to Lemma \ref{weighes}. Thus we need the additional assumption  $A'u\in H^{-N,l_0}$. We omit the detail.
For a similar argument, see \cite[Exercises E.31, E.35, E.36]{DZ}.

\end{proof}

\subsection{Propagation to the radial source in the past infinity}

In order to control the regularity for a bounded region of the $x$-space, we use the standard propagation of singularity theorem and the null non-trapping condition. To apply it, we need the following dynamical result.

\begin{lem}
Let $(x_0,\x_0)\in T^*\re^{n+1}$ with $\x_0\neq 0$ and $p(x_0,\x_0)=0$. We denote $z(t)=z(t,x_0,\x_0)$, $\z(t)=\z(t,x_0,\x_0)$ and $\b(t)=\b_0(z(t),\z(t))$. Then for any $0<\e<1$ and $R\geq 1$, there exists $T>0$ such that $|z(-T)|>R$ and
\begin{align}\label{dynamical}
\b(-T)<(-1+\e).
\end{align}

\end{lem}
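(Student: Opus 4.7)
I plan to prove the stronger statement $\liminf_{t\to-\infty}\b(t)=-1$; combined with the non-trapping hypothesis, which gives $|z(t)|\to\infty$ as $t\to-\infty$, this will let me pick $T$ sufficiently large so that both $|z(-T)|>R$ and $\b(-T)<-1+\e$ hold simultaneously.

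The key input is the Hamiltonian derivative of $\b_0$. The flat computation yields $H_{p_0}\b_0=\tfrac{2|\tilde\x|}{|x|}(1-\b_0^2)$, and together with $\pa_\x(p-p_0)\in S^{1,-\m}$, $\pa_x(p-p_0)\in S^{2,-1-\m}$, $\pa_x\b_0\in S^{0,-1}$, $\pa_\x\b_0\in S^{-1,0}$ and the identity $|\tilde\x|=|\x|$, this gives along the null bicharacteristic
\begin{align*}
\dot\b(t)=\frac{2|\z(t)|}{|z(t)|}\bigl(1-\b(t)^2\bigr)+O\bigl(\jap{z(t)}^{-1-\m}|\z(t)|\bigr),
\end{align*}
\begin{align*}
\frac{d}{dt}|z(t)|^2=4|z(t)||\z(t)|\b(t)+O\bigl(\jap{z(t)}^{1-\m}|\z(t)|\bigr).
\end{align*}

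By Assumption \ref{nulltrapp} we have $|z(t)|\geq R_0$ for $t$ sufficiently negative, so Lemma \ref{conv1} gives $\frac{d^2}{dt^2}|z(t)|^2\geq 0$ on some $(-\infty,-T_0]$. Since $|z(t)|^2\to+\infty$ at $-\infty$, the non-decreasing function $\frac{d}{dt}|z(t)|^2$ cannot be everywhere positive on this half-line; hence there exists $T_1\geq T_0$ with $\frac{d}{dt}|z(t)|^2\leq 0$ for $t\leq-T_1$. Substituting this monotonicity into the second identity forces the upper bound $\b(t)\leq C\jap{z(t)}^{-\m}$ for $t\leq-T_1$.

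I then argue by contradiction. Suppose $\liminf_{t\to-\infty}\b(t)>-1$, so $\b(t)\geq-1+\e_0$ for $t\leq-T_2$ with some $\e_0>0$, $T_2\geq T_1$. Combined with $\b(t)\leq C\jap{z(t)}^{-\m}\to 0$, this gives $|\b(t)|\leq 1-\e_0$ eventually, so $1-\b(t)^2\geq\e_0$. Substituting into the ODE for $\b$ yields $\dot\b(t)\geq c|\z(t)|/|z(t)|$ for some $c>0$ and $t\leq-T_3$ (enlarging $T_2$ to absorb the error term via $\jap{z}^{-\m}\to 0$). Integrating and using $|\b|\leq 1$ gives $\int_{-\infty}^{-T_3}|\z(s)|/|z(s)|\,ds\leq 2/c<\infty$. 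On the other hand, $\bigl|\tfrac{d}{ds}|z(s)|\bigr|\leq|\pa_\x p(z(s),\z(s))|\leq C|\z(s)|$ and $|z|$ is non-increasing on $(-\infty,-T_1]$, so
\begin{align*}
C\int_{-\infty}^{-T_3}\frac{|\z(s)|}{|z(s)|}\,ds\geq \int_{-\infty}^{-T_3}\frac{-\tfrac{d}{ds}|z(s)|}{|z(s)|}\,ds=\log|z(-\infty)|-\log|z(-T_3)|=+\infty,
\end{align*}
contradicting the previous finiteness. Hence $\liminf_{t\to-\infty}\b(t)=-1$, which yields the lemma.

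The main obstacle will be the convexity-plus-non-trapping step that extracts the monotonicity of $|z(t)|^2$ (and the consequent upper bound $\b(t)\leq C\jap{z(t)}^{-\m}$); the remainder reduces to an elementary comparison of two integrals, and notably requires no a priori control on $|\z(t)|$ itself.
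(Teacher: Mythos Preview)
Your argument is correct and follows the same overall scheme as the paper: use convexity (Lemma~\ref{conv1}) together with $|z(t)|\to\infty$ to obtain eventual monotonicity of $|z(t)|^2$ in backward time, deduce an upper bound $\b(t)\leq o(1)$, then argue by contradiction that if $\b(t)$ stayed $\geq -1+\e_0$ the lower bound $\dot\b\geq c|\z|/|z|$ would force $\b$ to blow up. The difference lies in the last step. The paper invokes the trajectory estimates $\jap{z(t)}\sim\jap{t}$ and $|\z(t)|\sim 1$ from \cite[Appendix A]{NT} (flagged explicitly in the remark following the lemma) to conclude $\int_{-t}^{-T_0}\jap{s}^{-1}\,ds\to\infty$. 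You instead observe that $|\tfrac{d}{ds}|z||\leq C|\z|$ and that $|z|$ is monotone, so $\int|\z|/|z|\,ds$ dominates $\int -d\log|z|=\log|z(-t)|-\log|z(-T_3)|\to\infty$. This is a genuine simplification: it removes the dependence on the external trajectory bounds and, as you note, requires no control on $|\z(t)|$ whatsoever. The price is only a slightly weaker conclusion ($\liminf\b=-1$ rather than $\lim\b=-1$), which is still more than enough for the lemma as stated.
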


\begin{rem}
The argument below is standard. However, for its justification, we need some estimates for the classical trajectories which are proved in \cite[Appendix A]{NT}.
\end{rem}

\begin{proof}
Let $0<\e<1$ and $R\geq 1$. Take $R_0\geq R$ such that
\begin{align}\label{cltjpr}
\begin{cases}
H_p^2|x|^2\geq C|\x|^2\quad \text{if}\quad |x|\geq R_0\\
H_p\b(x,\x)\geq C\jap{x}^{-1}|\x|\quad \text{if}\quad |x|\geq R_0\,\, \text{and}\,\, (-1+\e)\leq \b(x,\x)\leq 0. 
\end{cases}
\end{align}
By Assumption \ref{nulltrapp}, we can choose $T_0>0$ such that
\begin{align*}
|z(-T_0)|\geq 2R_0,\,\, \frac{d}{dt}|z(t)|^2|_{t=-T_0}\leq 0.
\end{align*}
This inequality and $(\ref{cltjpr})$ imply that $|z(-t)|\geq R_0$ for $t\geq T_0$. 
By the proof of \cite[Lemma A.2]{NT} and \cite[Corollary A.4]{NT}, we have
\begin{align*}
C_1^{-1}\jap{t}\leq  \jap{z(t)}\leq C_1\jap{t}\quad C_2^{-1}\leq |\z(t)|\leq C_2
\end{align*}
for all $t\in \re$ with a constant $C_1,C_2>0$. Now suppose that $(\ref{dynamical})$ fails. 
By the inequality $(\ref{cltjpr})$, we obtain
\begin{align*}
\b(-T_0)=\b(-t)+\int_{-t}^{-T_0}\b'(s)ds\geq-1+ CC_1^{-1}C_2^{-1}\int_{-t}^{-T_0}\jap{s}^{-1}ds=\infty\quad \text{as}\quad t\to \infty
\end{align*}
which is a contradiction.  

\end{proof}

Combining the radial source estimate with the standard propagation of singularity, we have the following corollary which is a generalization of \cite[Proposition 3.2]{NT}.

\begin{cor}\label{KDlocsmooth}
Let $k\in \re$, $\d>0$ and $z\in \mathbb{C}_{+}\setminus \{0\}$. Suppose that 
\begin{align*}
u\in H^{k,-\frac{1}{2}+\d}(\re^{n+1}) \quad \text{microlocally on}\,\, \O_{r,R}(-1)   \quad \text{and}\quad (P-z)u\in \mathcal{S}(\re^{n+1})
\end{align*} 
with $R\geq 1$ large enough and $r>0$ small enough.
Then we have $u\in C^{\infty}(\re^{n+1})$.
\end{cor}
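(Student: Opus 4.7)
Let $z \in \mathbb{C}_+ \setminus \{0\}$. The principal symbol of $P-z$ (of order $2$) is $p$, so classical elliptic regularity applied to $(P-z)u \in \mathcal{S}$ gives $WF(u) \subset \{p=0\} \setminus \{\xi=0\}$; it thus suffices to show that $u$ is microlocally smooth at every point of $\{p=0\} \setminus \{\xi=0\}$.

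I first note that, since $z \neq 0$, for $r>0$ small enough we have $|p(x,\xi)-z| \geq \tfrac{1}{2}|z| > 0$ uniformly on $\{|\xi| \leq 3r\}$. Hence $P-z$ is SG-elliptic on this region, and $(P-z)u \in \mathcal{S}$ yields $u \in H^{K,l}$ microlocally on $\{|\xi| \leq 3r\}$ for every $K, l \in \re$. This provides the low-$\xi$ input required by Theorem \ref{propathm}$(ii)$ at arbitrary regularity.

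I then bootstrap the regularity at the incoming radial source: starting from $u \in H^{k,-\frac{1}{2}+\delta}$ microlocally on $\Omega_{r,R}(-1)$ together with the low-$\xi$ input above and $(P-z)u \in \mathcal{S}$, and iteratively applying Theorem \ref{propathm}$(ii)$ with $l=-\frac{1}{2}+\delta$ fixed and the regularity index increasing, I obtain $u \in H^{K,-\frac{1}{2}+\delta}$ microlocally on $\Omega_{r',R'}(-1)$ for every $K \in \re$. So $u$ is microlocally $C^\infty$ in a conic neighborhood of the source $L_-$. The final step is to spread this smoothness along bicharacteristics in $\{p=0\} \setminus \{\xi=0\}$: the dynamical lemma preceding this corollary shows that every such bicharacteristic, traced backward in time, enters the incoming region $\{\beta<-1+\epsilon,\ |x|>R\}$ at some finite backward time. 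Combining Theorem \ref{propathm}$(i)$ (propagation at spatial infinity) with the classical H\"ormander propagation of singularities for the compact $x$-segments of the trajectory, the smoothness at the source neighborhood propagates to every point $(x_0,\xi_0) \in \{p=0\} \setminus \{\xi=0\}$, giving $WF(u) = \emptyset$ and hence $u \in C^\infty$.

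The main obstacle will be orchestrating this final propagation step cleanly: a single backward trajectory may alternate between excursions to spatial infinity and traverses of compact $x$-regions, so one must cover it by a finite chain of overlapping microlocal charts in which either Theorem \ref{propathm}$(i)$ or classical propagation applies, and carefully arrange that the source-side regularity required in each chart (the $B_1u$ term in $(\ref{radineq})$) is supplied by the previous chart in the chain.
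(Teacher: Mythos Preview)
Your proposal is correct and follows essentially the same route as the paper: elliptic regularity on $\{|\xi|\leq 3r\}$ from $z\neq 0$, the radial source estimate (Theorem \ref{propathm}$(ii)$) to upgrade to arbitrary Sobolev regularity on the incoming region, and then the dynamical lemma together with propagation of singularities to reach every characteristic point.

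The ``main obstacle'' you flag in your final paragraph is not a real one, and the paper's argument is cleaner precisely here. Once the dynamical lemma produces a finite backward time $T$ with $(z(-T),\zeta(-T))$ in the incoming region, the trajectory segment $\{(z(t),\zeta(t)):t\in[-T,0]\}$ is the continuous image of a compact interval and hence compact in $T^*\re^{n+1}$; the classical H\"ormander propagation theorem therefore carries smoothness from $(x_1,\xi_1)$ to $(x_0,\xi_0)$ in a single step. There is no need for a chain of overlapping charts alternating between Theorem \ref{propathm}$(i)$ and classical propagation, and no possibility of repeated ``excursions to spatial infinity'' on a finite-time piece of the flow (convexity at infinity, Lemma \ref{conv2}, in fact excludes this entirely). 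The paper also observes that since the wave front set is conic in $\xi$ and $p$ is homogeneous of degree $2$, one may rescale so that $|\xi_1|$ lies above the threshold $r$ required by the radial estimate --- a minor bookkeeping point you left implicit.
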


\begin{proof}
We shall prove $u\in C^{\infty}(\re^{n+1})$ microlocally near $(x_0,\x_0)\in \re^{n+1}\setminus \{\x=0\}$.
The last lemma implies that for $R\geq 1$ large enough, and $0<\e<1$, there exists $(x_1,\x_1)\in T^*\re^{n+1}\setminus 0$ such that $(x_1,\x_1)$ lies in the same integral curve of $H_p$ and 
\begin{align*}
|x_1|>R,\,\, \b(x_1,\x_1)<-1+\e.
\end{align*}
Then it suffices to prove $u\in C^{\infty}(\re^{n+1})$ microlocally near $(x_1,\x_1)\in \re^{n+1}\setminus \{\x=0\}$ by the standard propagation of singularities theorem. Moreover, since $\x_0\neq 0$ and since $p$ is homogeneous of degree $2$, we have $\x_1\neq 0$. Moreover, we may assume $|\x_1|$ is large enough since the wave front set is invariant under a scaling with respect to the $\x$-variable. To use Theorem \ref{propathm} $(ii)$, we shall check $u\in \mathcal{S}(\re^{n+1})$ microlocally in $\{|\x|\leq 3r\}$. We note that if $z\neq 0$, then $P-z$ is elliptic on $\{|\x|\leq 4r\}$ with some $r>0$. Then the standard elliptic parametrix construction yields $u\in \mathcal{S}(\re^{n+1})$ microlocally in $\{|\x|\leq 3r\}$. Now  Theorem \ref{propathm} $(ii)$ implies that for  $R>>1$ large enough and $0<\e<1$, we have $u\in C^{\infty}(\re^{n+1})$ microlocally on the incoming region $\O_{\e,r,R,in}$. This completes the proof since $(x_1,\x_1)\in \O_{\e,r,R,in}$.
\end{proof}

Summarizing the above results, we obtain the following corollary.

\begin{cor}\label{microsmooth}
Let $z\in \mathbb{C}_{+}\setminus \{0\}$, $k\in \re$ and $\d>0$. Suppose $u\in\mathcal{S}'(\re^{n+1})$ satisfies $(P-z)u\in \mathcal{S}(\re^{n+1})$ and $u\in H^{k,-\frac{1}{2}+\d}(\re^{n+1})$ microlocally in the incoming region $\O_{r,R}(-1)$ with $R\geq 1$ large enough and $r>0$ small enough. Then we have $u\in \cap_{k\in \re, \d>0}H^{k,-\frac{1}{2}-\d}$. Moreover, we have $u\in \mathcal{S}(\re^{n+1})$ microlocally away from the outgoing region $\O_{r,R}(1)$ for $R\geq 1$ large enough and $r>0$ small enough..
\end{cor}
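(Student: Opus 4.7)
The plan is to combine the three microlocal estimates of Theorem \ref{propathm} with Corollary \ref{KDlocsmooth} and a scattering elliptic parametrix to cover the whole phase space. Since $(P-z)u \in \mathcal{S}(\re^{n+1})$ and $u \in H^{k,-\frac{1}{2}+\d}$ microlocally on $\O_{r,R}(-1)$, the hypothesis of Corollary \ref{KDlocsmooth} is satisfied, so I first conclude $u\in C^{\infty}(\re^{n+1})$. This smoothness on compact sets ensures that any term of the form $\|B_2 u\|_{H^{k,l}}$ in the estimates of Theorem \ref{propathm}, with $B_2$ supported where $|x|\sim R$, is automatically finite for every $k,l$. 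Moreover, since $z\neq 0$, the symbol $p-z$ is elliptic on $\{|\x|\leq 3r\}$ for some $r>0$, so a standard scattering elliptic parametrix gives $u\in \mathcal{S}(\re^{n+1})$ microlocally on $\{|\x|\leq 3r\}$, which also controls the $E$ term.

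I then apply the radial source estimate, Theorem \ref{propathm} (ii), at the incoming source $\O_{r,R}(-1)$. The hypothesis $u\in H^{k,-\frac{1}{2}+\d}$ microlocally there is strictly above the threshold $-\tfrac12$, and bootstrapping in both the base weight $l$ and the order $k$ yields $u\in H^{k',l'}$ microlocally on $\O_{r',R'}(-1)$ for every $k'\in\re$ and every $l'>-\tfrac12$. Since $\mathcal{S}(\re^{n+1})=\bigcap_{k',l'}H^{k',l'}(\re^{n+1})$, this means $u\in \mathcal{S}(\re^{n+1})$ microlocally on a full neighborhood of the incoming radial source.

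Next, I iterate the propagation of singularities estimate, Theorem \ref{propathm} (i), to transport this Schwartz regularity along the bicharacteristic flow across the mid-region $\O_{\b_1,\b_2,r,R,mid}$ for any $-1<\b_1<\b_2<1$. At each step the source term coming from $B_1$ is supplied by the Schwartz regularity already established nearer to the incoming region, the term $B_2$ is handled by the local smoothness from Corollary \ref{KDlocsmooth}, and $E$ by the elliptic region $\{|\x|\le 3r\}$. Since we are allowed to take $k,l$ arbitrary on the right-hand side of $(\ref{radineq})$, this gives $u\in \mathcal{S}(\re^{n+1})$ microlocally on $\O_{\b_1,\b_2,r,R,mid}$ for every such interval, and combined with the elliptic region this covers all of phase space except an arbitrarily small neighborhood of the outgoing radial sink $\O_{r,R}(1)$.

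Finally, I invoke the radial sink estimate, Theorem \ref{propathm} (iii), at $\O_{r,R}(1)$. The input hypothesis $u\in H^{k,l}$ microlocally on $\O_{r,R}(1-\e)$ is available from the previous step with arbitrarily large $k$ and $l$, but the structural threshold $l<-\tfrac12$ inherent to the sink estimate forces the conclusion to take the form $u\in H^{k,-\frac{1}{2}-\d}$ on a neighborhood of the sink, for every $k\in \re$ and $\d>0$. Patching this with the Schwartz regularity established off the sink gives the two assertions of the corollary: $u\in \bigcap_{k\in\re,\d>0}H^{k,-\frac{1}{2}-\d}(\re^{n+1})$ globally, and $u\in \mathcal{S}(\re^{n+1})$ microlocally away from $\O_{r,R}(1)$. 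The main obstacle is precisely the sharp threshold at the sink: the sign of the commutator weight forbids any improvement beyond $-\tfrac12-\d$ at the outgoing radial set, which is what limits the global decay rate; additional care is needed only in verifying that the scattering-elliptic region and the compactly supported cutoffs in Lemma \ref{cutoffes} align so that the bootstrap in the propagation step can actually be closed at each iteration.
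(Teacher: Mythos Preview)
Your proposal is correct and follows essentially the same route as the paper, which simply states that the corollary is obtained by ``summarizing the above results''; you have correctly spelled out that this means combining Corollary~\ref{KDlocsmooth}, the scattering elliptic parametrix on $\{|\xi|\le 3r\}$, and the three estimates of Theorem~\ref{propathm} (source, propagation, sink) in that order. Your handling of the auxiliary terms $B_2$ and $E$, and your identification of the sink threshold $l<-\tfrac12$ as the sole obstruction to global Schwartz decay, match the paper's implicit argument.
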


\subsection{Elliptic estimate}

The following lemma is proved by a standard parametrix construction.

\begin{lem}\label{KGelliptic}
Let $m_0>0$. If $u\in \mathcal{S}'(\re^{n+1})$ satisfies $(P+m_0^2)u\in H^{k,l}$ with $k,l\in \re$, then we have $u\in H^{k+2,l}$ microlocally in $\{(x,\x)\in \re^{2n+2}\mid |x|\geq R,\,\, |p_0(\x)+m_0^2|\geq \e|\x|^2\}$ for $R\geq 1$ large enough and $\e>0$.
\end{lem}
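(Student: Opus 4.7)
The plan is the standard scattering-calculus parametrix argument adapted to the mixed symbol class $S^{k,l}$. Denote $\O_{\e,R} = \{(x,\x)\in\re^{2n+2}: |x|\geq R,\, |p_0(\x)+m_0^2|\geq \e|\x|^2\}$ and let $q\in S^{2,0}$ be the full Weyl symbol of $P+m_0^2$, so that by $(\ref{Pstd})$ we may write $q = p+m_0^2+q_1$ with $q_1\in S^{1,-\m}$.

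The crux is to verify scattering-ellipticity of $q$ on $\O_{\e,R}$: for $R$ large enough (depending on $\e$ and $m_0$),
\[
|q(x,\x)|\geq c\jap{\x}^2\quad\text{on}\ \O_{\e,R}
\]
for some $c>0$. Since $p-p_0\in S^{2,-\m}$ and $q_1\in S^{1,-\m}$ contribute $O(\jap{x}^{-\m}\jap{\x}^2)$, choosing $R$ large reduces the matter to the analogous lower bound for $p_0+m_0^2$. Here I would split into three regimes: for $|\x|\geq 1$ the hypothesis gives $|p_0+m_0^2|\geq (\e/2)\jap{\x}^2$ directly; for $|\x|\leq \d$ with $\d\ll m_0$ one has $|p_0(\x)|\leq \d^2$ so $|p_0+m_0^2|\geq m_0^2/2\geq (m_0^2/4)\jap{\x}^2$; and for the bounded intermediate range $\d\leq |\x|\leq 1$ the hypothesis yields $|p_0+m_0^2|\geq \e\d^2$, comparable to $\jap{\x}^2$ on that compact set. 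This is the one step requiring a little care, since the condition $|p_0+m_0^2|\geq \e|\x|^2$ looks like it degenerates at $\x=0$; the positive mass term $m_0^2$ is what saves ellipticity in that regime.

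Once scattering-ellipticity is secured, the parametrix construction is routine. Pick $a\in S^{0,0}$ supported in $\O_{\e,R}$ and elliptic on the target (slightly shrunken) subregion where we wish to conclude $u\in H^{k+2,l}$. Set $b_0=a/q\in S^{-2,0}$ and iterate: by Lemma \ref{prelimlem} $(ii)$ we have $b_0\# q = a+r_1$ with $r_1\in S^{-1,-1}$ still supported in the elliptic region, and each subsequent correction $b_j=-r_j/q\in S^{-2-j,-j}$ gains one order in both $\x$ and $x$ in the remainder $r_{j+1}\in S^{-1-j,-1-j}$. Borel summation produces $b\in S^{-2,0}$ with $b\sim\sum_j b_j$ and
\[
\Op(b)(P+m_0^2) = \Op(a) + R_\infty,\qquad R_\infty\in \Op S^{-\infty,-\infty}.
\]

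To conclude, apply both sides to $u$, getting $\Op(a)u = \Op(b)(P+m_0^2)u - R_\infty u$. Since $\Op(b)\in\Op S^{-2,0}$ maps $H^{k,l}\to H^{k+2,l}$ and $(P+m_0^2)u\in H^{k,l}$ by hypothesis, the first term lies in $H^{k+2,l}$. As $u\in\mathcal{S}'(\re^{n+1})$ and $R_\infty$ is smoothing in both $x$ and $\x$, the second term is Schwartz, hence in $H^{k+2,l}$. Thus $\Op(a)u\in H^{k+2,l}$, which by the paper's definition is precisely the desired microlocal regularity. The only genuinely thoughtful step is the case analysis in the ellipticity verification; the rest is standard scattering-parametrix machinery.
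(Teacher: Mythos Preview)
Your proposal is correct and is precisely the ``standard parametrix construction'' the paper invokes without detail; the ellipticity verification (including the case analysis near $\x=0$ using the positive mass) and the iterated symbol-calculus parametrix are exactly what is needed. There is nothing to add.
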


\subsection{Absence of resonances}

By the proof of \cite[Proposition 7]{V}, we have the following proposition.

\begin{prop}\label{abres}
Let $z\in\mathbb{C}_{+}\setminus \{0\}$. Suppose that a distribution $u\in \mathcal{S}'(\re^{n+1})$ satisfies $(P-z)u=0$ and $u\in \mathcal{S}(\re^{n+1})$ microlocally away from the outgoing region $\O_{r,R}(1)$ for some $R\geq 1$ large enough and $r>0$ small enough. Then we have $u\in \mathcal{S}(\re^{n+1})$.
\end{prop}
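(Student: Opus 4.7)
The plan is to upgrade the microlocal Schwartz hypothesis away from the outgoing sink $\O_{r,R}(1)$ to global Schwartz regularity in two stages: first produce full microlocal regularity at the outgoing set with decay only slightly weaker than $L^2$, then bootstrap the decay to arbitrary order. Throughout, the vanishing inhomogeneity $(P-z)u=0$ makes the right-hand sides in all microlocal estimates trivial.

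For the first stage I would invoke the radial sink estimate, Theorem \ref{propathm}$(iii)$, with arbitrary $k\in\re$ and any $l<-1/2$. The hypotheses are routine under our assumptions: $u\in\mathcal{S}'$ gives $u\in H^{-N,-N}$; the outgoing set lies at spatial infinity and $u$ is Schwartz microlocally on its complement, so $u\in H^k_{\mathrm{loc}}$; ellipticity of $P-z$ at low frequency for $z\ne 0$ combined with $(P-z)u=0\in\mathcal{S}$ yields microlocal Schwartz regularity on $\{|\x|\le 3r\}$ via a standard elliptic parametrix; and the hypothesis directly supplies the $H^{k,l}$ bound on $\O_{r,R}(1-\e)$. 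The inhomogeneous condition $(P-z)u\in H^{k+1,l-1}$ is vacuous. Output: $u\in\bigcap_{k\in\re,\,\d>0}H^{k,-1/2-\d}(\re^{n+1})$ globally. For the second stage with $\Im z>0$, I would iterate Lemma \ref{strange}: since $(P-z)u=0\in H^{k,l}$ for every $k,l$, the hypothesis $u\in H^{k+1/2,l-1/2}$ forces $u\in H^{k,l}$. Starting from the output of the first stage one iteration yields $u\in\bigcap_k H^{k,l}$ for $l<0$, and each subsequent iteration gains $1/2$ unit of decay while preserving arbitrary regularity, producing $u\in\mathcal{S}$ after countably many steps.

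The main obstacle is the real-$z$ case $z\in\re\setminus\{0\}$, where Lemma \ref{strange} fails because the $\Im z\|\L u\|^2$ term used to extract the coercive bound vanishes. Following \cite[Proposition 7]{V}, I would replace the subelliptic gain by a positive commutator identity at the outgoing radial sink tailored to weight $l>-1/2$. For $z$ real and $P$ self-adjoint, $(P-z)u=0$ collapses the commutator identity $(\ref{KDcomeq})$ to $(u,[P,A\L^*\L A]u)=0$ for any admissible $A,\L$; choosing a weight adapted to the $l>-1/2$ regime (the dual configuration to that of Lemma \ref{weighes}$(iii)$ used in Theorem \ref{propathm}$(iii)$) produces a commutator with the sign needed so that the sharp G\aa rding inequality, together with the regularity already established in the first stage and the Schwartz microlocal hypothesis away from the sink (which controls all the lower-order remainder terms from $B_1,B_2,E$), yields improved decay at the sink. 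Iterating as in the complex case then gives $u\in\mathcal{S}$. This is the step that genuinely uses the hypothesis that $\O_{r,R}(1)$ is the \emph{unique} obstruction to Schwartz regularity, and its care is what sets the argument apart from a mechanical application of the radial estimates.
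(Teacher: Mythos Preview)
Your proposal is correct and is precisely the argument the paper leaves implicit behind its citation of \cite[Proposition~7]{V}: sink estimate to reach $H^{\infty,-1/2-0}$ globally, then cross the $l=-1/2$ threshold at the outgoing set. The one point worth flagging is the case split. Vasy's positive-commutator argument at the sink with weight $l>-1/2$ genuinely relies on $z$ being real, since the identity \eqref{KDcomeq} then collapses to $(u,[P,iA\L^*\L A]u)=0$; for $\Im z>0$ the surviving term $\Im z\,\|\L Au\|_{L^2}^2$ sits on the wrong side and the estimate does not close in that form. Your recourse to Lemma~\ref{strange} for $\Im z>0$ is therefore not just convenient but necessary if one wants to stay within the tools of the paper, and it is a clean use of that lemma. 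One cosmetic remark: for the real-$z$ step you will need a \emph{lower} bound on $H_p a^2$ at the sink rather than the upper bound recorded in Lemma~\ref{cutoffes}$(iii)$, but this follows from the same computation once you note that the $\rho_{out}'$ and $\bar\chi_R'$ contributions are nonnegative on $\supp a$ and supported where $u$ is already Schwartz.
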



\begin{thebibliography}{99}

\bibitem{Ch} H. Chihara, Smoothing effects of dispersive pseudodifferential equations. Comm. Partial Differential Equations, \textbf{27}, (2002), 1953-2005.


\bibitem{DW} D. N. Dang, M. Wrochna, Complex powers of the wave operator and the spectral action on Lorentzian scattering spaces, preprint, (2020), arXiv:2012.00712.

\bibitem{DeZ1} J. Derezi\'nski, D. Siemssen, Feynman propagators on static spacetimes, Rev. Math. Phys. 30, (2018), 1850006.

\bibitem{DeZ2} J. Derezi\'nski, D. Siemssen, An evolution equation approach to the Klein-Gordon operator on curved spacetime, Pure Appl. Anal. 1, 215--261, (2019).

\bibitem{DH} J. Duistermaat, L. H\"ormander, Fourier integral operators. II. Acta Math, \textbf{128}, (1972), 183--269.

\bibitem{DZ} S. Dyatlov, M. Zworski, Mathematical theory of scattering resonances,  Graduate Studies in Mathematics 200, AMS 2019.


\bibitem{GHV} J. Gell-Redman, N. Haber, A. Vasy, The Feynman propagator on perturbations of Minkowski space, Commun. Math. Phys. \textbf{342}, (2016), 333--384.

\bibitem{G} C. G\'erard, A proof of the abstract limiting absorption principle by energy estimates. J. Functional Analysis 254 (2008), 2707--2724.


\bibitem{GW0} C. G\'erard, M. Wrochna, Hadamard property of the in and out states for Klein-Gordon
fields on asymptotically static spacetimes, Ann. Henri Poincar\'e 18 (2017), 2715--2756.

\bibitem{GW1} C. G\'erard, M. Wrochna, The massive Feynman propagator on asymptotically Minkowski spacetimes, Amer. J. Math. \textbf{141}, (2019), 1501--1546.


\bibitem{GW2} C. G\'erard, M. Wrochna, The massive Feynman propagator on asymptotically Minkowski spacetimes II, Int. Math. Res. Notices. \textbf{2020}, (2020), 6856--6870.

\bibitem{Ho} L. H\"ormander,  Analysis of Linear Partial Differential Operators, Vol.\ I-IV.  Springer Verlag, 1983--1985. 


\bibitem{NT} S. Nakamura, K. Taira, Essential self-adjointness of real principal type operators, to appear in Ann. Henri Lebesgue, arXiv:1912.05711.



\bibitem{Me} R. Melrose, Spectral and scattering theory for the Laplacian on asymptotically Euclidian spaces. Marcel Dekker. \textbf{161}, (1994), 85--130. 



\bibitem{Mo} E. Mourre, Absence of singular continuous spectrum for certain selfadjoint operators. Comm. Math. Phys. \textbf{78}, (1980/81), 391--408.




\bibitem{TT} Y. Tadano, K. Taira, Uniform bounds of discrete Birman-Schwinger operators. Trans. Amer. Math. Soc. \textbf{372}, (2019), 5243--5262.





\bibitem{V} A. Vasy, Essential self-adjointness of the wave operator and the limiting absorption principle on Lorentzian scattering spaces, J. Spectr. Theory. \textbf{10}, (2020), 439--461.

\bibitem{Z} M. Zworski, {\it Semiclassical analysis. Graduate Studies in Mathematics}, \textbf{138}, American Mathematical Society, Providence, RI, (2012).


\end{thebibliography}
\end{document}